\documentclass[reqno]{amsart}

\usepackage{amssymb}
\usepackage{graphicx}
\usepackage{enumerate}

\usepackage{color}

\usepackage{slashed,enumerate}

\usepackage[utf8]{inputenc}

\title{Localization for transversally periodic random potentials on binary trees}

\author{Richard Froese}
\author{Darrick Lee}
\author{Christian Sadel}
\author{Wolfgang Spitzer}
\author{G\"unter Stolz}

\address[Froese]{Department of Mathematics, University of British Columbia, 1984 Mathematics Road, Vancouver, BC, \mbox{V6T\,1Z2}, Canada}
\email{rfroese@math.ubc.ca}
\address[Lee]{Department of Mathematics, University of British Columbia, 1984 Mathematics Road, Vancouver, BC, \mbox{V6T\,1Z2}, Canada}
\email{darrick.l@hotmail.com}
\address[Sadel]{Department of Mathematics, University of British Columbia, 1984 Mathematics Road, Vancouver, BC, \mbox{V6T\,1Z2}, Canada; 
Institute of Science and Technology, Am Campus 1, 3400 Klosterneuburg, Austria}
\email{christian.sadel@ist.ac.at}
\address[Spitzer]{Fakult\"at f\"ur Mathematik und Informatik, Fernuniversit\"at Hagen, Universit\"atsstras\ss e 1, 58097 Hagen, Germany}
\email{wolfgang.spitzer@fernuni-hagen.de}

\address[Stolz]{Department of Mathematics, University of Alabama at Birmingham, Birmingham, AL 35294-1170, USA} \email{stolz@uab.edu}

\subjclass[2010]{82B44}

\keywords{random Schr\"odinger operator, localization, Bethe lattice}

\newtheorem{theorem}{Theorem}
\newtheorem{lemma}[theorem]{Lemma}
\newtheorem{coro}[theorem]{Corollary}
\newtheorem{prop}[theorem]{Proposition}
\newtheorem{remark}[theorem]{Remark} 
\newtheorem{defini}[theorem]{Definition}

\newtheorem{assumptions}{Assumptions}

\theoremstyle{remark}
\newtheorem*{rem}{Remark}



\newcommand{\Aa}{{\mathcal A}}

\newcommand{\Hh}{{\mathcal H}}



\newcommand{\BB}{{\mathbb B}}
\newcommand{\CC}{{\mathbb C}}

\newcommand{\EE}{{\mathbb E}}

\newcommand{\HH}{{\mathbb H}}

\newcommand{\NN}{{\mathbb N}}

\newcommand{\PP}{{\mathbb P}}

\newcommand{\RR}{{\mathbb R}}

\newcommand{\TT}{{\mathbb T}}

\newcommand{\ZZ}{{\mathbb Z}}




\newcommand{\one}{{\bf 1}}
\newcommand{\nul}{{\bf 0}}


\newcommand{\qtx}[1]{\quad\text{#1}\quad}



\newcommand{\pmat}[1]{\begin{pmatrix} #1  \end{pmatrix}}


\DeclareMathOperator{\re}{{\rm Re}}

\newcommand{\wlim}{\mathop{\hbox{\rm w-lim}}}

\newcommand{\be}[1]{\begin{equation} \label{#1} }
\newcommand{\ee}{\end{equation}}
\newcommand{\beq}{\begin{equation}}
\newcommand{\rf}[1]{\eqref{#1}}

\renewcommand\Im{\mathop{\rm Im}}
\renewcommand\Re{\mathop{\rm Re}}

\newcommand\arcosh{\mathop{\rm arcosh}}

\numberwithin{equation}{section}
\numberwithin{theorem}{section}

 \textheight=22cm
\topmargin=-1cm
\textwidth=15.3cm
\oddsidemargin -0.1cm
\evensidemargin -0.1cm

\begin{document}
 
\begin{abstract}
We consider a random Schrödinger operator on the binary tree with a random potential which is the sum of a random radially symmetric potential, $Q_r$, and a random transversally periodic potential, $\kappa Q_t$, with coupling constant $\kappa$. Using a new one-dimensional dynamical systems approach combined with Jensen's inequality in hyperbolic space (our key estimate) we obtain a fractional moment estimate proving localization for small and large $\kappa$. Together with a previous result we therefore obtain a model with two Anderson transitions, from localization to delocalization and back to localization, when increasing $\kappa$. As a by-product we also have a partially new proof of one-dimensional Anderson localization at any disorder.
\end{abstract}
 
 \maketitle

\section{Introduction and statement of results}

In this work we consider discrete Schr\"odinger operators of the form $H=-\Delta + Q$ acting in $\ell^2(\BB)$, where $\BB$ is a rooted regular tree (or Bethe Lattice), $\Delta$ is the adjacency operator and $Q$ is a bounded random potential. 

For the Anderson model, where the values of $Q$ are independent and identically distributed, the spectrum of $H$ may have an absolutely continuous component. In fact,
if the tree has connectivity $k+1$ and the single site distribution has support $[-\kappa,\kappa]$,
then $\sigma(H)=[-2\sqrt{k}-\kappa, 2\sqrt{k}+\kappa]$ almost surely, and if $k\ge 2$ it is known from the work of Klein \cite{K} and Aizenman-Warzel \cite{AW1,AW2} that for small $\kappa$ and suitable single site distribution, the spectrum is purely absolutely continuous almost surely inside $[-2\sqrt{k}, 2\sqrt{k}]$ and near the endpoints $\pm (2\sqrt{k}+\kappa)$.

In fact, for the Anderson model on graphs (adjacency operator plus random i.i.d. potential) the existence of absolutely continuous spectrum at low disorder has only been shown for trees and tree-like graphs\footnote{e.g. adding some loops to trees or taking cross products of trees with finite graphs} of infinite dimension with exponentially growing boundary. A lot of work has been done in extending Klein's original result, also in recent years, \cite{ASW, AW1, AW2, FHS1, FHS1b, FHS2, FHH, Hal, KLW, KlS, Sa, Sh}.
At large disorder and on the edge of the spectrum one typically finds Anderson localization (pure point spectrum) in any dimension \cite{A, AM, CKM, DK, DLS, FMSS, FS, Klo, W}. 
Pure point spectrum for small disorder typically appears in one and quasi one-dimensional models \cite{CKM, GMP, KLS, Lac}, unless localization is prevented by some symmetry\footnote{such symmetries appear in effective models for topological insulators} \cite{SS}.

Now, if $Q$ is a radial potential, where the common potential values for each level are independent and identically distributed, then $H$ can be decomposed as a direct sum of one-dimensional Anderson Hamiltonians. So in this case there is localization at all non-zero values of $\kappa$, and $\sigma(H)=\sigma_{\rm{pp}}(H)=[-2\sqrt{k}-\kappa, 2\sqrt{k}+\kappa]$ almost surely. 

In this paper, we consider the class of transversally periodic potentials. These potentials were introduced in \cite{FHS1} and \cite{FHS2} to illustrate how fast transversal oscillations can generate absolutely continuous spectrum in a strongly correlated model. However, as this class includes radial potentials as a special case, absolutely continuous spectrum need not be present. We show that the pure point spectrum of a radial potential is stable under small transversally periodic perturbations that destroy the radial symmetry. For simpler reasons, we can also show that the spectrum is pure point for large transversally periodic perturbations. Using results of \cite{FHS2} we can therefore construct an example where there are two 
Anderson transitions when increasing the disorder of an added, independent non-radial, transversally periodic potential.

Now let us consider the geometry of the problem more precisely.
The graph distance on $\BB$ will be denoted by $d(x,y)$ for vertices $x,y\in \BB$.
The root will be denoted by $0$. Although we believe our methods can handle any $k$, we set $k=2$ in this paper for the sake of simplicity. Thus the root has two neighbors and any other vertex has three neighbors.
The $n$th sphere for $n\in\NN_0$ is denoted by $S_n:=\{x\in \BB\,:\,d(0,x)=n\}$.

\begin{center}
 \includegraphics[width=5cm]{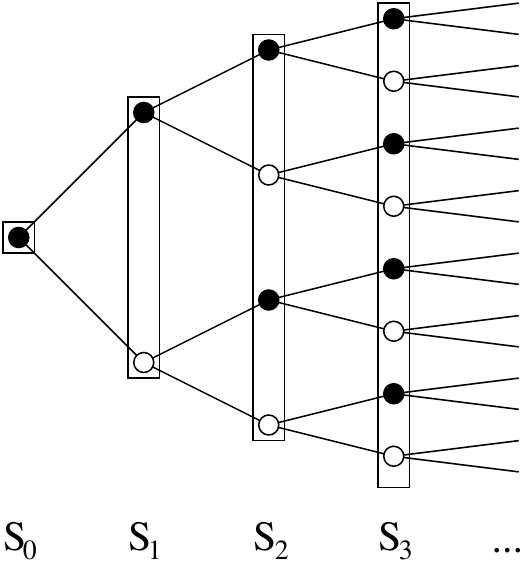}
\end{center}

For $\kappa\ge 0$, we define the operator $H_\kappa$ on $\ell^2(\BB)$ by
\be{model}
H_{\kappa} = -\Delta + Q_r + \kappa Q_t\qtx{i.e.}
(H_\kappa \psi)(x) = -\!\!\!\!\!\!\!\sum_{y:d(x,y)=1}\!\!\!\!\! \psi(y) + Q_r(x)\psi(x) + \kappa Q_t(x) \psi(x)\;.
\ee
Here, $Q_r:\BB\to\RR$ is a radial potential, i.e., for any $x\in S_n$ one has $Q_r(x)=q_n$. The $q_n$ are chosen independently, according to a distribution $\nu_0$ for $n=0$ and identically according to $\nu$ for $n\ge 1$. 
The transversally periodic random potential $Q_t:\BB\to \RR$, coupled with the coupling constant $\kappa$, is independent of $Q_r$ and defined as follows. We choose $Q_t(0)=0$ and for each level or sphere in the tree  except the first, there are two potential values chosen at random. These values are then repeated periodically across the level. Thus on the diagram, for each level, all the black vertices have a common value, as do all the white ones. Each pair of potentials is chosen independently from a common joint distribution $\sigma$. 

The purpose of this work is to show localization for small and large transversally periodic disorder $\kappa$ for any disorder in the radial potential. Our main results are for small $\kappa$. For these we will need the following assumptions.

\begin{assumptions}\label{assumptions1} The radial measures $\nu_0$ and $\nu$ have bounded densities (denoted also by $\nu_0$ and $\nu$) with support in $[-K,K]\subset\RR$ for some $K<\infty$. The transversal measure $\sigma$ has support in $[-1,1]^2\subset\RR^2$.
\end{assumptions}  

\begin{theorem}\label{smallkappathm} Let $H_{\kappa} = -\Delta + Q_r + \kappa Q_t$ where the distributions for the radial potential $Q_r$ and the transversally periodic potential $Q_t$ satisfy Assumptions \ref{assumptions1}. Then there exists a $\kappa_0>0$ such that for $0\le \kappa \le \kappa_0$, $H_{\kappa}$ has pure point spectrum at all energies almost surely.
\end{theorem}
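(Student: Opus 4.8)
The plan is to prove localization via the fractional moment method (FMM), which requires establishing that for some $s\in(0,1)$ the expectation $\EE\big[|G_\kappa(x,y;E+i0)|^s\big]$ decays exponentially in $d(x,y)$, uniformly in $E$. Once such an estimate is in hand, the passage to pure point spectrum is standard (Aizenman--Molchanov-type arguments, adapted to trees as in \cite{AM, AW1}): exponential decay of fractional moments of the Green's function implies, via the Simon--Wolff criterion or directly via eigenfunction-decay arguments, that the spectrum is almost surely pure point everywhere. So the heart of the matter is the Green's function bound, and the strategy is a perturbative one around $\kappa=0$, where the operator decomposes into one-dimensional Anderson Hamiltonians for which localization (hence exponential decay of fractional moments) is already available.

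The key technical device, as the abstract advertises, is a one-dimensional dynamical-systems reformulation. Because $Q_t$ takes only two values per sphere repeated periodically, the tree Green's function can be encoded through a forward recursion (a Riccati-type or transfer-matrix iteration) living on a low-dimensional space: rather than tracking a function on all of $S_n$, one tracks the pair of ``black'' and ``white'' boundary values, so the relevant object is a random dynamical system on (a product of) copies of the hyperbolic plane $\HH$, the latter arising as the natural geometry for ratios of solutions / Herglotz functions. I would first set up this recursion explicitly: write the finite-volume Green's function via the standard tree recursion for $\Gamma$-functions (self-energies), observe that the transversal periodicity collapses it to a two-component system, and identify the action of a single step as a (random) M\"obius-type map on $\HH\times\HH$ (or on $\HH$ after a further symmetry reduction). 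The point of working in $\HH$ is that the step maps are non-expansive in the hyperbolic metric, with strict contraction coming from the imaginary part of the energy / the off-diagonal coupling, and this is exactly the regime where Jensen's inequality in hyperbolic space becomes available.

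Next I would prove the key estimate: a hyperbolic Jensen inequality showing that the expectation (over the randomness at one level) of an appropriate convex functional of the iterate contracts. Concretely, one chooses a function on $\HH$ that is comparable to a fractional power of the Green's function, shows it is (geodesically) convex or subharmonic in the relevant sense, and uses Jensen together with the contraction of the step map to get a strict decrease per level; iterating gives exponential decay. For $\kappa=0$ this reduces to the one-dimensional statement, so part of the work — and this is flagged as a by-product in the abstract — is reproving 1D Anderson localization in this language (finiteness and strict negativity of the per-step exponent for all disorder, using that $\nu$ has a bounded density, via the continuity/regularity input from the Wegner-type smoothness of $\nu$). Then a continuity/perturbation argument in $\kappa$: the per-step contraction constant depends continuously on $\kappa$ (uniformly in $E$ over the relevant compact energy window, using that all potentials are bounded so the spectrum is compact), and since it is $<1$ strictly at $\kappa=0$, it remains $<1$ for $0\le\kappa\le\kappa_0$. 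That yields the fractional moment bound for small $\kappa$, and hence the theorem.

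The main obstacle I expect is the hyperbolic Jensen step combined with getting \emph{uniformity in the energy} $E$, including at the band edges and at $E=0$ (where the unperturbed 1D operators can have anomalous behavior). The difficulty is twofold: first, one must identify the correct convex functional on $\HH\times\HH$ — it has to dominate $|G|^s$, be geodesically convex so Jensen applies, and interact well with the M\"obius step maps — and checking this convexity is the crux; second, one must ensure the contraction estimate does not degenerate as $\Im E\downarrow 0$, which is where the randomness (the bounded density of $\nu$, giving a genuine averaging/smoothing at each level) has to be exploited quantitatively rather than the trivial contraction from $\Im E>0$. The radial part $Q_r$ helps here since it supplies i.i.d.\ randomness at every level, but one must still control how the two-component (black/white) structure couples the two hyperbolic factors and verify the estimate survives that coupling uniformly.
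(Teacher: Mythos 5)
Your high-level strategy matches the paper's: reduce to a fractional moment bound, reformulate the Green-function recursion as a one-dimensional dynamical system $T_{E,s}$, prove the per-step decrease via a geodesically convex functional on $\HH$ and a hyperbolic Jensen inequality, and then propagate the $\kappa=0$ estimate to small $\kappa$ by strong continuity of $T_{\kappa,E,s}$ in $\kappa$ (uniformly in $E$ by compactness of the spectrum). Those parts of your outline are accurate in spirit, even if the paper's key estimate is organized a bit differently than a ``contraction of the step map'': the operator $T_{E,s}$ is not a contraction on $\HH$, but rather one finds a bounding function $\varphi_\zeta(w)^s=|w-\zeta|^{-s}$ with $|\zeta|>1$ that $T_{E,s}$ strictly decreases, and the role of Jensen in $\HH$ is to verify that the relevant logarithmic moment, the $s\to 0$ derivative of the fractional-moment ratio, is uniformly negative for a well-chosen $\zeta$.

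However, there is a genuine gap in your passage from fractional moments to pure point spectrum. You write that ``exponential decay of fractional moments of the Green's function implies, via the Simon--Wolff criterion or directly via eigenfunction-decay arguments, that the spectrum is almost surely pure point everywhere,'' treating this as standard. But the dynamical-systems estimate proved here (Theorem~\ref{mainthm}) only gives decay of $\EE\|P_0(H_\kappa-z)^{-1}P_n\|^s$, i.e.\ from the root outward — \emph{not} a uniform bound $\EE\|P_m(H_\kappa-z)^{-1}P_n\|^s\le C\ell^{-|m-n|}$ between arbitrary spheres, and the paper explicitly flags this as an open problem. With only root-to-sphere decay, Simon--Wolff yields pure point spectrum \emph{at the root}, not everywhere, and the standard Aizenman--Molchanov/eigenfunction-correlator route is unavailable. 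The paper closes this gap with a separate, nontrivial deterministic argument (Lemma~\ref{pp induction}): applying the root result to the restriction $H_{\kappa,x}$ on every forward subtree $\BB_x$ — exploiting that the marginal distributions on subtrees still satisfy Assumptions~\ref{assumptions1} — and then showing by an induction over levels, using the identity $\rho_{\psi_y}=\rho_{\psi_x}+\rho_{\psi_{x'}}$ for $\psi$ in the continuous subspace, that pure point spectrum at every such subtree-root forces the continuous subspace of $H_\kappa$ to be trivial. Your proposal is missing this step entirely, and without it the argument does not yield pure point spectrum at all energies from the estimate you actually have access to.
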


We will prove localization via the fractional moment estimate of Aizenman and Molchanov \cite{A,AM,H,Sto} and the Simon-Wolff criterion \cite{SW}. The required version of the fractional moment estimate, contained in the following theorem, is our main technical result. Its proof introduces a new dynamical systems approach.

\begin{remark}\label{rm:1D}
 Note that Theorem~\ref{smallkappathm} specifically includes the case $\kappa=0$, this means that we also give an alternative proof for localization for random radial potentials.
 This part works for any connectivity $k+1$, even for the case $k=1$, giving an alternative proof for Anderson localization in one dimension (see also Remark~\ref{rm:1D-proof} stating the 1D proof).
\end{remark}

We use the notation $|x\rangle\langle y|$ for the rank one operator $\delta_x\otimes\delta_y^*$, and define the projections $P_B := \sum_{x\in B}|x\rangle\langle x|$ for $B\subseteq \BB$ and $P_n := P_{S_n}$.

\begin{theorem}\label{mainthm} Let $H_{\kappa} = -\Delta + Q_r + \kappa Q_t$ where the distributions for the radial potential $Q_r$ and the transversally periodic potential $Q_t$ satisfy Assumptions \ref{assumptions1}. Given any energy $E_0\in\RR$, there exist an open interval $I\subset \RR$ containing $E_0$, constants $s\in (0,1)$, $\epsilon_0>0$, $\kappa_0>0$, $C<\infty$ and $\ell>1$ such that for all $\kappa\in [0,\kappa_0]$ and $n \ge 0$,
\begin{equation}\label{fracmoment}
\sup_{E\in I}\sup_{0<\epsilon<\epsilon_0}\EE\left[\|P_0 (H_\kappa-E-i\epsilon)^{-1} P_n\|^s\right] \le C \ell^{-n}.
\end{equation}
\end{theorem}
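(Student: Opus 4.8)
The plan is to turn the fractional moment into the analysis of a pair of coupled one‑dimensional random recursions, and to control that analysis by a contraction estimate for an associated transfer operator — ``Jensen's inequality in hyperbolic space'' being the device that produces the contraction.

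\textbf{Reduction to a one‑dimensional dynamical system.} On the tree the off‑diagonal resolvent factorises along geodesics. For $z=E+i\epsilon$, $\epsilon>0$, let $\Gamma_x\in\HH$ denote the diagonal element at $x$ of the resolvent of $H_\kappa$ restricted to the forward subtree $\BB_x$; then $\Gamma_y^{-1}=Q_r(y)+\kappa Q_t(y)-z-\sum_{y'\text{ child of }y}\Gamma_{y'}$, and along the geodesic $0=x_0,\dots,x_n=x$ one has $|\langle\delta_0,R_\kappa(z)\delta_x\rangle|=|\Gamma^{(0)}|\prod_{j=1}^{n}|\Gamma_{x_j}|$, with $\Gamma^{(0)}:=\langle\delta_0,R_\kappa(z)\delta_0\rangle$ and $R_\kappa(z)=(H_\kappa-z)^{-1}$. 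Because $Q_r$ is radial and $Q_t$ transversally periodic, the \emph{decorated} forward subtree at a vertex of the sphere $S_m$ depends only on $m$ and on whether the vertex is a black or a white child of its parent; hence $\Gamma_y$ takes only two values on $S_m$, call them $\Gamma^{(m)}_\bullet,\Gamma^{(m)}_\circ\in\HH$, satisfying the backward recursion
\[
\Gamma^{(m)}_\bullet=\frac{1}{q_m+\kappa a_m-z-\Gamma^{(m+1)}_\bullet-\Gamma^{(m+1)}_\circ},\qquad
\Gamma^{(m)}_\circ=\frac{1}{q_m+\kappa b_m-z-\Gamma^{(m+1)}_\bullet-\Gamma^{(m+1)}_\circ},
\]
with $(a_m,b_m)$ the two transversal values on $S_m$. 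Squaring and summing over $x\in S_n$ collapses the sum over the $2^n$ geodesics into a product,
\[
\|P_0R_\kappa(z)P_n\|^2=|\Gamma^{(0)}|^2\prod_{j=1}^{n}\bigl(|\Gamma^{(j)}_\bullet|^2+|\Gamma^{(j)}_\circ|^2\bigr),
\]
so $\|P_0R_\kappa(z)P_n\|^s=|\Gamma^{(0)}|^s\prod_{j=1}^n\Xi(\Gamma^{(j)}_\bullet,\Gamma^{(j)}_\circ)$ with $\Xi(\gamma,\gamma')=(|\gamma|^2+|\gamma'|^2)^{s/2}$. It is important here not to bound the sphere sum by $2^n$ times a single term: a union bound over $S_n$ would require single‑site Green's‑function decay faster than $2^{-n}$, which fails for the small $s$ that the a priori bound forces, whereas keeping the product intact only asks, per level, to beat the factor $2^{s/2}$ created by $(\,\cdot+\cdot\,)^{s/2}$.

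\textbf{Transfer operator and weighting.} Integrating out $q_0$ against its bounded density and using $\sup_{w\in\CC}\int|q-w|^{-s}\nu_0(q)\,dq<\infty$ for $s\in(0,1)$ removes the boundary factor $|\Gamma^{(0)}|^s$, so it remains to prove $\EE\bigl[\prod_{j=1}^n\Xi(\Gamma^{(j)}_\bullet,\Gamma^{(j)}_\circ)\bigr]\le C\ell^{-n}$. Let $\Ff_m$ be the $\sigma$‑algebra generated by $(q_j,a_j,b_j)_{j\ge m}$; then $(\Gamma^{(m)}_\bullet,\Gamma^{(m)}_\circ)$ is $\Ff_m$‑measurable and, conditionally on $\Ff_{m+1}$, a deterministic function of $(q_m,a_m,b_m)$, which is independent of $\Ff_{m+1}$. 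Introduce, on positive functions on $\HH\times\HH$,
\[
(\Ll_\kappa\Phi)(\gamma,\gamma')=\EE_{q,a,b}\!\left[\Xi(\tilde\gamma,\tilde\gamma')\,\Phi(\tilde\gamma,\tilde\gamma')\right],\qquad
\tilde\gamma=\frac{1}{q+\kappa a-z-\gamma-\gamma'},\quad\tilde\gamma'=\frac{1}{q+\kappa b-z-\gamma-\gamma'}.
\]
If one can produce $\ell>1$, $\epsilon_0,\kappa_0>0$, an open interval $I\ni E_0$, and a weight $\Phi$ bounded below by a positive constant with $\EE[\Phi(\Gamma^{(1)}_\bullet,\Gamma^{(1)}_\circ)]$ finite and bounded uniformly in $\epsilon\in(0,\epsilon_0)$, such that $\Ll_\kappa\Phi\le\ell^{-1}\Phi$ pointwise — uniformly for $E\in I$, $0<\epsilon<\epsilon_0$, $\kappa\in[0,\kappa_0]$ — then a downward induction on $m$ from $n+1$ to $1$, using the conditional independence above, gives $\EE[\Phi(\Gamma^{(1)}_\bullet,\Gamma^{(1)}_\circ)\prod_{j=1}^n\Xi(\Gamma^{(j)}_\bullet,\Gamma^{(j)}_\circ)]\le\ell^{-n}\,\EE[\Phi(\Gamma^{(n+1)}_\bullet,\Gamma^{(n+1)}_\circ)]$. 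Bounding $\Phi$ from below and using stationarity of the recursion ($(\Gamma^{(m)}_\bullet,\Gamma^{(m)}_\circ)\stackrel{d}{=}(\Gamma^{(1)}_\bullet,\Gamma^{(1)}_\circ)$ for all $m\ge1$) then yields the claim, and with the reduction above, \eqref{fracmoment}; the uniform finiteness of $\EE[\Phi(\Gamma^{(1)}_\bullet,\Gamma^{(1)}_\circ)]$ is an a priori estimate from the bounded‑density hypothesis (immediate if $\Phi$ is also bounded above).

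\textbf{The key estimate — and the main obstacle.} What remains, and what I expect to be hardest, is to build $\Phi$ and prove $\Ll_\kappa\Phi\le\ell^{-1}\Phi$. This is where hyperbolic geometry enters. At $\epsilon=0$ the map $\gamma\mapsto1/(v-\gamma)$ with $v\in\RR$ is an orientation‑preserving isometry of the hyperbolic upper half‑plane $\HH$ (it lies in $\mathrm{PSL}(2,\RR)$), and the radial randomness $q$ moves its image along a horocycle based at the boundary point $0$, the multiplicative weight $\Xi$ being, in the horocyclic coordinate, essentially an exponential of the associated Busemann function. A well‑chosen $\Phi$ — a power of $\cosh$ of a hyperbolic distance to a suitable base point (equivalently, built from quantities of the form $(1+|\gamma|^2)/\Im\gamma$), possibly averaged over a family of base points — then transforms controllably under the recursion, and Jensen's inequality applied in $\HH$ to the $q$‑average of this geodesically convex object produces the strict gain $\Ll_0\Phi\le\ell^{-1}\Phi$. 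Two points make this delicate. First, the gain has to be genuine in the \emph{coupled} two‑variable operator; in the collapsed radial case $\kappa=0$, where $\Gamma^{(m)}_\bullet=\Gamma^{(m)}_\circ=:\Gamma^{(m)}$, it reduces to the one‑variable contraction $\EE_q[\,|\tilde\gamma|^s\phi(\tilde\gamma)\,]\le\theta\,\phi(\gamma)$ with $\tilde\gamma=1/(q-z-2\gamma)$ and $\theta<2^{-s/2}$ — precisely one‑dimensional fractional‑moment localization for the Jacobi matrix with hopping $\sqrt2$, which is the by‑product mentioned in Remark~\ref{rm:1D}, and the gain $\theta<2^{-s/2}$ must beat the branching factor, not merely be $<1$. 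Second, all constants must be uniform down to $\epsilon=0$ and over an energy window: for $\epsilon>0$ the recursion maps are hyperbolic contractions and one argues by comparison with the $\epsilon=0$ isometry picture, while the interval $I$ is obtained from openness (in $E$) of the strict inequality.

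\textbf{Perturbation in $\kappa$.} Finally, the two denominators in the recursion differ only by $\kappa(a_m-b_m)$ with $|a_m-b_m|\le2$, so $\Gamma^{(m)}_\bullet$ and $\Gamma^{(m)}_\circ$ are within hyperbolic distance $O(\kappa)$ of each other off a small‑denominator event that the a priori bound controls; hence $\Ll_\kappa$ is a small perturbation of $\Ll_0$ precomposed with the diagonal embedding $\HH\hookrightarrow\HH\times\HH$. Because the contraction $\Ll_0\Phi\le\ell_0^{-1}\Phi$ is strict, it is robust under this perturbation: for a slightly smaller $\ell>1$ it survives for all $\kappa\in[0,\kappa_0]$ once $\kappa_0$ is small enough, which completes the proof of \eqref{fracmoment}. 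Specialised to $\kappa=0$ and to arbitrary connectivity $k+1$ (in particular $k=1$), the same scheme yields the alternative proof of one‑dimensional Anderson localization announced in Remark~\ref{rm:1D}.
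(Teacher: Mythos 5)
Your outline --- a transfer operator contracting a weight, with hyperbolic convexity and Jensen supplying the contraction, and a continuity argument to pass from $\kappa=0$ to small $\kappa>0$ --- is the paper's strategy, and the observation that the sphere sum must be kept as a product rather than union-bounded over $2^n$ geodesics is correctly made. Nevertheless there are two genuine gaps. First, you leave the dynamical system two-dimensional: $\Ll_\kappa$ acts on functions of $(\gamma,\gamma')\in\HH\times\HH$ and you would have to find a contracting weight on a state space in which the off-diagonal coordinate has no intrinsic dynamics. Your own recursion already shows that $\Gamma^{(m)}_\bullet,\Gamma^{(m)}_\circ$ depend only on the sum $\Gamma^{(m+1)}_\bullet+\Gamma^{(m+1)}_\circ$; the paper exploits exactly this through the orthogonal rotation $g_{n\pm}=(g_{n0}\pm g_{n1})/\sqrt2$, which leaves $|g_{n0}|^2+|g_{n1}|^2=|g_{n+}|^2+|g_{n-}|^2$ invariant while yielding $g_{n\pm}=\phi^{\pm}_{z,q_n}(g_{(n+1)+})$: both new values are functions of the \emph{single} variable $g_{(n+1)+}$, so $|\phi^-|^s$ enters only as a weight and the iteration is genuinely one-dimensional on $\overline{\CC^+}$ (this is the ``one-dimensional dynamical systems approach'' advertised in the abstract). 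Working without this reduction makes the search for a good weight far murkier.

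Second --- and you rightly flag it as the hardest step --- the contraction estimate itself is not proved, and the objects you gesture at (powers of $\cosh$ of a hyperbolic distance to an interior base point, Busemann functions in a horocyclic coordinate) are not what actually does the work. The paper's weight is $\varphi_\zeta(w)^s=|w-\zeta|^{-s}$ with $w\in\overline\RR$ on the boundary and $\zeta\in\HH$, and the ratio $F_\zeta(w,s,E)=\varphi_\zeta^{-s}(T_{E,s}\varphi_\zeta^s)(w)$ satisfies $F_\zeta(w,0,E)=1$ identically; the contraction therefore comes from making $\partial_s F_\zeta(w,0,E_0)$ uniformly strictly negative, which is a \emph{logarithmic-moment} condition, not a fixed-$s$ contraction by $\theta<2^{-s/2}$. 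Establishing this (Theorem~\ref{keyestthm}) requires a nontrivial existence argument for a base point $\zeta$ with $|\zeta|>1$ --- via $d^2$-barycenters in $\HH$ and an index/winding argument --- followed by geodesic convexity of $f(z)=\log(|z|^2/\Im z)$ and Jensen's inequality in nonpositive curvature. Beyond that, passing from a single real energy with $\epsilon=0$, $\kappa=0$ to the uniform statement of the theorem needs the continuity estimates for $T^m$ in $\epsilon$, $E$ and $\kappa$ and the subharmonicity step that reduces $\sup_{\overline{\CC^+}}$ to $\sup_{\overline\RR}$; none of these appear in your sketch. So the skeleton matches, but the one-variable reduction and the logarithmic-moment key estimate --- the substance of the proof --- are missing.
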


For the large $\kappa$ result we need the following assumptions:

\begin{assumptions}\label{assumptions2} In addition to Assumptions \ref {assumptions1}, assume that the marginal measures $\sigma_0$ and $\sigma_1$ of $\sigma$ each have bounded densities.
\end{assumptions}

\begin{theorem}\label{largekappathm} Let $H_{\kappa} = -\Delta + Q_r + \kappa Q_t$ where the distributions for the radial potential $Q_r$ and the transversally periodic potential $Q_t$ satisfy Assumptions \ref{assumptions2}. Then there exists a $\kappa_1<\infty$ such that for $\kappa \ge \kappa_1$, $H_{\kappa}$ has pure point spectrum at all energies almost surely.
\end{theorem}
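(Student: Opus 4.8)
The plan is to reduce the large-$\kappa$ regime to a fractional moment bound of the same type as \rf{fracmoment}, again via the Aizenman–Molchanov method and the Simon–Wolff criterion. The key structural observation is that for large $\kappa$ the transversally periodic potential $\kappa Q_t$ dominates $-\Delta + Q_r$, so the operator is a small perturbation (after dividing by $\kappa$) of a diagonal operator whose entries are i.i.d.\ across the two ``colors'' on each level (except level $0$). Concretely, I would write $H_\kappa = \kappa(Q_t + \kappa^{-1}(-\Delta + Q_r))$ and study the resolvent of $Q_t + \kappa^{-1}(-\Delta + Q_r)$ at energies $E/\kappa$ with $E \in I$, where $I$ is a bounded interval around the fixed $E_0$ (boundedness of $I$ is harmless since the whole spectrum is contained in a fixed compact set once $\kappa$ ranges over a bounded interval $[\kappa_1,\kappa_2]$; for $\kappa \to \infty$ one rescales). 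As $\kappa \to \infty$ the off-diagonal part becomes negligible and the relevant Green's function factorizes along the geometry of the tree.

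The main steps, in order. First, I would establish an \emph{a priori} fractional moment bound on the diagonal Green's function: since on each level the two potential values have bounded densities (Assumptions \ref{assumptions2}, the extra hypothesis on the marginals $\sigma_0,\sigma_1$ of $\sigma$), the standard rank-one / Wegner-type estimate gives $\EE[\,|\langle x| (H_\kappa - E - i\epsilon)^{-1}|x\rangle|^s\,] \le C_s$ uniformly in $\epsilon>0$, $E\in I$, $x\in\BB$, and $\kappa$ large, for any $s\in(0,1)$. This is the place the bounded-density assumption on the marginals is used — without it one cannot decouple the diagonal entry from the rest. Second, I would run the standard depth-expansion (the ``self-avoiding walk'' or factorization across spheres) for $\|P_0(H_\kappa-E-i\epsilon)^{-1}P_n\|^s$: expanding the resolvent along the unique geodesic from the root to $S_n$ produces a product of $n$ factors, each of the form (off-diagonal hopping) $\times$ (a diagonal Green's function of a subtree Hamiltonian), and each hopping term carries a factor $1$ in the original normalization but is suppressed because the corresponding resolvent factor is $O(1/\kappa)$: quantitatively, $\|P_x (H_\kappa - z)^{-1} P_x\| \le (\kappa\, \mathrm{dist}(E/\kappa,\sigma(Q_t^{(x)})) - \|\Delta\| - K)^{-1}$ away from a bad set, which is $O(1/\kappa)$. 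Raising to the $s$-th power and taking expectations with the a priori bound of Step~1 to control the (rare) resonant sites, one gets a geometric factor $(C/\kappa)^{s}$ per step, hence $\EE[\|P_0(H_\kappa-z)^{-1}P_n\|^s] \le (C/\kappa)^{sn}$, which is $\le C\ell^{-n}$ with $\ell>1$ once $\kappa \ge \kappa_1$ is large enough. Third, having \rf{fracmoment} with a $\kappa$-independent form on the interval $I$ around the \emph{arbitrary} $E_0$, I would cover $\sigma(H_\kappa) \subseteq [-2\sqrt2 - K - \kappa, 2\sqrt2 + K + \kappa]$ (a compact set, for $\kappa$ in any bounded range; for the genuinely-$\kappa\to\infty$ statement the rescaled spectrum lives in a fixed compact set) by finitely many such intervals, conclude exponential decay of fractional moments of the Green's function throughout the spectrum, and invoke the Aizenman–Molchanov criterion together with Simon–Wolff (exactly as will be done for Theorem~\ref{smallkappathm}) to deduce pure point spectrum almost surely at all energies.

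The main obstacle I anticipate is \emph{not} the large-$\kappa$ smallness of the hopping — that is genuinely easy, which is why the authors describe this case as holding ``for simpler reasons'' — but rather handling the resonant vertices, i.e.\ those $x$ where $E/\kappa$ is close to $Q_t(x)$, so that the naive bound $\|P_x(H_\kappa-z)^{-1}P_x\| = O(1/\kappa)$ fails for that one factor. Because the two colors on a level are repeated periodically, a single unlucky draw of a level's pair of values creates resonances along an entire sphere simultaneously, so one cannot treat different vertices on the same level as independent. The fix is to absorb each such resonant factor using the Step~1 fractional moment bound (which is finite precisely because of the bounded densities of $\sigma_0,\sigma_1$) while using the non-resonant $O(1/\kappa)$ factors from the remaining $n-1$ steps to still beat the geometric rate; making the bookkeeping of how many resonant steps can occur along a geodesic precise — and checking that the expectation factorizes appropriately over the independent levels — is the technical heart of the argument. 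A secondary, purely cosmetic point is that one must phrase the estimate so the interval $I$ and the rate $\ell$ can be taken uniform as $E_0$ ranges over the (non-compact, as $\kappa\to\infty$) spectrum; this is handled by the rescaling $H_\kappa \mapsto \kappa^{-1}H_\kappa$ mentioned above, after which the relevant energies lie in a fixed compact set and a finite cover suffices.
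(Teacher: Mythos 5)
Your approach is in the right spirit — large $\kappa$ is an Aizenman–Molchanov large-disorder regime, and you correctly identify that the extra hypothesis in Assumptions~\ref{assumptions2} (bounded densities for the marginals $\sigma_0,\sigma_1$) is what makes the diagonal a priori fractional-moment estimate go through. But you are substantially over-engineering the proof relative to what the paper does, and in one place you are solving a problem that does not arise. The paper simply reuses the dynamical-system reduction of Section~\ref{sectionreduction}: by Proposition~\ref{prop:reduction to FM}, $\EE\big[\|P_0(H_\kappa-z)^{-1}P_n\|^s\big]\le C\,\|T_{\kappa,z,s}^n 1\|_\infty$, so it suffices to show $\sup_{E\in\RR}\|T_{\kappa,E,s}\|_\infty<1$ for $\kappa\ge\kappa_1$ (Lemma~\ref{large coupling}). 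This is a one-line estimate: in $\EE\big[|w-r-\kappa p_i|^{-s}\big]$ one does the $p_i$-integral first against the bounded density $\sigma_i$, which produces the factor $\kappa^{-s}\cdot\frac{2}{1-s}$ uniformly in $w,r,E$, and then the compactly supported $r$-integral is harmless. One then repeats the proof of Theorem~\ref{mainthm} with $m=1$; since this bound is already uniform in $E\in\RR$, the rescaling $H_\kappa\mapsto\kappa^{-1}H_\kappa$ and the compactness/covering step you propose are unnecessary. Your resonant/non-resonant split is also not needed: the fractional power $s<1$ together with integration against the bounded density handles the near-singularities automatically — that is precisely what the Aizenman–Molchanov trick buys, and it is baked into the estimate above — so there is no separate "bookkeeping of how many resonant steps can occur." Finally, the correlation issue you worry about (all black vertices on a sphere sharing one potential value) is exactly what the change of variables $g_{n\pm}=(g_{n0}\pm g_{n1})/\sqrt2$ and the factorization \eqref{prod repr} in Section~\ref{sectionreduction} already resolve; by building on that machinery, the paper avoids re-deriving the depth expansion from scratch, which is the bulk of the "technical heart" you were anticipating.
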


It follows from the results of \cite{FHS2} that there are potentials of the form $Q_r + Q_t$ with an interval of absolutely continuous spectrum. Starting with such a potential and introducing a coupling constant $\kappa$, we obtain the following corollary.
\begin{coro}\label{doubletransition} There exist a random radial potential $Q_r$ and a random transversally periodic potential $Q_t$ such that $H_\kappa=-\Delta + Q_r + \kappa Q_t$ has pure point spectrum for large and small $\kappa$ and an interval of purely absolutely continuous spectrum for some intermediate values of $\kappa$. 
\end{coro}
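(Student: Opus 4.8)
The plan is to deduce this from Theorems \ref{smallkappathm} and \ref{largekappathm} together with the absolutely continuous spectrum results of \cite{FHS2}; the only real work is to arrange that the absolutely continuous regime produced in \cite{FHS2} sits strictly between the small-$\kappa$ and the large-$\kappa$ localization regimes. So the starting point is to quote from \cite{FHS2} the existence of a random radial potential $Q_r$ and a random transversally periodic potential $Q_t$, together with a coupling value $\kappa_\ast>0$, for which $H_{\kappa_\ast}=-\Delta+Q_r+\kappa_\ast Q_t$ has an interval $\Lambda$ of purely absolutely continuous spectrum almost surely.

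The point that needs care is that the distributions $\nu_0,\nu$ of the radial potential and the distribution $\sigma$ of the transversal potential can be chosen so as to satisfy Assumptions \ref{assumptions2} --- in particular so that $\nu_0,\nu$ and the marginals $\sigma_0,\sigma_1$ all have bounded densities, and $\sigma$ is supported in $[-1,1]^2$. Since the absolutely continuous spectrum in \cite{FHS2} is obtained from an \emph{open} contraction estimate for the one-step transfer matrices acting on hyperbolic space, it is stable under sufficiently small changes of the underlying single-site distributions: if \cite{FHS2} uses more singular (for instance finitely supported) radial or transversal distributions, one first replaces them by nearby absolutely continuous ones with bounded densities, and likewise adds, if necessary, a small random radial perturbation, without destroying the absolutely continuous spectrum. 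I expect this verification --- reading off the precise hypotheses of the relevant theorem in \cite{FHS2} and checking that the perturbations involved are admissible --- to be the main, and essentially the only, obstacle.

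With such $Q_r$ and $Q_t$ now fixed, Theorem \ref{smallkappathm} supplies a $\kappa_0>0$ such that $H_\kappa$ has pure point spectrum at all energies for $0\le\kappa\le\kappa_0$, and Theorem \ref{largekappathm} supplies a $\kappa_1<\infty$ such that $H_\kappa$ has pure point spectrum at all energies for $\kappa\ge\kappa_1$. Because $H_{\kappa_\ast}$ has purely absolutely continuous spectrum on the interval $\Lambda$, it cannot have pure point spectrum at all energies, so $\kappa_\ast\notin[0,\kappa_0]\cup[\kappa_1,\infty)$; hence $\kappa_0<\kappa_\ast<\kappa_1$. Consequently $H_\kappa$ has pure point spectrum for all $\kappa\in[0,\kappa_0]$ and for all $\kappa\in[\kappa_1,\infty)$, while for $\kappa=\kappa_\ast$ --- and, by the openness of the estimate of \cite{FHS2}, for all $\kappa$ in a neighbourhood of $\kappa_\ast$, which since $\kappa_0<\kappa_\ast<\kappa_1$ consists of intermediate values --- it has an interval of purely absolutely continuous spectrum. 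This is precisely the double Anderson transition asserted in the corollary.
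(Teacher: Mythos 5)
Your proposal is correct and follows essentially the same route as the paper, which in fact gives no formal proof of the corollary at all: it simply remarks before the statement that ``it follows from the results of \cite{FHS2} that there are potentials of the form $Q_r+Q_t$ with an interval of absolutely continuous spectrum'' and that ``starting with such a potential and introducing a coupling constant $\kappa$, we obtain the following corollary.'' Your write-up makes explicit the two points that remark takes for granted: (1) that the distributions $\nu_0,\nu,\sigma$ underlying the example from \cite{FHS2} can be arranged to satisfy Assumptions~\ref{assumptions2}, which you rightly flag as the only real verification to do and for which the openness of the contraction estimate in \cite{FHS2} is the correct stability mechanism; and (2) the little logical step that, since $H_{\kappa_\ast}$ cannot have pure point spectrum at all energies, $\kappa_\ast$ is forced to lie strictly between $\kappa_0$ and $\kappa_1$, so it genuinely qualifies as ``intermediate.'' Note that the corollary as stated only requires a.c.\ spectrum for \emph{some} intermediate value of $\kappa$, so the final passage of your argument extending to a whole $\kappa$-neighbourhood of $\kappa_\ast$ by openness is a welcome strengthening but not strictly needed.
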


Here is an outline of our paper. In Section \ref{sectionproof} we show how Theorem \ref{smallkappathm} follows from the fractional moment estimate in Theorem \ref{mainthm}. In Section \ref{sectionreduction} we express the fractional moment estimate in terms of a dynamical system. We prove continuity at $\kappa=0$ and reduce considerations to a one-dimensional system
(cf. Remark~\ref{rm:1D}). Section~\ref{sectionkappazero} contains the proof for this one-dimensional system, modulo a key estimate. As will be explained in a remark, this also provides an alternative proof of one-dimensional (dynamical) localization, first established by Kunz and Soulliard under the assumptions used here (\cite{KS}, see also \cite{S}). The key estimate is proved in Section \ref{sectionkey}. Finally, Section~\ref{sectionproof2} contains the proofs of Theorem~\ref{mainthm} and Theorem~\ref{largekappathm}. 

In the one-dimensional setting, our dynamical systems approach is based on the analysis of the operator $T_{E,s}:C(\overline{\RR})\rightarrow C(\overline{\RR})$ defined by
\beq \label{1Ddynsyst}
(T_{E,s}f)(w) = \mathbb E_q\left[f(-1/(w+E-q))\,|w+E-q|^{-s}\right].
\ee
The fractional moment estimate for energies near $E$, which implies localization, follows from 
the estimate
$\|T_{E,s}^m 1\|_\infty$ $ < 1$ for some $m\in\mathbb N$. The observation that this is true for $m=1$ if the distribution for the random potential $q$ is sufficiently spread out is essentially the Aizenman-Molchanov proof of large disorder localization. In order to obtain this estimate for more general distributions, we show that there exist special bounding functions which decrease point-wise under an application of $T$. Any positive function lying below such a bounding function is then forced to zero under repeated applications of $T$. This forms the basis of our proof.

We conclude the introduction with some open problems.
\begin{enumerate}[(i)]
\item It would be interesting to know the spectral properties of $ -\Delta + Q_r + \kappa Q_A$ for small $\kappa$, where $Q_A$ is an Anderson (independent, identically distributed) potential on $\BB$. More generally, one might ask if Theorem~\ref{smallkappathm} can be generalized to say that $-\Delta + Q_r + Q$ has pure point spectrum for every deterministic perturbation $Q$ of $-\Delta +Q_r$ with sufficiently small $L^{\infty}$-norm. The dynamical systems approach which we apply below to the case of transversally periodic potentials does not work for any of these settings.

\item The Simon-Wolff argument we use implies spectral localization. Dynamical localization in the form $\EE[\sup_{t\in \RR} \|P_m e^{-itH} P_n\|] \le C \ell^{-|m-n|}$ would follow from an estimate $\EE[\|P_m (H-z)^{-1}P_n\|^s]\le C \ell^{-|m-n|}$ (by modifying the proof in \cite{Sto}). However, we are not able to prove this unless $m=0$ or $n=0$.

\item For the intermediate values of $\kappa$, where the potentials in Corollary \ref{doubletransition} exhibit an interval of absolutely continuous spectrum, it would be interesting to know whether there is band edge localization.  
\end{enumerate}

\section{Proof of Theorem \ref{smallkappathm}}\label{sectionproof}

Recall that the random potential $Q = Q_r +\kappa Q_t$ is defined by choosing a value at the root according to $\nu_0$ and a pair of values for each sphere $S_n$ with $n\ge1$. The radial component of the pair is chosen according to the push-forward $D_*\nu$, where $D:\RR\to\RR^2$ defined as $D(x) := (x,x)$ is the diagonal map and $\nu$ is the radial measure for spheres $S_n$ with $n\ge1$. The perturbation is chosen according to the scaled distribution $\sigma_\kappa$ defined by $\sigma_\kappa(A):=\sigma(A/\kappa)$ for measurable sets $A\subseteq\RR^2$. The distribution for the sum is the convolution $\mu :=( D_*\nu)\star\sigma_\kappa$, so that integration on $\RR^2$ with respect to $\mu$ is characterized as
\beq
\int_{\RR^2} g(q)\,d\mu(q) = \int_{[-1,1]^2} \int_{\RR} g(r+\kappa p_0, r+\kappa p_1)\,\nu(r)\,dr\,d\sigma(p_0,p_1)
\ee
for suitable functions $g$.

Our measure space is $\Omega:=\RR\otimes\bigotimes_{n\ge1} \RR^2$ with the product measure $\mathbb P:=\nu_0\otimes\bigotimes_{n\ge1} \mu$. We define $Q$ by repeating the pair of values for each $n\ge1$ periodically across $S_n$. The distribution of the resulting random variable $\Omega\ni\omega\mapsto Q(\omega)$ defines a measure on $\BB^\RR$. When restricting to forward 
trees, we introduce the measures $\mathbb P_0 := \mu_0\otimes\bigotimes_{n\ge1} \mu$ and $\mathbb P_1 := \mu_1\otimes\bigotimes_{n\ge1} \mu$ on $\RR\otimes\bigotimes_{n\ge1} \RR^2$, where $\mu_0$ and $\mu_1$ denote the two marginal probability measures of $\mu$ on $\RR$, given for suitable $f$ by
\beq
\int_{\RR} f(q_i)\, d\mu_i(q_i) = \int_{\RR^2} f(q_i)\,d\mu(q_0,q_1), \quad i=0,1,
\ee
and which take on the role of the measure $\nu_0$ at the new root. One easily sees that  $\mu_i$, $i=0,1$, are absolutely continuous with densities
\beq \label{margdensities}
\int_{\RR} \nu(q_i-\kappa p_i)\,d \sigma_i(p_i),
\ee
as functions of $q_i$, $i=0,1$. Here $\sigma_i$ denote the marginal measures of $\sigma$. Thus the densities (\ref{margdensities}) are bounded by $\|\nu\|_{\infty}$ and have compact support in $[-K-\kappa, K+\kappa]$.

For $x\in\BB$, let $\pi_x:\Omega\to\Omega$ denote the map
\beq
\pi_x\big(q_0;(q_{10},q_{11});\ldots;(q_{d(x)0},q_{d(x)1});\ldots\big) := \big(q_{d(x)c(x)};(q_{(d(x)+1)0},q_{(d(x)+1)1});\ldots\big),
\ee
where $d(x):=d(x,0)$ and $c(x) = 0$ (respectively $1$) if $x$ is a black (respectively white) vertex. Then $(\pi_x)_*\mathbb P = \mathbb P_{c(x)}$ by the definition of marginals. Notice that if $A$ corresponds to a set of potentials on a subtree $\BB_x$ with $c(x) = i$ and $\mathbb P_i(A) = 1$, then $\pi_x^{-1}(A)$ corresponds to potentials on the original tree whose restrictions to $\BB_x$ lie in $A$, and $\mathbb P\big[\pi_x^{-1}(A)\big] = \big((\pi_x)_*\mathbb P\big)(A) = 1$.

\medskip
We prove Theorem \ref{smallkappathm} in two steps. Firstly, using Theorem \ref{mainthm} and the Simon-Wolff criterion we show that $H_\kappa$ has pure point spectrum at the root $0$ of the full tree $\BB$. This can be applied to the restriction of the Hamiltonian $H_\kappa$ to the forward tree $\BB_x$ defined as $H_{\kappa,x} := P_{x} H_\kappa P_{x}$. Here $P_x := P_{\BB_x} = \sum_{y\in\BB_x} |y\rangle\langle y|$ is the projection onto the forward tree and therefore $H_{\kappa,x}$ acts in $\ell^2(\BB_x)$. Then for any $x$ we see that also $H_{\kappa,x}$ has pure point spectrum at the root $x$ of the forward tree $\BB_x$. Secondly, assuming that we already know that these restrictions have pure point spectrum at their respective roots $x$ we prove pure point spectrum for $H_\kappa$ on the full Hilbert space $\ell^2(\BB)$.

Before we state the lemmas, let us introduce the spectral measures of the restricted Hamiltonian of an operator $H$ acting in $\ell^2(\BB)$. That is, suppose $\psi\in\ell^2(\BB)$. Then setting $\psi_x := P_x \psi\in\ell^2(\BB_x)$ we define the measure
\begin{equation}\label{eq-mu}
d\rho_{\psi_x} (E) := \wlim_{\epsilon\downarrow0} \frac1\pi \Im \langle \psi_x, (H_x-E-i\epsilon)^{-1} \psi_x\rangle\, dE,
\end{equation}
where $\mu=\wlim_{\epsilon\downarrow0} \mu_\epsilon$ is understood in the weak (or rather weak$^*$) sense, that is, $\int f(E)\, d\mu(E) = \lim_{\epsilon\downarrow0}\int f(E) \,d\mu_\epsilon(E)$ for all continuous, bounded functions $f$ on $\RR$.

We identify $\delta_x\in\ell^2(\BB)$ with $\delta_x\in \ell^2(\BB_x)$, and we say that an operator $H$ has pure point spectrum at $x$ if $\rho_{\delta_x}$ is a pure point measure.

Our first step is contained in the following lemma:
\begin{lemma} \label{pp at root} Under Assumptions \ref{assumptions1} there exists a $\kappa_0>0$ such that for $\kappa\in[0,\kappa_0]$ the operator $H_\kappa = -\Delta + Q_r + \kappa Q_{t}$ has almost surely pure point spectrum at the root $0$ of the full tree $\BB$.
\end{lemma}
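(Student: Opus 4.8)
The plan is to derive Lemma~\ref{pp at root} from the fractional moment bound of Theorem~\ref{mainthm} via the Simon--Wolff criterion \cite{SW}. Recall that Simon--Wolff says: for a fixed self-adjoint $H$ and cyclic vector $\delta_0$, if for (Lebesgue-)a.e.\ energy $E$ in an interval $I$ one has $\sum_{x\in\BB}|\langle\delta_x,(H-E-i0)^{-1}\delta_0\rangle|^2<\infty$, then the spectral measure $\rho_{\delta_0}$ restricted to $I$ is pure point. So the task reduces to showing that, almost surely, this boundary-value $\ell^2$-sum is finite for a.e.\ $E$ in a neighbourhood of any given $E_0$; since $\RR$ is covered by countably many such neighbourhoods, pure point spectrum at all energies follows.

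First I would fix $E_0\in\RR$ and take the interval $I$, the exponents $s\in(0,1)$, and the constants $\epsilon_0,\kappa_0,C,\ell$ produced by Theorem~\ref{mainthm}, and organize the sum over $\BB$ by spheres: $\sum_{x\in\BB}|G(x,0;E+i\epsilon)|^2 = \sum_{n\ge0}\sum_{x\in S_n}|G(x,0;E+i\epsilon)|^2$. On the binary tree each sphere $S_n$ has $2^n$ vertices (for $n\ge1$), and by radial-type symmetry of the resolvent kernel's modulus along the transversal direction — or more simply by the operator-norm bound — one has $\sum_{x\in S_n}|G(x,0;E+i\epsilon)|^2 \le \|P_0(H_\kappa-E-i\epsilon)^{-1}P_n\|^2 \cdot (\text{number of nonzero terms})$; in any case $\|P_0(H_\kappa-E-i\epsilon)^{-1}P_n\|$ controls each individual matrix element, and one should bound the sphere sum by a fixed multiple of $2^n\|P_0(H_\kappa-E-i\epsilon)^{-1}P_n\|^2$. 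Since $s\in(0,1)$, raising to the power $s/2$ and using $\|P_0(\cdots)P_n\|\le 1/\epsilon$ crudely does not help with the $n$-sum, so instead I would work directly with the $s$-th moment: by Fatou/monotone convergence it suffices to bound, for a.e.\ $E\in I$ and uniformly in small $\epsilon$, the expectation of a truncated sum, then invoke Theorem~\ref{mainthm}. Concretely, integrate \eqref{fracmoment} in $E$ over $I$ and sum a geometric-type series: choosing $\ell>1$ from the theorem and noting $2^n\ell^{-n}\cdot(\text{correction})$ — here one must be careful, as $\ell$ from Theorem~\ref{mainthm} may not exceed the growth rate of the sphere; however $\|P_0(H-E-i\epsilon)^{-1}P_n\|^2$ together with the observation that $G(x,0)$ for $x\in S_n$ equals $G(0,0)$ along the path times a product of off-diagonal contractions means $\sum_{x\in S_n}|G(x,0)|^2$ is itself bounded by $\|P_0(H-E-i\epsilon)^{-1}P_n\|^2$ without the $2^n$ factor, because the $2^n$ branches carry the $\ell^2$-mass that the single norm already accounts for. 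Thus $\EE\big[\sum_{x\in S_n}|G(x,0;E+i\epsilon)|^{2}\big]$ is controlled after passing to the $s$-th moment and Jensen, giving $\sum_n C\ell^{-n}<\infty$ uniformly.

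The cleanest route avoiding the combinatorial subtlety is: apply Theorem~\ref{mainthm} to the forward-tree restrictions, use that $\|P_0(H_\kappa - E - i\epsilon)^{-1}P_n\|$ already incorporates the full sphere via the operator norm, raise \eqref{fracmoment} appropriately, integrate over $E\in I$, and sum in $n$ using $\ell>1$ to get $\int_I \EE\big[\big(\sum_{x}|G(x,0;E+i\epsilon)|^2\big)^{s/2}\big]\,dE \le C'<\infty$ uniformly in $\epsilon\in(0,\epsilon_0)$. Then Fatou in $\epsilon\downarrow0$ gives $\int_I \EE\big[\big(\sum_x |G(x,0;E+i0)|^2\big)^{s/2}\big]\,dE<\infty$, so by Fubini, for $\PP$-a.e.\ $\omega$ the inner integral over $E$ is finite, hence $\sum_x|G(x,0;E+i0)|^2<\infty$ for a.e.\ $E\in I$. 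Simon--Wolff then yields that $\rho_{\delta_0}\!\restriction_I$ is pure point almost surely, and covering $\RR$ by countably many such intervals $I$ (around a countable dense set of $E_0$'s, with overlaps) gives pure point spectrum at the root $0$ at all energies, almost surely.

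The main obstacle is the bookkeeping in the second step: making sure the exponential decay rate $\ell^{-n}$ from Theorem~\ref{mainthm} genuinely dominates after one accounts for the sphere sizes $2^n$ and the passage from an operator norm $\|P_0(\cdots)P_n\|$ to the pointwise $\ell^2$-sum $\sum_{x\in S_n}|G(x,0)|^2$. The resolution is that the operator norm $\|P_0(H_\kappa-E-i\epsilon)^{-1}P_n\|$ equals (up to the rank-one structure at the root) the $\ell^2(S_n)$-norm of the single function $x\mapsto G(0,x)$, so $\sum_{x\in S_n}|G(0,x)|^2 = \|P_0(H_\kappa-E-i\epsilon)^{-1}P_n\|^2$ exactly; no extra $2^n$ appears, and the $s$-th moment estimate plus $\sum_n\ell^{-n}<\infty$ closes the argument. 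A secondary point is justifying the interchange of $\wlim_{\epsilon\downarrow0}$ with the sum and expectation, which is handled by Fatou's lemma applied to the nonnegative quantities involved.
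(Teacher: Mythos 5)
Your overall strategy matches the paper's: combine the fractional moment bound of Theorem~\ref{mainthm} with the identity $\sum_{x\in S_n}|G(0,x;z)|^2 = \|P_0(H_\kappa-z)^{-1}P_n\|^2$, raise to the power $s/2$ using $(\sum|a_n|)^s\le\sum|a_n|^s$, sum the geometric series, integrate over $E\in I$, apply Fatou and Fubini, and invoke Simon--Wolff. The spurious $2^n$ worry you raise and then retract is indeed a non-issue, for exactly the reason you give in your final paragraph.

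However, there are two genuine gaps.

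\textbf{First}, your patch of $\mathbb R$ by countably many intervals $I$ does not yield a single $\kappa_0>0$. Each application of Theorem~\ref{mainthm} produces its own $\kappa_0(I)$, and the infimum of countably many positive numbers can be zero, in which case the lemma says nothing. The paper avoids this by noting that $\sigma(H_\kappa)$ lies in a fixed compact set $\mathcal S=[-2\sqrt 2-K-1,\,2\sqrt 2+K+1]$ for all $\kappa\le 1$; only energies in $\mathcal S$ can contribute to the spectral measure, so one covers $\mathcal S$ by \emph{finitely} many of the intervals $I$ (by compactness) and takes $\kappa_0$ to be the minimum of the finitely many corresponding $\kappa_0(I)$'s. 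Without this compactness step, your argument does not produce the claimed uniform $\kappa_0$.

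\textbf{Second}, you have misstated the Simon--Wolff criterion. The hypothesis $\lim_{\epsilon\downarrow0}\sum_x|G(0,x;E+i\epsilon)|^2<\infty$ for Lebesgue-a.e.\ $E\in I$ does \emph{not} directly imply that $\rho_{\delta_0}\!\restriction_I$ is pure point for the given operator; it could still have singular continuous spectrum. What Simon--Wolff (\cite[Theorem~8]{SW}) gives is that the rank-one perturbed family $H_{\kappa,\omega}(\alpha):=H_{\kappa,\omega}+\alpha P_0$ has pure point spectrum in $I$ at the root for Lebesgue-a.e.\ $\alpha$. One then converts ``a.e.\ $\alpha$'' into ``a.s.\ $\omega$'' by observing that varying the potential value $q_0$ at the root is precisely such a rank-one perturbation, and that $q_0$ is distributed according to the absolutely continuous measure $\nu_0$. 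This translation step, which is where Assumptions~\ref{assumptions1} on $\nu_0$ are actually used, is missing from your write-up.
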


\begin{proof} We begin by proving locally (in the energy) that $H_\kappa$ has pure point spectrum. To this end, let $E_0\in \mathcal S := [-2\sqrt{2}-K-1,2\sqrt{2}+K+1]$, where $\mathcal S$ is chosen to contain the spectrum of $H_\kappa$ for all $\kappa\le1$. Let $I\ni E_0,s,\epsilon_0,\kappa_0,C$, and $\ell$ be given as in Theorem \ref{mainthm}. Let $G_{\kappa,\omega}(0,x;E+i\epsilon) := \langle\delta_0,(H_{\kappa,\omega}-E-i\epsilon)^{-1} \delta_x\rangle$ denote the Green function of $H_{\kappa,\omega}$. Then, using Fatou's Lemma in the first step we arrive at the following estimate for $E\in I$,
\begin{eqnarray*}
\lefteqn{\EE\Big[\Big(\lim_{\epsilon\downarrow0} \sum_{x\in\BB} |G_{\kappa,\omega}(0,x;E+i\epsilon)|^2\Big)^{s/2}\Big]}
\\
&\le&\liminf_{\epsilon\downarrow0} \EE \Big[\sum_{x\in\BB} |G_{\kappa,\omega}(0,x;E+i\epsilon)|^2\Big)^{s/2}\Big]
\\
&=&\liminf_{\epsilon\downarrow0} \EE \Big[\Big(\sum_{n=0}^\infty\sum_{x\in S_n} |G_{\kappa,\omega}(0,x;E+i\epsilon)|^2\Big)^{s/2}\Big]
\\
&=&\liminf_{\epsilon\downarrow0} \EE \Big[\Big(\sum_{n=0}^\infty \|P_0(H_{\kappa,\omega}-E-i\epsilon)^{-1})P_n\|^2\Big)^{s/2}\Big]
\\
&\le&\liminf_{\epsilon\downarrow0} \sum_{n=0}^\infty \EE \big[\|P_0(H_{\kappa,\omega}-E-i\epsilon)^{-1})P_n\|^s\big]
\\
&\le& \liminf_{\epsilon\downarrow0} \sum_{n=0}^\infty C \ell^{-n}\;=\;C \Big(1-\frac1\ell\Big)^{-1},
\end{eqnarray*}
uniformly on the interval $I$. In the last inequality we have used our main technical result, Theorem~\ref{mainthm}, as well as the standard bound $(\sum_n |a_n|)^s \le \sum_n |a_n|^s$ for $s \in (0,1)$ (which will be used many more times below). This implies
\begin{equation*} 
\EE \Big[\int_I \Big(\lim_{\epsilon\downarrow0} \sum_{x\in\BB} |G_{\kappa,\omega}(0,x;E+i\epsilon)|^2\Big)^{s/2}\,dE\Big]
\;=\int_I\EE\Big[\Big(\lim_{\epsilon\downarrow0} \sum_{x\in\BB} |G_{\kappa,\omega}(0,x;E+i\epsilon)|^2\Big)^{s/2}\,dE\Big] \;\le \;C I
\end{equation*}
for some constant $C$. This shows that for $\PP-$almost all $\omega\in\Omega$ 
\begin{equation*}
\int_I \Big(\lim_{\epsilon\downarrow0} \sum_{x\in\BB} |G_{\kappa,\omega}(0,x;E+i\epsilon)|^2\Big)^{s/2}\, dE < \infty.
\end{equation*}
So for $\PP$-almost all $\omega\in\Omega$, there is a set of full Lebesgue measure in $I$ such that for $E$ in this set,
\beq \label{Simon-Wolff}
\lim_{\epsilon\downarrow0}\sum_{x\in\BB} |G_{\kappa,\omega}(0,x;E+i\epsilon)|^2 < \infty.
\ee
This is the Simon-Wolff criterion \cite[Theorem 8]{SW} which implies that $H_{\kappa,\omega}(\alpha) := H_{\kappa,\omega} + \alpha P_0$ has pure point spectrum in $I$ at the root for Lebesgue-almost all $\alpha\in\RR$. From this it is standard to show that $H_\kappa$ has almost surely pure point spectrum in $I$ at the root since the probability measure, $\nu_0$, at the root is purely absolutely continuous. 

The local result gives an open interval $I$ containing $E_0$ for every $E_0\in\mathcal S$ with corresponding coupling constant $\kappa_0(I)$. The intervals cover the set $\mathcal S$ and by compactness we can choose a finite sub-cover. We finish the proof by taking $\kappa_0$ to be the minimum of the corresponding values of $\kappa_0(I)$. 
\end{proof}

The next lemma allows to determine the spectral type of $H$ by finding the spectral type of $H_x$ at the root $x$ on arbitrary forward trees $\BB_x$.

\begin{lemma}\label{pp induction}
Let $V:\BB \to \RR$ be a potential such that the operator $H=-\Delta+V$ is self-adjoint on $\ell^2(\BB)$. Furthermore, let $H_x$ be the restriction of $H$ to the forward tree $\BB_x$ with root $x$.
\begin{enumerate}[{\rm (i)}]
 \item If for all $x \in \BB$ the spectral measures $\rho_{\delta_x}$ of $H_x$ are pure point then the spectrum of $H$ is pure point.
 \item More generally, if for all $x\in\BB$ the spectral measures $\rho_{\delta_x}$ of $H_x$ are of the same measure type (i.e. all pure point, purely singular, purely continuous, purely a.c., purely s.c., have no s.c. component), then the whole spectrum of $H$ is of this type.
\end{enumerate}
\end{lemma}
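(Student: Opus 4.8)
The plan is to reduce part (i) to part (ii) (pure point is one of the listed measure types), so I focus on (ii), and the key mechanism is a rank-one/finite-rank perturbation argument relating $H$ restricted to a forward tree to the restrictions at its children. First I would set up the basic geometric decomposition: fix $x\in\BB$ with children $x_1,x_2$ (two children, since $k=2$); then $\BB_x = \{x\}\sqcup\BB_{x_1}\sqcup\BB_{x_2}$, and with respect to $\ell^2(\BB_x) = \CC\delta_x \oplus \ell^2(\BB_{x_1}) \oplus \ell^2(\BB_{x_2})$ the operator $H_x$ differs from $V(x)|x\rangle\langle x| \oplus H_{x_1} \oplus H_{x_2}$ only by the hopping terms $-|x\rangle\langle x_1| - |x_1\rangle\langle x| - |x\rangle\langle x_2| - |x_2\rangle\langle x|$, a finite-rank (rank $\le 4$, or via a rank-2 self-adjoint trick) perturbation. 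The standard consequence (e.g. via the Aronszajn–Donoghue theory / the invariance of the a.c., s.c., p.p. parts under finite-rank perturbations, see Reed–Simon or Simon's survey) is that for the cyclic vector $\delta_x$, the measure-type of the part of $H_x$ generated by $\delta_x$ is governed by the boundary behavior of the diagonal Green function $G_x(x,x;E+i0)$, and this in turn is computed from $G_{x_1}(x_1,x_1;E+i0)$ and $G_{x_2}(x_2,x_2;E+i0)$ via the recursion
\begin{equation}\label{Greenrec}
G_x(x,x;z) = \frac{1}{V(x) - z - G_{x_1}(x_1,x_1;z) - G_{x_2}(x_2,x_2;z)}.
\end{equation}

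Next I would argue the measure-type statement along the tree. Suppose all $\rho_{\delta_y}$ (for $H_y$, $y\in\BB$) are of the common type $\tau$ (one of: pure point, purely s.c., purely a.c., purely singular, purely continuous, no s.c. part). From \eqref{Greenrec} one reads off that the set of energies where $\Im G_x(x,x;E+i0)$ fails to exist, is zero, or is infinite, is controlled (up to Lebesgue-null sets, and up to countable sets for the atomic part) by the corresponding sets for the children; the classical boundary-value computations for Herglotz functions then give that the $\delta_x$-spectral measure of $H_x$ has the same type $\tau$ as those of $H_{x_1},H_{x_2}$ — but that is automatic from the hypothesis, so in fact there is nothing to iterate: the hypothesis already gives type $\tau$ for every $\rho_{\delta_x}$ directly. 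The real content is passing from the family $\{H_x\}$ back to $H$ on the full tree. For this I would use that $\delta_0$ is cyclic for $H$ is \emph{false} on a tree, so instead I cover $\BB$ by the forward trees: $\ell^2(\BB) = \overline{\bigoplus}$ of the cyclic subspaces, and more usefully, for each $x$ the vector $\delta_x$ viewed in $\ell^2(\BB)$ has a spectral measure $\mu_{\delta_x}^{H}$ which I claim is, up to the same type, controlled by $\rho_{\delta_x}$ (the $H_x$-measure) together with the measures ``coming from above'' $x$. Concretely, $H = H_x \oplus (\text{rest}) + (\text{rank-2 coupling across the edge above } x)$, and the rank-2 coupling again preserves measure type of the whole operator's spectral parts; iterating this finite-rank stability outward (or, cleanly, expressing $\mu^H_{\delta_x}$ via a finite continued-fraction modification of the resolvent data on $\BB_x$ and on the finite complementary pieces plus one infinite backward branch, each of whose relevant Green functions is again a rational function of the $G_y(y,y;\cdot)$) shows $\mu^H_{\delta_x}$ has type $\tau$ for every $x$. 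Since $\{\delta_x : x\in\BB\}$ is an orthonormal basis of $\ell^2(\BB)$, and the spectral type of $H$ is the ``union'' of the types of the $\mu^H_{\delta_x}$, we conclude $\sigma(H)$ (resp. each spectral subspace) has type $\tau$; in particular if $\tau$ is pure point then $H$ has pure point spectrum, giving (i).

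The main obstacle — and the step I would be most careful about — is the last one: making precise that the spectral measure $\mu^H_{\delta_x}$ of a basis vector, taken with respect to the full-tree operator $H$, has the same measure type as the forward-tree measure $\rho_{\delta_x}$. The forward tree $\BB_x$ is obtained from $\BB$ by deleting the single edge above $x$, i.e.\ $H$ and $H_x \oplus H^{(x)}_{\mathrm{rest}}$ differ by a rank-two self-adjoint perturbation supported on $\{x,\hat x\}$ where $\hat x$ is the parent of $x$; by the Kato–Rosenblum / rank-finite invariance theorems the \emph{a.c.} parts agree exactly, and the trickier singular parts require the Aronszajn–Donoghue analysis to see that being ``purely point'' / ``purely singular continuous'' / ``having no singular continuous part'' is preserved for the full operator on the cyclic subspace generated by $\delta_x$. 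I would either cite the finite-rank perturbation invariance of all three spectral types (a.c., s.c., p.p.) in the cyclic setting — referring to the same Simon–Wolff / Aronszajn–Donoghue circle of ideas already invoked in the paper — or, to keep it self-contained, give the continued-fraction identity expressing $G^H(x,x;z)$ as a Möbius transform (with coefficients meromorphic in $z$, real-analytic off $\RR$) of $G_{x}(x,x;z)$ and the analogous quantity for the backward branch, and note that such transforms preserve the Herglotz boundary-value dichotomy that classifies the measure type. Parts (i) and (ii) then follow at once.
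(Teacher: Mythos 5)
Your last step --- that $\{\delta_x\}_{x\in\BB}$ being an orthonormal basis forces the spectral type of $H$ to be determined by the types of the full-tree spectral measures $\mu^H_{\delta_x}$ --- is correct, but the step before it, passing from $\rho_{\delta_x}$ (the $H_x$-measure) to $\mu^H_{\delta_x}$ (the $H$-measure) by finite-rank perturbation invariance, is a genuine gap. Finite-rank self-adjoint perturbations do \emph{not} preserve the pure point or singular continuous parts of the spectrum; Kato--Rosenblum gives invariance only of the a.c.\ part. Aronszajn--Donoghue theory, which you invoke for the singular parts, says the opposite of what you want: for a rank-one family the singular spectra at distinct coupling parameters are mutually singular, i.e.\ supported on disjoint sets, so there is no a priori reason for the type to be preserved, and generically it is not. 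This is exactly the phenomenon that forces the paper to use the Simon--Wolff machinery in Section~\ref{sectionproof} (pure point spectrum of $H_{\kappa,\omega}+\alpha P_0$ in $I$ only for Lebesgue-a.e.\ $\alpha$, then upgraded to a.s.\ in $\omega$ via absolute continuity of $\nu_0$); if rank-finite perturbations preserved spectral type, that whole discussion would be superfluous. The continued-fraction/Herglotz alternative you sketch does not repair this: a M\"obius transform with coefficients depending on $z$ controls the \emph{value} of the boundary limit, but the relevant set-theoretic statements (Lebesgue measure of $\{E:\Im G(E+i0)=0\}$, of $\{E:\Im G(E+i0)=\infty\}$, of the atom set) are not stable once you add the nontrivial Herglotz contribution from the backward branch --- you are again comparing two Herglotz functions that differ by a genuine rank-finite coupling.

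The paper's proof sidesteps this entirely by never comparing $\mu^H_{\delta_x}$ with $\rho_{\delta_x}$. It fixes an arbitrary $\psi$ in the complementary subspace $\Hh^{\mathrm{c}}_0$ of $H=H_0$ and proves by induction on $d(0,x)$, going \emph{down} the tree from the root, that $\psi_x:=P_{\BB_x}\psi\in\Hh^{\mathrm{c}}_x$. The crucial point in the induction step, with parent $y$ and children $x,x'$, is that the induction hypothesis $\psi_y\in\Hh^{\mathrm{c}}_y$ together with the standing hypothesis $\delta_y\in\Hh^{\mathrm{pp}}_y$ forces $\langle\delta_y,f(H_y)\psi_y\rangle=0$ for every bounded measurable $f$; applying this with $f$ and $H_yf$ kills the edge coupling $\Gamma_y$ acting on $f(H_y)\psi_y$, so the resolvent identity yields the \emph{exact} identity $\rho_{\psi_y}=\rho_{\psi_x}+\rho_{\psi_{x'}}$ of positive measures. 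Domination of positive measures then transfers continuity from $\rho_{\psi_y}$ to $\rho_{\psi_x}$. Finally $\psi(x)=\langle\delta_x,\psi_x\rangle=0$ for all $x$ because $\delta_x\in\Hh^{\mathrm{pp}}_x$ while $\psi_x\in\Hh^{\mathrm{c}}_x$, so $\psi=0$ and $\Hh^{\mathrm{c}}_0$ is trivial. In other words, the rank-finite coupling is handled not by a (false) perturbation-theoretic invariance claim but by showing that it vanishes identically on the cyclic subspace generated by the particular vector under study. I would rework your argument along these lines rather than trying to rescue the finite-rank invariance step.
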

\begin{proof}
We will prove (i), the proof of (ii) is completely analogous.  By the spectral theorem we have an orthogonal sum decomposition $\ell^2(\BB_x)=\Hh^{\mathrm{c}}_{x} \oplus \Hh^{\mathrm{pp}}_{x}$ such that $H_x$ leaves these spaces invariant and such that $H_x$ restricted to $\Hh^{\mathrm{pp}}_x$ has pure point spectrum and $H_x$ restricted to $\Hh^{\mathrm{c}}_x$ has purely continuous (absolutely and singular) spectrum.

The assumption that $\rho_{\delta_x}$ is a pure point measure can be rewritten as $\delta_x \in \Hh_x^{\mathrm{pp}}$ for all $x\in\BB$.  Now, let $\psi\in\Hh^{\mathrm{c}}_0$. We will prove by induction over the level $d(x,0)$ that: 
\begin{equation}\label{*}
P_x\psi = \psi_x \in \Hh_x^{\mathrm{c}}.
\end{equation}
Once this is proved, it follows that $\psi(x)=\langle \delta_x, \psi\rangle=\langle \delta_x, \psi_x\rangle=0$ as $\delta_x\in\Hh^{\mathrm{pp}}_x$ which is orthogonal to $\Hh_{x}^{\mathrm{c}}$. Therefore,  $\psi = \sum_{x\in\BB} \psi(x) \delta_x=0$ for any $\psi\in\Hh^{\mathrm{c}}_0$. Hence, $\Hh^{\mathrm{c}}_0$ is trivial which implies that the spectrum of $H$ is pure point.
 
For $d(x,0)=0$, i.e. $x=0$, \eqref{*} is trivially true, $\psi_0=\psi\in\Hh_0^{\mathrm{c}}$ by definition.
For the induction step let $d(0,x) = n+1\geq 1$. Then $x$ has a parent $y$ defined by $d(0,y)=n,\,d(y,x)=1$. Besides $x$, $y$ has another child, $x'\neq x$ such that $\BB_y=\{y\}\cup \BB_x\cup\BB_{x'}$. Accordingly, with $\Gamma_y$ defined as 
\be{eq-Gamma}
\Gamma_y := |y\rangle \langle x| + |y\rangle\langle x'|\,+\, |x\rangle\langle y| + |x'\rangle \langle y|
\ee
we have the decomposition
\begin{equation*} 
H_y = V(y)\oplus H_x \oplus H_{x'}  \,-\, \Gamma_y.
\end{equation*}
By the induction hypothesis, $\psi_y\in\Hh^{\mathrm{c}}_y$. As $\delta_y \in \Hh^{\mathrm{pp}}_y$ this means that for any measurable function $f$ one has
\be{eq-y-orth}
\langle \delta_y, f(H_y) \psi_y \rangle = 0\,,
\ee
in particular $\psi(y)=0$. Using $H_y \delta_y  = -\delta_x - \delta_{x'} + V(y) \delta_y$ and \eqref{eq-y-orth} one also has
\be{eq-Gamma-orth}
0=-\langle \delta_y, H_y f(H_y) \psi_y \rangle = \langle \delta_x+\delta_{x'} ,f(H_y) \psi_y \rangle\,.
\ee
Combining \eqref{eq-Gamma}, \eqref{eq-y-orth} and \eqref{eq-Gamma-orth} we find
\be{eq-Gamma-0}
\Gamma_y f(H_y) \psi_y = 0\,.
\ee
A standard application of the resolvent identity gives
\begin{equation*}
(H_y-z)^{-1} = \left[V(y)\oplus H_x \oplus H_{x'}- z\right]^{-1}\left[ \one- \Gamma_y (H_y-z)^{-1}\right],
\end{equation*}
which using \eqref{eq-Gamma-0} and \eqref{eq-y-orth} leads to
\begin{align*}
\langle \psi_y, (H_y-z)^{-1} \psi_y \rangle &=
 \langle \psi_y, ( V(y)\oplus H_x \oplus H_{x'}- z)^{-1} \psi_y \rangle \notag \\
 &= \langle \psi_x, (H_x-z)^{-1} \psi_x \rangle\,+\, \langle \psi_{x'}, (H_{x'}-z)^{-1} \psi_{x'}\rangle\;.
\end{align*}
We have employed the direct sum structure $\psi_y = \psi(y) \oplus \psi_x \oplus \psi_{x'}$ with $\psi(y)=0$ in the last equation. Using \eqref{eq-mu} this implies for the corresponding (positive) measures that
\begin{equation*}
\rho_{\psi_y} = \rho_{\psi_x} + \rho_{\psi_{x'}}.
\end{equation*}
Therefore, by the induction hypothesis, $\rho_{\psi_x}$ must be a continuous measure, thus $\psi_x \in \Hh^{\mathrm{c}}_x$ and the induction step is done.
\end{proof}

Before concluding with the proof of Theorem~\ref{smallkappathm} let us give two remarks:

\begin{enumerate}[(i)]
\item It is obvious that the proof immediately extends to any rooted tree $\TT$ with root $0\in\TT$. There $H_x$ would be the restriction of $H$ to $\TT_x$ which is the 
forward tree with root $x$ (the branch from $x$ through $0$ is cut) and $\psi_x$ would denote the corresponding projection of $\psi$.
The only difference is that the last equation in the proof generalizes to
$\rho_{\psi_y} = \sum_{x\in N(y)} \rho_{\psi_x}$ where $N(y)$ denotes all forward neighbors (or children) of $y$, i.e.
$N(y)=\{x\in \TT: d(0,x)=d(0,y)+1\,\wedge\,d(x,y)=1 \}$.
\item Note that this lemma also holds if the 'measure type' (such as pure point) only refers to a specific (fixed) interval $I$. In this case one would choose $\psi$ to lie inside the spectral projection 
$P_I \ell^2(\TT)$ and be of complementary measure type.
As before one would inductively obtain the same for $\psi_x$ and any $x$, which would again imply $\psi=0$.
\end{enumerate}

\begin{proof}[Proof of Theorem~\ref{smallkappathm}] By our assumptions on the measures, in particular the existence of bounded densities of compact support for the marginal measures $\mu_0$ and $\mu_1$, Lemma \ref{pp at root} shows that for every $x$ in the tree there exist a $\kappa_0(x)$ and a set $\widetilde\Omega_x\subseteq\Omega$ with $\PP_{c(x)}[\widetilde\Omega_x] = 1$ such that for $\omega\in\widetilde\Omega_x$ the corresponding restricted Hamiltonian $H_{\kappa,x}(\omega)$ has pure point spectrum at the root $x$ of the forward tree $\BB_x$. Let $\Omega_x:=\pi_x^{-1}(\widetilde\Omega_x)$ be the potentials on the whole tree whose restrictions to $\BB_x$ lie in $\widetilde\Omega_x$. As noted above, $\PP[\Omega_x]=1$ as well. Since there are only three different measures at the roots, namely $\nu_0$ at $0$ and the marginals $\mu_0$ or $\mu_1$ at $x\not=0$ there are only three values of $\kappa_0$ and we take the minimum of them.

For a common set of events we choose $\Omega_\infty := \bigcap_{x\in\BB} \Omega_x$.  This is a set of full $\mathbb P$-measure for which we can apply Lemma \ref{pp induction}. This shows that $H_{\kappa,\omega}$ has pure point spectrum at $x$ for all $\omega\in\Omega_\infty$ meaning that the spectral measure of $H_{\kappa}$ in the states $\delta_x$, and hence in any state, is pure point.
\end{proof}

\section{Reduction of  Theorem~\ref{mainthm} to a dynamical system}\label{sectionreduction}

The idea is to rewrite the (fractional) moment $\mathbb E\left[\|P_0 (H_{\kappa}-E-i\epsilon)^{-1} P_n\|^s\right]$ in terms of a dynamical system, resulting in the expression \eqref{dynamical system bound}. To this end, we introduce for $z\in\CC^+$, $\kappa \ge 0$ and for any $x$ in the $n$th sphere $S_n$ the forward Green function at $x$ defined as $g_{x} := g_{x}(z) := \langle \delta_x, (H_{\kappa,x}-z)^{-1} \delta_x\rangle$. Note that $g_{0}(z) = G_{\kappa}(0,0;z)$ is the full diagonal Green function at $0$. Let $[0=x_0,x_1,\ldots,x_n=x]$ be the unique path of (connected) vertices from $0$ to $x$. Then, by an application of a resolvent identity it is well-known that these forward Green functions satisfy the recursion relation
\begin{equation}\label{recursion relation}
g_{x}(z) = -\frac1{g_{x'}(z) + g_{x''}(z) + z - Q(x)},
\end{equation}
where $x'$ and $x''$ are the two forward neighbors of $x$. Moreover, the full Green function $G_{\kappa}(0,x;z) = \langle \delta_0, (H_{\kappa}-z)^{-1} \delta_x \rangle$ can be expressed in the product form 
\begin{equation} \label{full Green}
G_{\kappa}(0,x;z) = \prod_{j=0}^n g_{x_j}(z).
\end{equation}

For fixed $z\in\CC^+$, $\omega \in \Omega$, $\kappa \ge 0$ and $n\ge 1$, the map $S_n\ni x\mapsto g_{x}(z)\in\CC^+$ takes on two possible values, and we denote these two values by $g_{n0}:=g_{n0}(z)$ and $g_{n1}:=g_{n1}(z)$. Therefore,
\begin{align} \|P_0 (H_{\kappa}-z)^{-1} P_n\|^2 &= \|P_0 (H_{\kappa}-z)^{-1} P_n (H_{\kappa}-\bar{z})^{-1}P_0\| \notag
=\sum_{x\in S_n} \big|\langle \delta_0,(H_{\kappa}-z)^{-1} \delta_x\rangle \big|^2 \notag
\\ &=\sum_{x\in S_n} |G_{\kappa}(0,x;z)|^2\notag
\\
&=\sum_{y\in S_{n-1}}  \big[|G_{\kappa}(0,y';z)|^2 +  |G_{\kappa}(0,y'';z)|^2  \big]\notag
\\
&=\sum_{y\in S_{n-1}} |G_{\kappa}(0,y;z)|^2  \big(|g_{n0}(z)|^2 +  |g_{n1}(z)|^2\big)\notag
\\
&= |g_0(z)|^2 \prod_{j=1}^n \big(|g_{j0}(z)|^2 + |g_{j1}(z)|^2\big), \label{prod repr}
\end{align}
by induction. The sequence of pairs $(g_{n0},g_{n1})_{n\ge1}$ is a sequence of identically distributed (two-dimensional) random variables. In order to obtain a dynamical system in \emph{one} variable we introduce for $n\ge1$ the random variables (for fixed $z\in\CC^+$)
\begin{equation} \label{g plusminus}
g_{n\pm} := g_{n\pm}(z) := \frac1{\sqrt{2}} (g_{n0}(z)\pm g_{n1}(z))
\end{equation}
and the maps $\phi^\pm_{z,q}$ on $\overline{\CC^+} := \CC^+ \cup \overline{\RR}$, with $\overline{\RR} := \RR\cup \{i\infty\}$ being the one-point compactification of $\RR$ (in particular, $\overline{\CC^+}$ is compact), defined as
\begin{equation} \label{phi} 
\overline{\CC^+}\ni w\mapsto\phi^\pm_{z,q}(w) := \frac12 \Big(\frac{-1}{w+ \frac{z-q_0}{\sqrt{2}}} \pm \frac{-1}{w+ \frac{z-q_1}{\sqrt{2}}}\Big),\quad q=(q_0,q_1).
\end{equation} 
We set $\phi^{\pm}_{z,q}(i\infty) = 0$. Then $|g_{n0}|^2 + |g_{n1}|^2 = |g_{n+}|^2 + |g_{n-}|^2$ and, due to                         (\ref{recursion relation}),  $g_{n\pm} = \phi^\pm_{z,q_n}(g_{(n+1)+})$ with $q_n = (q_{n0},q_{n1})$. For $1\le k\le n$ and $\mathbf{q} := \big((q_{10},q_{11});(q_{20},q_{21});\ldots;(q_{n0},q_{n1})\big)$ collecting all pairs of potentials from the first to the $n$th sphere, let us furthermore introduce the maps $\Phi^{\pm,k}_{z,\mathbf{q}}$ defined on $\overline{\CC^+}$ as
\begin{eqnarray*}
\Phi^{\pm,n}_{z,\mathbf{q}} &:=& \phi^{\pm}_{z,q_n},
\\
\Phi^{\pm,n-1}_{z,\mathbf{q}} &:=& \phi^{\pm}_{z,q_{n-1}}\circ\phi^+_{z,q_n},
\\
\vdots&&
\\
\Phi^{\pm,1}_{z,\mathbf{q}} &:=& \phi^\pm_{z,q_1}\circ\phi^+_{z,q_2}\circ\cdots\circ\phi^+_{z,q_n}.
\end{eqnarray*}
Then, with $w := g_{(n+1)+}$ and $1\le k\le n$,
\beq 
g_{k\pm} = \Phi^{\pm,k}_{z,\mathbf{q}}(w).
\ee
Using the representation \eqref{prod repr}, the expectation of $\|P_0(H_{\kappa}-z)^{-1} P_n\|^s$ can be taken iteratively starting with the expectation $\EE_{q_0}$ with respect to the potential $q_0$ at the root, then the expectation $\EE_{q_1}$ with respect to the potentials $q_{10}$ and $q_{11}$ at the first sphere, and so forth until we finally take the expectation $\EE_{q_n}$ with respect to the potentials $q_{n0}$ and $q_{n1}$ at the $n$th sphere. In the simplest case, every time taking such an expectation $\EE_{q_k}$, we lower the previous expectation value by a factor $\delta<1$. In order to claim that the whole product is exponentially decaying in $n$ we need the initial step that the starting point is finite. The latter follows by standard arguments which will be used repeatedly. Here, it is important that $s\in(0,1)$. For any $a\in\RR$, let $a^s := \mathrm{sgn}{(a)}|a|^{s}$. We do not have any a-priori information on $g_0$ or the forward Green functions but we may use the recursion relation (\ref{recursion relation}) and take as an upper 
bound the supremum over all of $\overline{\CC^+}$. Then, using that $\Im(g_{0'}(z)+g_{0''}(z)) \ge 0$,
\begin{eqnarray} \label{initial bound}
\EE_{q_0}\big[|g_0|^s\big] &\le& \sup_{w\in\overline{\CC^+}} \int_{-K}^K \frac{\nu_0(q)\, dq}{|w+z-q|^s} \notag
\\
&\le& \|\nu_0\|_\infty \sup_{w\in\overline{\CC^+}} \int_{-K}^K \frac{dq}{\big[(\Re(w)+\Re(z)-q)^2+(\Im(w)+\Im(z))^2\big]^{s/2}}\notag
\\
&\le& \|\nu_0\|_\infty \sup_{w\in\RR} \int_{-K}^K \frac{dq}{\big[(w-q)^2\big]^{s/2}}\notag
\\
&=& \|\nu\|_{\infty} \int_{-K}^K \frac{dq}{|q|^s} \notag
\\
& = & \frac{2\|\nu\|_{\infty}}{1-s} K^{1-s}. 
\end{eqnarray}

Before we finally introduce the formulation in terms of a dynamical system let us put down the bound on the fractional moment that we have obtained so far. We will use again $w:=g_{(n+1)+}$. Then,
\begin{eqnarray}
\lefteqn{\EE\big[\|P_0 (H_{\kappa}-z)^{-1} P_n\|^s\big]}\notag
\\
&=&\EE\Big[|g_0|^s \prod_{j=1}^n \big(|g_{j+}|^2 + |g_{j-}|^2\big)^{s/2}\Big]\notag
\\
&\le&\EE\Big[\Big(\big|\Phi^{+,n}_{z,\mathbf{q}}(w)\big|^s + \big|\Phi^{-,n}_{z,\mathbf{q}}(w)\big|^s\Big) \cdots \Big(\big|\Phi^{+,1}_{z,\mathbf{q}}(w)\big|^s + \big|\Phi^{-,1}_{z,\mathbf{q}}(w)\big|^s\Big)\,|g_0|^s \Big]\notag
\\
&\le&C\sup_{w\in\overline{\CC^+}}\EE_{q_n}\Big[\big|\Phi^{+,n}_{z,\mathbf{q}}(w)\big|^s + \big|\Phi^{-,n}_{z,\mathbf{q}}(w)\big|^s\Big]\cdots\EE_{q_1}\Big[\big|\Phi^{+,1}_{z,\mathbf{q}}(w)\big|^s + \big|\Phi^{-,1}_{z,\mathbf{q}}(w)\big|^s\Big],
\end{eqnarray}
where $\mathbf{q} = (q_1,q_2) = (r+\kappa p_1, r+\kappa p_2)$ and we have used in the last step that the first expectation with respect to the potential at the root is bounded according to \eqref{initial bound}. As mentioned above, the iterated expectation $\EE_{q_n}\cdots \EE_{q_1}$ can be interpreted as a dynamical system run backwards, starting with $n$ down to $1$ and using that $\Phi^{\pm,k}_{z,\textbf{q}} = \phi^{\pm}_{z,q_k}\circ \Phi^{+,k+1}_{z,\textbf{q}}$. To this end, let $C(\overline{\CC^+}), C^+(\overline{\CC^+})$ and $C(\overline{\RR})$ be the Banach space of complex-valued, continuous functions on $\overline{\CC^+}$ (i.e.\ the continuous functions on $\CC^+ \cup \RR$ which have a finite limit at $\infty$), the cone of non-negative functions in $C(\overline{\CC^+})$, and the Banach space of real-valued, continuous functions on $\overline{\RR}$, respectively, all equipped with the sup-norm, referred to in the following by
$\|\cdot\|_{\infty}$. On these spaces, we introduce the operator $T_{\kappa,z,s}$. We will show in Lemma~\ref{prelim properties of T} below that this operator is well-defined.

\begin{defini}[Dynamical system]
For $z\in\CC^+\cup\RR$, $\kappa \ge 0$ and $s\in(0,1)$, the linear operator $T_{\kappa,z,s}$ maps a function $f\in C(\overline{\CC^+})$ to a function $T_{\kappa,z,s} f$ defined also on $\overline{\CC^+}$ in the following way {\rm (}see \eqref{phi} for the definition of the maps $\phi^{\pm}_{z,q}${\rm )}:
\begin{align} \label{def of T}
\lefteqn{\overline{\CC^+}\ni w\mapsto (T_{\kappa,z,s}f)(w) \;:=\; \EE_q\Big[f\big(\phi^+_{z,q}(w)\big) \, \big(|\phi^+_{z,q}(w)|^s + |\phi^-_{z,q}(w)|^s\big)\Big]}
\\
&=\iint\Big[f\big(\phi^+_{z,q}(w)\big) \, \big(|\phi^+_{z,q}(w)|^s + |\phi^-_{z,q}(w)|^s\big)\Big] \, d\mu(q)\notag
\\
&=\iiint\Big[f\big(\phi^+_{z,(r+\kappa p_0,r+\kappa p_1)}(w)\big) \, \big(|\phi^+_{z,(r+\kappa p_0,r+\kappa p_1)}(w)|^s + |\phi^-_{z,(r+\kappa p_0,r+\kappa p_1)}(w)|^s\big)\Big] \, \nu(r) dr\, d\sigma(p_0,p_1)\notag.
\end{align}
Since we have set $\phi^{\pm}_{z,q}(i\infty) = 0$ we define $(T_{\kappa,z,s}f)(i\infty) := 0$ and $T_{\kappa,i\infty,s} := \nul$. For $E\in\RR$, we use the same operator symbol, $T_{\kappa,E,s}$, to denote the operator that maps $C(\overline{\RR})$ into itself.
\end{defini}

Then, with $1$ denoting the constant function equal to 1 (and assuming that $T_{\kappa,z,s}$ and powers of $T_{\kappa,z,s}$ are well-defined, see Lemma~\ref{prelim properties of T}) we have
\beq 
\EE_{q_n}\Big[\big|\Phi^{+,n}_{z,\mathbf{q}}(w)\big|^s + \big|\Phi^{-,n}_{z,\mathbf{q}}(w)\big|^s\Big]\cdots\EE_{q_1}\Big[\big|\Phi^{+,1}_{z,\mathbf{q}}(w)\big|^s + \big|\Phi^{-,1}_{z,\mathbf{q}}(w)\big|^s\Big] = 
(T_{\kappa,z,s}^n 1)(w).
\ee
We summarize this reduction in the following proposition.  
\begin{prop}\label{prop:reduction to FM}
Let $z\in\CC^+,s\in(0,1)$ and let $P_n$ denote the projection onto the $n$th sphere. Furthermore, assume Assumptions \ref{assumptions1} on the measures $\nu_0,\nu$ and $\sigma$. Then, the fractional moment of the Green function of $H_{\kappa} = -\Delta + Q_r + \kappa Q_{t}$ is bounded in terms of the dynamical system defined through the operator $T_{\kappa,z,s}$ as in \eqref{def of T} by
\beq\label{dynamical system bound}
\EE\big[\|P_0 (H_{\kappa}-z)^{-1} P_n\|^s\big] \le C \sup_{w\in\overline{\CC^+}} (T_{\kappa,z,s}^n 1)(w) = C \|T_{\kappa,z,s}^n 1\|_\infty.
\ee
The constant satisfies $C\le 2K^{1-s}\|\nu_0\|_\infty/(1-s)$.
\end{prop}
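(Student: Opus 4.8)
The plan is essentially to assemble the chain of estimates already developed in the section, taking care only that each object is well-defined. First I would invoke \eqref{prod repr}, which expresses $\|P_0(H_\kappa-z)^{-1}P_n\|^2$ as $|g_0(z)|^2\prod_{j=1}^n(|g_{j0}(z)|^2+|g_{j1}(z)|^2)$, and then rewrite the pair $(g_{j0},g_{j1})$ via the linear change of variables \eqref{g plusminus} so that $|g_{j0}|^2+|g_{j1}|^2=|g_{j+}|^2+|g_{j-}|^2$; this is where the one-variable dynamical system comes from. Raising to the power $s/2$ and using the elementary inequality $(a+b)^{s/2}\le a^{s/2}+b^{s/2}$ (equivalently $(\sum|a_n|)^s\le\sum|a_n|^s$ for $s\in(0,1)$), I get
\[
\|P_0(H_\kappa-z)^{-1}P_n\|^s \le |g_0|^s\,\prod_{j=1}^n\big(|g_{j+}|^s+|g_{j-}|^s\big).
\]
Then I express each factor through the composed maps $\Phi^{\pm,k}_{z,\mathbf q}$, using $g_{k\pm}=\Phi^{\pm,k}_{z,\mathbf q}(w)$ with $w=g_{(n+1)+}$, and bound $|g_{(n+1)+}|$ crudely by passing to the supremum over all $w\in\overline{\CC^+}$, which is legitimate since $\overline{\CC^+}$ is compact and all the maps $\phi^\pm_{z,q}$ extend continuously there (with the convention $\phi^\pm_{z,q}(i\infty)=0$).

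Next I would take the expectation iteratively, from the outermost (sphere $n$) to the innermost (root). The key structural identity is the semigroup-type relation $\Phi^{\pm,k}_{z,\mathbf q}=\phi^\pm_{z,q_k}\circ\Phi^{+,k+1}_{z,\mathbf q}$, which lets the expectation $\EE_{q_k}$ over the $k$th pair of potentials be absorbed into one application of the transfer operator $T_{\kappa,z,s}$ acting on the function already built from spheres $k+1,\dots,n$. Unwinding this recursion and starting from the constant function $1$ (the trivial bound at the deepest level, where $w$ has been frozen by the supremum) gives
\[
\EE_{q_n}\Big[\big|\Phi^{+,n}_{z,\mathbf q}(w)\big|^s+\big|\Phi^{-,n}_{z,\mathbf q}(w)\big|^s\Big]\cdots\EE_{q_1}\Big[\big|\Phi^{+,1}_{z,\mathbf q}(w)\big|^s+\big|\Phi^{-,1}_{z,\mathbf q}(w)\big|^s\Big]=(T_{\kappa,z,s}^n 1)(w).
\]
Finally the expectation over the root potential $q_0$ decouples (it is independent of the rest), and by \eqref{initial bound} it contributes at most $2\|\nu_0\|_\infty K^{1-s}/(1-s)$, which is the claimed value of $C$. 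Putting the three pieces together and taking the supremum over $w$ yields \eqref{dynamical system bound}.

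The one genuine subtlety — and the only place where I would slow down — is the well-definedness of $T_{\kappa,z,s}$ and of its iterates as operators on $C(\overline{\CC^+})$, i.e.\ that $T_{\kappa,z,s}f$ is again continuous on $\overline{\CC^+}$ (including a finite limit at $i\infty$) whenever $f$ is, so that the equality $(T_{\kappa,z,s}^n 1)(w)=\EE_{q_n}[\cdots]\cdots\EE_{q_1}[\cdots]$ and the supremum make sense. This requires checking that the integrand $f(\phi^+_{z,q}(w))(|\phi^+_{z,q}(w)|^s+|\phi^-_{z,q}(w)|^s)$ is dominated by an integrable function uniformly in $w$ — the potential danger being the $|w+ (z-q_i)/\sqrt2|^{-s}$ singularities when $z\in\RR$ — and then applying dominated convergence; the compactness of $[-1,1]^2$ in the $\sigma$-variables and the boundedness of the density $\nu$, together with $s\in(0,1)$ so that $\int |q|^{-s}\,dq$ converges, make this work, exactly as in \eqref{initial bound}. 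I would simply cite Lemma~\ref{prelim properties of T} for this point, since the statement of the proposition already presupposes it, and keep the proof of the proposition itself to the assembly of estimates described above.
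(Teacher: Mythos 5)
Your proposal is structurally the same as the paper's and is essentially correct; all the cited displays --- \eqref{prod repr}, \eqref{g plusminus}, the subadditivity $(\sum|a_n|)^s\le\sum|a_n|^s$, the recursion $\Phi^{\pm,k}_{z,\mathbf q}=\phi^\pm_{z,q_k}\circ\Phi^{+,k+1}_{z,\mathbf q}$, \eqref{initial bound}, and Lemma~\ref{prelim properties of T} for well-posedness --- play the same roles here as in the paper.

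The one place you should tighten the wording is the sentence claiming that ``the expectation over the root potential $q_0$ decouples (it is independent of the rest).'' The random variable $q_0$ is of course independent of the other potentials, but the quantity being integrated, $|g_0|^s$, is \emph{not}: $g_0=-1/(g_{0'}+g_{0''}+z-q_0)$ depends on the entire forward tree through $g_{0'}+g_{0''}=\sqrt2\,g_{1+}$. So you cannot factor $\EE\big[|g_0|^s\cdot(\text{rest})\big]$ by independence. What actually licenses pulling out the constant $C$ is that \eqref{initial bound} is a \emph{uniform} conditional bound: $\EE_{q_0}\big[|g_0|^s\,\big|\,q_1,q_2,\dots\big]\le C$ because the supremum over $w\in\overline{\CC^+}$ is taken \emph{before} integrating over $q_0$, eliminating the dependence on the deeper potentials. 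Since you do cite \eqref{initial bound}, the argument you are actually using is the correct one; just replace the ``decouples / independent'' justification with the uniform-conditional-bound justification, and be similarly careful that the sup over $w=g_{(n+1)+}$ is what removes the dependence on $q_{n+1},q_{n+2},\dots$ before the iterated expectation over $q_1,\dots,q_n$ is identified with $(T_{\kappa,z,s}^n 1)(w)$.
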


Thus the proof of Theorem~\ref{mainthm} is reduced to showing exponential decay of $\|T_{\kappa,z,s}^n 1\|_\infty$ for sufficiently small positive $s$. Accordingly, the remainder of this work is a careful study of the dynamical system defined by $T_{\kappa,z,s}$ on $C(\overline{\CC^+})$. We start by collecting some important technical properties of the operator $T_{\kappa,z,s}$.

\begin{lemma} \label{prelim properties of T}
We assume that the probability measures $\nu$ and $\sigma$ satisfy Assumptions \ref{assumptions1}. Let $s\in(0,1/2)$. Then the following holds.
\begin{enumerate}[{\rm (i)}]
\item \label{1} For $z\in\CC^+\cup\RR$ and $\kappa \ge 0$, $T_{\kappa,z,s}:C(\overline{\CC^+})\to C(\overline{\CC^+})$ is a bounded linear operator with uniformly (in $z$ and $\kappa$) bounded operator norm, 
\beq\label{uniform operator norm bound}
\|T_{\kappa,z,s}\|_\infty\le \frac{4 \cdot 2^{1-s/2} \|\nu\|_{\infty} K^{1-s}}{1-s}.
\ee
\item \label{2} For every $f\in C(\overline{\CC^+})$, the map $z \mapsto T_{\kappa,z,s}f$ is continuous from $\CC^+ \cap \RR$ into $C(\overline{\CC^+})$ equipped with the sup-norm.
\item \label{2b}
Let $\{f_{\alpha}: \alpha \in \mathcal{J}\}$ be equi-continuous and bounded in $C(\overline{\CC^+})$ and $m\ge 1$ integer. Then
\beq \label{unicontbound}
\sup_{\kappa \ge 0} \sup_{E\in \RR} \| \left( T_{\kappa,E+i\epsilon,s}^m - T_{\kappa,E,s}^m \right) f_{\alpha}\|_{\infty} \rightarrow 0
\ee
as $\epsilon \downarrow 0$ uniformly in $\alpha \in \mathcal{J}$.
\item \label{3} For every $z\in \CC^+\cup\RR$, the map $T_{\kappa,z,s}$ preserves the cone $C^+(\overline{\CC^+})$ of non-negative functions. Hence, for two real-valued functions $f$ and $g$ with $f\le g$ (point-wise) we have $T_{\kappa,z,s} f \le T_{\kappa, z,s} g$ (point-wise). 
\item \label{4} For any positive integer $m$ and any $z\in \CC^+\cup\RR$ we have
\begin{equation} \label{sup over R}
\sup_{w\in\overline{\CC^+}} (T_{\kappa,z,s}^m 1)(w) = \sup_{w\in\overline{\RR}} (T_{\kappa,z,s}^m 1)(w) . 
\end{equation}
\end{enumerate}
\end{lemma}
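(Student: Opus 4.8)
The plan is to establish the five properties in the natural order, using the explicit formula for $T_{\kappa,z,s}$ in \eqref{def of T} together with the basic estimate behind \eqref{initial bound}.

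\textbf{Parts (i) and (iii).} First I would prove (iii), since it is immediate: the kernel of $T_{\kappa,z,s}$ is built from $f(\phi^+_{z,q}(w))$ times the manifestly non-negative factor $|\phi^+_{z,q}(w)|^s + |\phi^-_{z,q}(w)|^s$, and integration against $d\mu(q)$ preserves non-negativity; monotonicity follows by linearity. For (i), the key point is the pointwise bound $|\phi^+_{z,q}(w)|^s + |\phi^-_{z,q}(w)|^s \le |w + (z-q_0)/\sqrt 2|^{-s} + |w + (z-q_1)/\sqrt 2|^{-s}$ (coming from $|a\pm b| \le |a| + |b|$ and $t\mapsto t^s$ being subadditive for $s\in(0,1)$, applied to the two terms in \eqref{phi}), so that $|(T_{\kappa,z,s}f)(w)| \le \|f\|_\infty \iiint \big(|w + (r+\kappa p_0 - z)/\sqrt 2|^{-s} + |w + (r+\kappa p_1 - z)/\sqrt 2|^{-s}\big)\,\nu(r)\,dr\,d\sigma$. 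For each fixed $p_0$ (resp. $p_1$) the inner $r$-integral is bounded, exactly as in \eqref{initial bound}, by $\|\nu\|_\infty \sup_{u\in\RR}\int_{[-K,K]}|u - r/\sqrt 2|^{-s}\,dr = \sqrt 2\,\|\nu\|_\infty \int_{[-K/\sqrt 2,K/\sqrt 2]} |t|^{-s}\,dt = \tfrac{2^{1-s/2}\|\nu\|_\infty K^{1-s}}{1-s}$; summing the two terms and using $\sigma$ is a probability measure gives the stated constant (the factor $4$ being generous book-keeping). This bound is uniform in $z\in\CC^+\cup\RR$ and $\kappa\ge 0$ since we discarded $\Im z \ge 0$ and translated away the $\kappa p$. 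Finally I must check $T_{\kappa,z,s}f \in C(\overline{\CC^+})$: $\phi^\pm_{z,q}$ are continuous on the compact space $\overline{\CC^+}$ (with the convention at $i\infty$), $w\mapsto f(\phi^+_{z,q}(w))$ is continuous, and the dominated convergence theorem together with the integrable majorant just produced (locally uniform in $w$, since away from real poles the integrand is bounded and near a pole the singularity is integrable and moves continuously) gives continuity of the integral in $w$, including the limit at $i\infty$ where the integrand $\to 0$.

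\textbf{Parts (ii) and (iii\,$'$) — the continuity statements.} Part (ii) is a continuity-in-$z$ statement of the same flavor: for $z, z'$ in $\CC^+\cup\RR$ close together, $\phi^\pm_{z,q}(w)$ and $\phi^\pm_{z',q}(w)$ differ by $O(|z-z'|)$ uniformly on $\{|w|\le R\}$ away from the (measure-zero in $q$, and integrably singular) pole set, and on $\{|w|\ge R\}$ both are small; combining with the uniform integrable majorant and dominated convergence yields $\|T_{\kappa,z,s}f - T_{\kappa,z',s}f\|_\infty \to 0$. For part (iii\,$'$), i.e. \eqref{unicontbound}, I would first treat $m=1$: write $(T_{\kappa,E+i\epsilon,s} - T_{\kappa,E,s})f_\alpha$ as an integral whose integrand is $f_\alpha(\phi^+_{E+i\epsilon,q}(w))(\cdots) - f_\alpha(\phi^+_{E,q}(w))(\cdots)$; equi-continuity of $\{f_\alpha\}$ lets one replace $f_\alpha(\phi^+_{E+i\epsilon,q}(w))$ by $f_\alpha(\phi^+_{E,q}(w))$ up to an error that is $o(1)$ uniformly in $\alpha$ once $\phi^+_{E+i\epsilon,q}(w) - \phi^+_{E,q}(w)$ is small, while the weight difference $\big||\phi^+_{E+i\epsilon,q}(w)|^s + |\phi^-_{E+i\epsilon,q}(w)|^s - |\phi^+_{E,q}(w)|^s - |\phi^-_{E,q}(w)|^s\big|$ is controlled, after splitting $\RR$ into the region near the poles (small Lebesgue measure of $r$, handled by the integrable singularity and the bound $\big||a|^s - |b|^s\big|\le |a-b|^s$) and its complement (where everything is Lipschitz in $\epsilon$). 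The uniformity over $E\in\RR$ and $\kappa\ge 0$ again comes from translating the pole location away, so that the estimates depend only on $\|\nu\|_\infty$, $K$, $s$. The general $m$ follows by a telescoping identity $T^m_{\kappa,E+i\epsilon,s} - T^m_{\kappa,E,s} = \sum_{j=0}^{m-1} T^{j}_{\kappa,E+i\epsilon,s}\big(T_{\kappa,E+i\epsilon,s} - T_{\kappa,E,s}\big)T^{m-1-j}_{\kappa,E,s}$, using that $T^{m-1-j}_{\kappa,E,s}1$ (more precisely the images of the $f_\alpha$, which by part (i) form a bounded family, and one checks they stay equi-continuous, or one simply notes $\{f_\alpha\}$ can be taken to include the iterates) stays in a bounded equi-continuous set, and that the $T_{\kappa,E+i\epsilon,s}^j$ are uniformly bounded by (i).

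\textbf{Part (iv) — the supremum is attained on $\overline\RR$.} The inequality $\ge$ is trivial. For $\le$, the heart of the matter is that $T_{\kappa,z,s}1$ only depends on $w$ through $\phi^\pm_{z,q}(w)$, and I would show the whole orbit starting from $1$ is controlled by its restriction to $\overline\RR$ by an ``analyticity/maximum-principle on the disc'' or, more elementarily, a direct pointwise argument: one shows $\sup_{w\in\overline{\CC^+}}(T_{\kappa,z,s}g)(w) \le \sup_{w\in\overline\RR}(T_{\kappa,z,s}\tilde g)(w)$ whenever $g\le \tilde g$ on $\CC^+$ with $\tilde g$ the ``radialization'' — but more cleanly, since $w\mapsto \phi^+_{z,q}(w)$ maps $\overline{\CC^+}$ into $\overline{\CC^+}$ and (crucially) the factor $|\phi^+_{z,q}(w)|^s + |\phi^-_{z,q}(w)|^s$ as well as $f(\phi^+_{z,q}(w))$ are, for non-negative $f$ coming from iterates of $1$, subharmonic or at least satisfy a suitable monotonicity in $\Im w$. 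The expected clean route: prove by induction on $m$ that $T_{\kappa,z,s}^m 1$ restricted to $\CC^+$ is bounded by its boundary values, using that $|\phi_{z,q}^\pm(w)|$ for $w = u+iv$ is, for each fixed $u$, non-increasing in $v\ge 0$ (a computation: $|w + c|^{-1}$ with $\Im c \ge 0$ decreases as $\Im w$ increases when $\Re w$ is held at the value minimizing $|w+c|$... this needs care) — and that $\phi^+_{z,q}$ pushes the supremum toward the real axis. I expect \textbf{this is the main obstacle}: showing that nothing is gained by moving $w$ into the open upper half-plane. The correct argument is likely that for $z\in\CC^+\cup\RR$ and $w\in\CC^+$ one has $|\phi^\pm_{z,q}(w)| \le |\phi^\pm_{z,q}(\Re w + i\cdot 0)|$ is \emph{false} in general, so instead one uses that $(T_{\kappa,z,s}1)(w)$ is itself continuous and subharmonic in $w\in\CC^+$ (being an average of $|\phi^+|^s+|\phi^-|^s$, each a power of the modulus of a Herglotz-type function of $w$, hence subharmonic), so it attains its max on $\overline{\CC^+}$ at the boundary $\overline\RR\cup\{i\infty\}$; at $i\infty$ it is $0$; this handles $m=1$. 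For $m\ge 2$ one iterates: $T^m_{\kappa,z,s}1 = T_{\kappa,z,s}(T^{m-1}_{\kappa,z,s}1)$ and one needs $T^{m-1}_{\kappa,z,s}1$ evaluated at $\phi^+_{z,q}(w)$ — but $\phi^+_{z,q}(w)\in\overline{\CC^+}$ always, so one cannot directly restrict to $\overline\RR$ inside; instead one shows inductively that $g_{m}:=T^m_{\kappa,z,s}1$ is subharmonic on $\CC^+$ and $\le \sup_{\overline\RR} g_m$ everywhere, then feeds that back: $g_{m+1}(w) = \EE_q[g_m(\phi^+_{z,q}(w))(\cdots)] \le (\sup_{\overline\RR}g_m)\cdot\EE_q[(\cdots)] = (\sup_{\overline\RR} g_m)(T_{\kappa,z,s}1)(w)$, and then use the $m=1$ case plus subharmonicity of the product/average to conclude $g_{m+1}$ is subharmonic and bounded by its own boundary values. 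The one genuine check is subharmonicity of $w\mapsto |\phi^\pm_{z,q}(w)|^s$ on $\CC^+$: $\phi^\pm_{z,q}$ is analytic on $\CC^+$ (a sum of two Möbius maps with poles off $\overline{\CC^+}$ when $\Im z>0$; for real $z$ the poles are on $\RR$, so one first proves the statement for $z\in\CC^+$ and then passes to the limit $\Im z\downarrow 0$ using part (ii)), hence $|\phi^\pm_{z,q}|^s$ is subharmonic as the $s$-th power of the modulus of an analytic function, and an average (over $q$) of subharmonic functions is subharmonic.
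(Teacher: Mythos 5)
Your treatment of parts (i), (ii) and (iii) (non-negativity/monotonicity) tracks the paper's computations closely. For the uniform-in-$(\kappa,E)$ limit (iii$'$), your $m=1$ argument (split near/far from poles, use equi-continuity and the integrable singularity) is the right idea, but note that the inductive step cannot be settled by "one simply notes $\{f_\alpha\}$ can be taken to include the iterates": equi-continuity of $\{f_\alpha\}$ does not trivially propagate to $\{T_{\kappa,E,s}f_\alpha\}_{\kappa,E,\alpha}$, and the paper has to prove this as a separate lemma (Lemma~\ref{lemmaequicont}). That lemma is the genuine work behind the induction; your proposal defers it without proof.

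The real gap is in part (iv) for $m\ge 2$. You correctly identify subharmonicity of $|\phi^\pm_{z,q}(\cdot)|^s$ (powers of moduli of holomorphic functions) and the maximum principle as the engine, which matches the paper. But your proposed induction does not close. The chain $g_{m+1}(w)\le(\sup_{\overline\RR}g_m)\,(T_{\kappa,z,s}1)(w)\le(\sup_{\overline\RR}g_m)(\sup_{\overline\RR}T_{\kappa,z,s}1)$ gives an upper bound that is in general strictly larger than $\sup_{\overline\RR}g_{m+1}$, so it cannot yield $\sup_{\overline{\CC^+}}g_{m+1}=\sup_{\overline\RR}g_{m+1}$. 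And the fallback "subharmonicity of the product" is false: the product of two subharmonic functions is not subharmonic (e.g. $\Re(z)$ and $-\Re(z)$ are harmonic but $-(\Re z)^2$ is superharmonic), so you cannot conclude that $g_m(\phi^+_{z,q}(w))\cdot(|\phi^+_{z,q}(w)|^s+|\phi^-_{z,q}(w)|^s)$ is subharmonic from subharmonicity of $g_m$ alone. The missing idea, which is what the paper uses, is to expand $T^m_{\kappa,z,s}1$ as a single $m$-fold integral over $q_1,\dots,q_m$; the integrand is $\prod_{j=1}^m\big(|\Phi^{+,j}_{z,\mathbf q}(w)|^s+|\Phi^{-,j}_{z,\mathbf q}(w)|^s\big)$, which distributes into a finite sum of terms of the form $|F(w)|^s$ with $F$ a product of the holomorphic maps $\Phi^{\pm,j}_{z,\mathbf q}$, hence holomorphic. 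Each such term is subharmonic, so the integrand is subharmonic, and integration preserves subharmonicity — this gives subharmonicity of $T^m_{\kappa,z,s}1$ directly, without any induction on a product of previously integrated quantities.
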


We defer the proof of these properties to Section~\ref{appendix} at the end of this paper, but include a few comments here. 

As will be seen in the proofs, the assumption $s<1/2$ is used in the proofs of parts (\ref{2}) and (\ref{2b}), while the other parts hold more generally for $s<1$.

Part (\ref{2b}) is a strengthening of the strong continuity property of $T_{\kappa,z,s}$ proved in part (\ref{2}) (at least at real energy $E$), showing that this continuity is uniform in the parameters $E$ and $\kappa$ and also holds for powers of the $T$-operators. We will use part (\ref{2b}) in the proof of Theorem~\ref{mainthm} with the trivial one-element family $\{f_{\alpha}\} = \{1\}$. That we state part (\ref{2b}) for general equi-continuous families $\{f_{\alpha}\}$ here is prompted by the technique used in its inductive proof in Section~\ref{appendix}.

In Proposition \ref{prop:reduction to FM} we have introduced the upper bound on the fractional moment in terms of a dynamical system on $C(\overline{\CC^+})$. Lemma \ref{prelim properties of T}\eqref{4} tells us that we may reduce to a dynamical system on $C(\overline{\RR})$. This was important for numerical computations and will also be used in the proof of Theorem~\ref{mainthm} in Section~\ref{sectionproof2} below. 


For real $E$ and any positive integer $m$, we will also need the strong continuity of $(T_{\kappa,E,s})^m$ on $C(\overline{\RR})$ as a function of $\kappa$, the coupling constant at the transversally periodic potential $Q_t$. This is the content of the next result.  In this context, $\|\cdot\|_{\infty}$ refers to the sup-norm on $C(\overline{\RR})$. 

\begin{lemma} \label{strong continuity of T to m in kappa}
Under Assumptions \ref{assumptions1} on the measures $\nu_0$, $\nu$ and $\sigma$, $s\in(0,1/2)$ and for any $m\in\NN$ we have strong continuity of $\kappa\mapsto(T_{\kappa,E,s})^m$ at $0$, uniformly in $E$. That is, for $f\in C(\overline{\RR})$, 
\beq
\lim_{\kappa\downarrow0} \sup_{E\in\RR} \big\|(T_{\kappa,E,s})^m f - (T_{0,E,s})^mf \big\|_\infty = 0.
\ee
\end{lemma}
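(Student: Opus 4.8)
The plan is to reduce the claim to the case $m=1$ by an induction on $m$, using Lemma~\ref{prelim properties of T}\eqref{1} to control the propagation of errors through the $T$-operator. Concretely, write the telescoping identity
\[
(T_{\kappa,E,s})^m f - (T_{0,E,s})^m f = \sum_{j=0}^{m-1} (T_{0,E,s})^j \bigl( T_{\kappa,E,s} - T_{0,E,s}\bigr) (T_{\kappa,E,s})^{m-1-j} f .
\]
Each term is bounded in sup-norm by $\|T_{0,E,s}\|_\infty^j \, \bigl\| (T_{\kappa,E,s}-T_{0,E,s})\, g_{\kappa,j}\bigr\|_\infty$ where $g_{\kappa,j} := (T_{\kappa,E,s})^{m-1-j} f$. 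By Lemma~\ref{prelim properties of T}\eqref{1} the operator norms are bounded uniformly in $E$ and $\kappa$, and the family $\{g_{\kappa,j} : \kappa\ge 0,\ E\in\RR,\ 0\le j\le m-1\}$ is uniformly bounded. So it suffices to show that $\sup_{E\in\RR}\|(T_{\kappa,E,s}-T_{0,E,s})g\|_\infty \to 0$ as $\kappa\downarrow 0$, uniformly over the relevant family of functions $g$; and for the induction to close I need this uniform convergence not just for a single $g$ but for any uniformly bounded, equi-continuous family of functions — so I should first establish, e.g. via an Arzel\`a--Ascoli argument, that the images $\{g_{\kappa,j}\}$ actually form such an equi-continuous family (this uses that $\phi^\pm_{E,q}$ and their $s$-powers are uniformly equi-continuous in the allowed range of parameters, which is where Assumptions~\ref{assumptions1} and $s<1/2$ enter, exactly as in the proof of Lemma~\ref{prelim properties of T}\eqref{2},\eqref{2b}).

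For the base case $m=1$, I would estimate
\[
\bigl|(T_{\kappa,E,s}g)(w) - (T_{0,E,s}g)(w)\bigr|
= \Bigl| \iiint \Bigl[ F_{E,\kappa,p,r}(w) - F_{E,0,p,r}(w)\Bigr]\, \nu(r)\,dr\, d\sigma(p_0,p_1)\Bigr|,
\]
where $F_{E,\kappa,p,r}(w) := g\bigl(\phi^+_{E,(r+\kappa p_0,r+\kappa p_1)}(w)\bigr)\bigl(|\phi^+_{E,(r+\kappa p_0,r+\kappa p_1)}(w)|^s + |\phi^-_{E,(r+\kappa p_0,r+\kappa p_1)}(w)|^s\bigr)$, and $(p_0,p_1)\in[-1,1]^2$. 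The integrand is dominated by an integrable function independent of $\kappa$ (using the $s$-power bounds that already underlie Lemma~\ref{prelim properties of T}\eqref{1} — the singularity $|w+E-q|^{-s}$ is integrable in $r$ since $s<1$), so by dominated convergence it is enough to show pointwise convergence of the integrand to $0$ as $\kappa\downarrow0$ for a.e. $(p_0,p_1,r)$, uniformly in $w$ and $E$. Here I use continuity of $q\mapsto \phi^\pm_{E,q}(w)$ together with continuity of $g$ and of $t\mapsto |t|^s$; the one point needing care is that $\phi^\pm_{E,q}(w)$ can blow up when $w+(E-q_i)/\sqrt2$ is near $0$, but after composing with $g$ (bounded) and extracting the $s$-power the relevant quantities stay controlled, and on the complement of a small neighborhood of the bad set the convergence is plainly uniform. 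The uniformity in $E$ is obtained because the bad set can be translated away: replacing $w$ by $w$ amounts to translating $E$, so all estimates depend on $E$ and $w$ only through the combination that is already handled uniformly in the proofs of Lemma~\ref{prelim properties of T}.

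The main obstacle I expect is the uniformity in $E$ (and the passage to powers $m>1$, which is why I want the equi-continuous-family formulation): a naive pointwise-in-$E$ argument gives convergence for each fixed $E$ but not a rate uniform over all of $\RR$, and near energies where the denominators in $\phi^\pm$ can be small the modulus of continuity of $w\mapsto F_{E,\kappa,p,r}(w)$ degenerates. I would handle this exactly as in the (deferred) proof of Lemma~\ref{prelim properties of T}\eqref{2},\eqref{2b}: split the integration region over $r$ (equivalently, over $q_i = r+\kappa p_i$) into a part where $|w+E-q_i|$ is bounded below — there the integrand is Lipschitz in $\kappa$ with an $E$-independent constant — and a small part where $|w+E-q_i|$ is small, whose $\mu$-measure is $O(\rho)$ for a cutoff $\rho$ and whose contribution to the integral is $O(\rho^{1-s})$ by the $s$-power bound, again uniformly in $E$ and $w$ because $\nu$ is bounded. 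Choosing $\rho$ small first and then $\kappa$ small yields the desired uniform limit, completing the base case and hence, by the telescoping induction above, the lemma.
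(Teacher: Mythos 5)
Your proposal is correct and follows essentially the same route as the paper: prove the $m=1$ case by splitting the $r$-integration (equivalently the $q$-integration) into a region where $|\sqrt2 w - r|$ is bounded below, where the closeness of $\phi^\pm_{r+\kappa p}(w)$ to $\phi^\pm_r(w)$ and (equi-)continuity of $f$ give smallness, plus a complementary region of small measure whose contribution is $O(\rho^{1-s})$ by the $|\cdot|^{-s}$-integrability; then pass to general $m$ by a telescoping/inductive decomposition, using the uniform operator bound from Lemma~\ref{prelim properties of T}(i) and the fact that the family of intermediate images is bounded and equi-continuous (Lemma~\ref{lemmaequicont}).

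A few small remarks. The paper uses the two-term decomposition $T_\kappa^m - T_0^m = T_\kappa^{m-1}(T_\kappa - T_0) + (T_\kappa^{m-1}-T_0^{m-1})T_0$ rather than the full telescope, but the effect is the same: in both cases one ends up needing the $m=1$ convergence uniformly over a family of functions depending on $E$ (and, in your version, also on $\kappa$), which is exactly why equi-continuity must be carried through — you are right to flag this, and in fact the paper is slightly elliptical here (it invokes the ``induction hypothesis for $T_0 f$'' even though $T_0 f = T_{0,E,s}f$ varies with $E$; the correct reading is that the $m=1$ estimate holds uniformly over equi-continuous bounded families, which is what the region-splitting argument actually delivers). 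Two minor points of precision: Arzel\`a--Ascoli is not really what you want (you are not extracting subsequences, you are proving equi-continuity of the images, which is Lemma~\ref{lemmaequicont}), and the appeal to dominated convergence as first stated cannot give uniformity in $w$ and $E$; but you immediately correct both by going to the region-splitting estimate, which is the argument that actually works. The paper also records a small simplification you may find convenient: since $E$ and $w$ enter only through $\sqrt2 w + E$, once one takes $\sup_{w\in\overline\RR}$ the $\sup_E$ is automatic and one may set $E=0$.
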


Again, we defer the proof to Section~\ref{appendix}.

\section{Estimate for zero coupling}\label{sectionkappazero}

In this section we set $\kappa=0$ and consider the operator $T_{E,s}:C(\overline\RR)\rightarrow C(\overline\RR)$ defined by
\begin{align*}
(T_{E,s}f)(w) &= \EE_r \left[f(\phi^+_{E,r}(w))\,|\phi^+_{E,r}(w)|^s\right]\\
&= \int f\left(\frac{-1}{w+(E-r)/\sqrt{2}}\right)\left|\frac{-1}{w+(E-r)/\sqrt{2}}\right|^{s}\nu(r) dr\\
&= \int f\left(\frac{-1}{w+\widetilde{E}-q}\right)\left|\frac{-1}{w+\widetilde{E}-q}\right|^{s}\widetilde{\nu}(q) dq,
\end{align*}
where $\widetilde{E}=E/\sqrt{2}$ and $\widetilde{\nu}(q) = \sqrt{2}\,\nu(\sqrt{2} q)$. For the remainder of this section, we drop the tildes and let $\EE$ denote the expectation with respect to the rescaled measure.
Our goal in this section is to prove the following lemma up to a key estimate, \eqref{eq-est-0}, whose proof is the topic of the next section.

\begin{lemma}\label{onedimlemma}
For every $E_0\in\RR$ there exist an open interval $I$ containing $E_0$, $s\in (0,1/2)$ and $m\in\NN$ such that
\begin{equation} \label{strictcontract}
\sup_{E\in I} \|T_{E,s}^m 1\|_\infty < 1.
\end{equation}
\end{lemma}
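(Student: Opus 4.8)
The plan is to analyze the operator $T_{E,s}$ on $C(\overline{\RR})$ directly and exhibit an integer $m$ and exponent $s$ for which $T_{E,s}^m 1$ is uniformly strictly below $1$ on a neighborhood of $E_0$. The starting observation is that for $s=0$ the operator $T_{E,0}$ acts as $(T_{E,0}f)(w) = \EE_r[f(-1/(w+E-q))]$, which is the transfer operator of the one-dimensional dynamical system $w\mapsto -1/(w+E-q)$; this is a Markov (averaging) operator, so $T_{E,0}1 = 1$ identically. Thus at $s=0$ we sit exactly at the boundary $\|T_{E,0}^m 1\|_\infty = 1$, and the game is to show that turning on a small positive $s$ pushes the iterates strictly below $1$. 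The mechanism for this is that $|\phi^+_{E,q}(w)|^s = |w+E-q|^{-s}$ acts as a weight which is $<1$ whenever $|w+E-q|>1$ and only exceeds $1$ in the region $|w+E-q|<1$; since the map contracts points towards $0$ and then the next step sends $0$-neighborhoods to large values, no orbit can remain in the ``bad'' region $|w+E-q|<1$ for too many consecutive steps, so over $m$ iterations the cumulative weight is genuinely less than $1$. Making this quantitative is where the key estimate \eqref{eq-est-0} enters: it is precisely the statement (to be proved in the next section) that provides a pointwise bound on $T_{E,s}^m 1$, or equivalently a bounding function that decreases under $T$, as described in the introduction.

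Concretely, the steps I would carry out are: (1) Fix $E_0$ and choose a small interval $I\ni E_0$; all estimates below will be uniform for $E\in I$. (2) Invoke the key estimate \eqref{eq-est-0} to produce, for suitable small $s>0$, a continuous function $h\ge 1$ (or $h\ge$ the relevant constant) on $\overline{\RR}$ with the property that $T_{E,s}h \le \theta h$ pointwise for some $\theta<1$, uniformly in $E\in I$ — this is the ``special bounding function'' alluded to in the introduction. Since $1\le h$ and $T_{E,s}$ is positivity-preserving and monotone (Lemma~\ref{prelim properties of T}\eqref{3}), iterating gives $T_{E,s}^m 1 \le T_{E,s}^m h \le \theta^m h$, so choosing $m$ large enough that $\theta^m \|h\|_\infty < 1$ yields \eqref{strictcontract}. (3) Check that the reduction to $C(\overline\RR)$ (rather than $C(\overline{\CC^+})$) is legitimate here — this is automatic since in this section $\kappa=0$ and we have defined $T_{E,s}$ directly on $C(\overline\RR)$, and the rescaling by $\sqrt 2$ absorbed into $\widetilde E,\widetilde\nu$ does not affect any of the structure (the support of $\widetilde\nu$ is still compact, its density still bounded). (4) Handle the point at infinity: $\phi^+_{E,q}(i\infty)=0$, so $(T_{E,s}1)(i\infty)$ is finite and in fact the orbit of $i\infty$ immediately enters the finite part, so no separate argument is needed there beyond continuity of the relevant functions on the compactification.

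The main obstacle — and the reason this lemma is stated ``modulo a key estimate'' — is step (2): producing the bounding function $h$ and the contraction factor $\theta<1$ with the required uniformity in $E\in I$. The difficulty is genuinely global in $w$: the weight $|w+E-q|^{-s}$ blows up (in the limit $s\uparrow$, or rather is largest) near $w+E-q = 0$, which is exactly where the map $\phi^+_{E,q}$ has a pole and sends nearby points to large values; so one needs to trade the large weight at one step against the smallness of the weight at the following step(s), and to do so with constants that do not degenerate as $E$ ranges over $I$ or as $w\to\infty$. This is precisely the content of Section~\ref{sectionkey}, where a Jensen-type inequality in hyperbolic geometry (the paper's advertised key tool) is used; here I simply record that once \eqref{eq-est-0} is available, the passage to \eqref{strictcontract} is the short monotone-iteration argument sketched above. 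One should also be slightly careful that $s$ must be taken in $(0,1/2)$ so that Lemma~\ref{prelim properties of T} applies (boundedness and the needed continuity of $T_{E,s}$ and its powers), and small enough that the bounding-function estimate holds; both are finitely many constraints, so a common valid $s$ exists.
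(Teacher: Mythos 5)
Your high-level strategy---find a bounding function that $T_{E,s}$ contracts pointwise and then use the monotonicity of Lemma~\ref{prelim properties of T}\eqref{3} to push $T_{E,s}^m 1$ below $1$---is precisely the paper's strategy, so the skeleton is sound. But your step (2) as stated contains an inaccuracy that would mislead anyone filling it in: you ask for a bounding function $h\ge 1$ with $T_{E,s}h\le\theta h$. The bounding functions actually used are $\varphi_\zeta(w)^s=|w-\zeta|^{-s}$ from \rf{boundingfcn} with $\zeta\in\CC^+$, and these are \emph{not} $\ge 1$: they vanish as $|w|\to\infty$, so $1\le\varphi_\zeta^s$ fails and the chain $T^m 1\le T^m h\le\theta^m\|h\|_\infty$ cannot be initiated that way. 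Instead the paper inserts a separate initial comparison, Lemma~\ref{gbddlemma}(i), giving $T_{E,s}1\le A\,\varphi_\zeta^s$ for a finite $A$ uniformly in $E\in I$, and then iterates: $T_{E,s}^m 1\le A\,T_{E,s}^{m-1}\varphi_{\zeta_0}^s\le A\,\delta^{m-1}(\Im\zeta_0)^{-s}<1$ for $m$ large. (Your literal framework can be salvaged by taking $h=1+c\,\varphi_{\zeta_0}^s$ with $c\ge A/(\theta-\delta)$ for some $\theta\in(\delta,1)$, but verifying $T_{E,s}h\le\theta h$ for such $h$ already requires both estimates $T1\le A\varphi^s$ and $T\varphi^s\le\delta\varphi^s$, so nothing is gained over the paper's two-step version.)

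A second, related vagueness is how \rf{eq-est-0}---a statement about a logarithmic moment at a specific $\zeta$---actually becomes the pointwise contraction $T_{E,s}\varphi_{\zeta_0}^s\le\delta\,\varphi_{\zeta_0}^s$ uniformly in $E\in I$ and $w\in\overline\RR$. That is the content of Lemma~\ref{decglemma}, and it is not immediate: one sets $F_\zeta(w,s,E)=\varphi_\zeta(w)^{-s}(T_{E,s}\varphi_\zeta^s)(w)$, notes $F_\zeta(w,0,E)=1$, computes $\partial_s F_\zeta|_{s=0}$ as the log-moment \rf{logmoment}, and reads \rf{eq-est-0} as saying this derivative is $<-\epsilon_1$ uniformly in $w$ at $E_0$. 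But that alone does not give uniformity in $(w,s,E)$ because the open set where $\partial_s F_{\zeta_0}<-\epsilon_1$ need not contain a full tube $\overline\RR\times[0,s_0)\times I$; the paper must separately control large $|w|$ by the explicit bound \rf{largew} (which is where $|\zeta_0|>1$ enters) and then use compactness of a bounded $w$-interval to propagate the derivative estimate via \rf{smallw}. These two steps are where the interval $I$ and the admissible range of $s$ actually come from, and your sketch---which defers everything to ``the key estimate''---does not account for them.
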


\begin{remark}\label{rm:1D-proof}
Before proceeding, let us briefly explain how the methods of this paper yield a new proof of localization for the one-dimensional Anderson model, as was first proved under the assumptions used here in \cite{KS} by a different method. Thus let $H=-\Delta + q$ in $\ell^2(\ZZ)$ with i.i.d.\ random potential $q$ whose distribution has a bounded and compactly supported density. For $m<n$ the resolvent of $H$ satisfies an analogue of (\ref{full Green}),
\[
\langle \delta_m, (H-z)^{-1} \delta_n \rangle = \langle \delta_m, (H-z)^{-1} \delta_m \rangle \prod_{j=m+1}^n \langle \delta_j, (H_j-z)^{-1} \delta_j \rangle,
\]
where $H_j$ is the restriction of $H$ to $\ZZ \cap [j,\infty)$. As in Proposition~\ref{prop:reduction to FM} this leads to the bound $\EE(|\langle \delta_m, (H-z)^{-1} \delta_n \rangle|^s) \le C \|T_{z,s}^{n-m} 1\|_{\infty}$, where, in this context, $T_{z,s}$ is the simplified dynamical system given by (\ref{1Ddynsyst}). By an analogue of Lemma~\ref{onedimlemma} this leads to the fractional moments bound
\[
\sup_{E\in I, \epsilon>0} \EE(| \langle \delta_m, (H-E-i\epsilon)^{-1} \delta_n \rangle|^s) \le C \ell^{-(n-m)}
\]
for some $\ell>1$, using also the arguments in Section~\ref{sectionproof2} below.  For the one-dimensional case this is known to imply spectral as well as dynamical localization, e.g.\ \cite{Sto}.
\end{remark}

To motivate the following proof of Lemma~\ref{onedimlemma}, let us start by discussing some previous work. Consider the sequence of positive functions $(T_{E,s}^m 1)(w)$ for $m=0,1,2,\ldots$. We want to show that the sup norm is eventually below $1$. For large disorder, corresponding to the operator $-\Delta + \lambda Q_r$ with $|\lambda|$ large, the integration over $Q$ achieves this after one step. This is essentially the Aizenman-Molchanov proof of high disorder localization in one dimension. For arbitrary disorder, the sup norms of the sequence may initially grow, but numerical experiments \cite{L} suggested that they always eventually decay to zero. In order to prove this, we first observe that the monotonicity property, Lemma \ref{prelim properties of T}(iii), implies that if iterates of a positive bounding function decrease to zero, then iterates of any positive function lying below the bounding function decrease to zero as well. We look for such a bounding function in the class of functions 
\begin{equation}\label{boundingfcn}
\varphi_\zeta(w)^s := |w-\zeta|^{-s}
\end{equation}
indexed by $\zeta\in \CC^+$. Then $T_{E,s}$ decreases $\varphi_\zeta(w)^s$ after one step if $F_\zeta(w,s,E)<1$ where
\begin{align}
F_\zeta(w,s,E) :&= \varphi_\zeta(w)^{-s}\left(T_{E,s}\varphi_\zeta(\cdot)^{s} \right)(w)\notag\\ 
&= \EE\left[\left|\frac{w-\zeta}{\zeta(w+1/\zeta+E-q)}\right|^s\right].\label{decg}
\end{align}
In the limit of zero disorder $\nu$ converges to a delta function at $q=0$. In this case we can choose $\zeta_0$ to be one of the fixed points of $\zeta\mapsto -1/\zeta -E$.
For $|E|\leq 2$ these lie on the unit circle and we find $F_{\zeta_0}(w,s,E)=1$,
for $|E|>2$ we can choose a fixed point on the real line with absolute value $>1$ so that $F_{\zeta_0}(w,s,E)<1$. For the critical energies $|E_0|\leq 2$ we can construct a proof for low disorder by computing a perturbation series in a disorder parameter. 
This was done in \cite{L}. In this paper, the key estimate \eqref{eq-est-0}, which will be shown in the next section, allows us to bound $F_\zeta(w,s,E)$ for small $s$ and $E$ near $E_0$ for any disorder.

We now begin the proof of Lemma \ref{onedimlemma}. We start with some estimates on the bounding function.

\begin{lemma}\label{gbddlemma}
Let $\varphi_\zeta(w)^s$ be the bounding function defined by \rf{boundingfcn}. Then
\begin{enumerate}[{\rm (i)}]
\item For every bounded interval $I\subset \RR$, $\zeta\in\CC^+$ and $s\in(0,1)$, there exists an $A<\infty$ such that $(T_{E,s}1)(w) \le A \varphi_\zeta(w)^{s}$ for all $E\in I$ and all $w\in\RR$.
\item $\varphi_\zeta(w)^s \le \Im(\zeta)^{-s}$ for all $w\in\RR$.
\end{enumerate}
\end{lemma}
\begin{proof}
We have
\begin{align*}
\varphi_\zeta(w)^{-s}(T_{E,s}1)(w) &= \EE\left[\left|\frac{w-\zeta}{w+E-q}\right|^s\right]\\
&\le 1 + \EE\left[\left|\frac{q-E-\zeta}{w+E-q}\right|^s\right]\\ 
&\le A(\zeta,K,I,s).
\end{align*}
uniformly in $E\in I$ and $w\in \RR$ (which uses boundedness of $I$ and supp$\,\nu$). This proves (i). Part (ii) is immediate.
\end{proof}

\begin{lemma}\label{decglemma}
For every $E_0\in\RR$ there exist a $\zeta_0\in\CC^+$ with $|\zeta_0|>1$, an open interval $I$ containing $E_0$ and $\delta<1$  such that
\begin{equation}\label{decg2}
\sup_{E\in I}\|\varphi_{\zeta_0}^{-s}\left(T_{E,s}\varphi_{\zeta_0}^{s} \right)\|_\infty \le \delta.\
\end{equation}
for small non-zero $s$.
\end{lemma}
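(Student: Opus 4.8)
The plan is to analyze the function $F_{\zeta}(w,s,E) = \varphi_\zeta(w)^{-s}(T_{E,s}\varphi_\zeta^s)(w)$ given in \eqref{decg} and show it can be pushed strictly below $1$, uniformly in $w\in\overline{\RR}$ and $E$ near $E_0$, by choosing $\zeta_0$ well and then taking $s$ small. Writing out \eqref{decg}, we must control
\[
F_\zeta(w,s,E) = \EE\left[\left|\frac{w-\zeta}{\zeta\,(w+1/\zeta+E-q)}\right|^s\right].
\]
I would split the argument according to the behaviour at $s=0$ and the behaviour as $w\to\infty$. First, observe that at $s=0$ the integrand is identically $1$, so $F_\zeta(w,0,E)=1$ for every $\zeta,w,E$; thus the lemma cannot follow merely from pointwise smallness at $s=0$, and one must extract a negative derivative in $s$ at $s=0$. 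Differentiating under the expectation,
\[
\frac{\partial}{\partial s}\Big|_{s=0} F_\zeta(w,s,E) = \EE\left[\log\left|\frac{w-\zeta}{\zeta\,(w+1/\zeta+E-q)}\right|\right],
\]
and the goal becomes to choose $\zeta_0$ (depending on $E_0$) with $|\zeta_0|>1$ so that this logarithmic expectation is strictly negative, uniformly for $w\in\overline{\RR}$ and $E$ in a neighborhood of $E_0$.

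The content behind that negativity is the key estimate \eqref{eq-est-0} of the next section, which I would invoke as a black box here: it should provide precisely a uniform bound of the form $\sup_{w}\EE[\log|\cdots|] \le -c < 0$ (or equivalently an $L^1$-type Jensen bound in hyperbolic space) once $\zeta_0$ is taken to be close to — but with modulus strictly exceeding $1$ — an appropriate fixed point of $\zeta\mapsto -1/\zeta - E_0$, as motivated in the discussion preceding Lemma~\ref{gbddlemma}. Granting \eqref{eq-est-0}, the scheme is: (1) fix $E_0$, pick $\zeta_0\in\CC^+$ with $|\zeta_0|>1$ adapted to $E_0$ as in that discussion; (2) use \eqref{eq-est-0} to get $\partial_s F_{\zeta_0}(w,s,E)|_{s=0}\le -c<0$ uniformly in $w\in\overline{\RR}$ and for $E$ in a neighborhood $I_1$ of $E_0$ (continuity/compactness of the dependence on $E$, together with boundedness of $\mathrm{supp}\,\nu$, lets us shrink to a closed interval and keep the bound uniform); (3) control the $s$-derivative for small $s>0$ by a second-order Taylor estimate — here one needs a uniform bound on $\partial_s^2 F_{\zeta_0}$ on a strip $0\le s\le s_1$, which follows from the fact that $\log|w+1/\zeta_0+E-q|$ and its square are integrable with a bound uniform in $w\in\RR$ because $\Im(1/\zeta_0)>0$ keeps the denominator away from $0$, and the behaviour at $w=i\infty$ is handled by the explicit limit $F_{\zeta_0}(i\infty,s,E)=|{-1/\zeta_0}|^s$ which is $<1$ since $|\zeta_0|>1$; (4) combine (2) and (3): $F_{\zeta_0}(w,s,E) \le 1 - cs + C s^2 \le 1 - \tfrac{c}{2}s =: \delta < 1$ for all $s\in(0,s_1']$ with $s_1'$ small enough, uniformly in $w$ and $E\in I_1$. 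Taking $I:=I_1$ and this $\delta$ gives \eqref{decg2}.

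The main obstacle is step (3), the uniform-in-$w$ control near $s=0$: the expectation defining $F_{\zeta_0}$ is over the rescaled potential $q$ with bounded density and compact support, but one must check that the $\log$ and $\log^2$ moments of $|w+1/\zeta_0+E-q|^{-1}$ are bounded \emph{uniformly in $w\in\RR$} and then match this against the near-$1$ behaviour of $|(w-\zeta_0)/\zeta_0|^{s}$. The point $w=\infty$ is the delicate end, but it is settled by the explicit value $-1/\zeta_0$ of $\phi^+_{E,0}$ there and the hypothesis $|\zeta_0|>1$; and the point where $|w-\zeta_0|$ is small or large is handled because $\Im\zeta_0>0$ makes $\varphi_{\zeta_0}(w)^s$ bounded above and below on any compact $w$-range, with the tails controlled by the same compact-support argument as in Lemma~\ref{gbddlemma}(i). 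Once \eqref{eq-est-0} is in hand, the rest is a routine Taylor-expansion-in-$s$ argument of exactly the type already used for the initial bound \eqref{initial bound}.
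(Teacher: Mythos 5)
Your plan is correct in substance and reaches the same conclusion by a somewhat different route than the paper. The paper also starts from $F_{\zeta_0}(w,0,E)=1$, computes the logarithmic $s$-derivative and invokes the key estimate \eqref{eq-est-0} to make it strictly negative at $E_0$ uniformly in $w$; but it then splits into two $w$-regions. For $|w|\ge W$ it bounds the ratio inside the expectation directly by some $\delta_1<1$ (using $|\zeta_0|>1$ and compactness of $\mathrm{supp}\,\nu$), so $F_{\zeta_0}\le\delta_1^s<1$ there with no Taylor expansion needed; on the compact region $[-W,W]$ it uses only joint continuity of the \emph{first} derivative $\partial_s F$ together with openness of $\{\partial_s F<-\epsilon_1\}$ to find $[0,s_0)\times I$ on which $\partial_s F<-\epsilon_1$, and integrates once to get $F\le 1-\epsilon_1 s$. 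Your proposal instead keeps a single, global-in-$w$ Taylor expansion to second order and bounds $\partial_s^2 F_{\zeta_0}$ uniformly over $\overline\RR\times[0,s_1]\times I$. That is a valid alternative — and indeed the ratio $R(w,q,E)=|(w-\zeta_0)/(\zeta_0(w+1/\zeta_0+E-q))|$ is bounded above and below uniformly because $\Im\zeta_0>0$ and $\Im(1/\zeta_0)\ne 0$ keep both factors away from $0$ and $\infty$, so $|\log R|$ and hence $\partial_s^2 F$ are bounded. What the paper's route buys is that it never needs a second-derivative estimate (only continuity of $\partial_s F$ and compactness), and the large-$w$ region is dispatched by a one-line direct bound rather than by tracking the limiting value $|\zeta_0|^{-s}$.

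Two small inaccuracies in your write-up: you state ``$\Im(1/\zeta_0)>0$'' — for $\zeta_0\in\CC^+$ one has $\Im(1/\zeta_0)=-\Im\zeta_0/|\zeta_0|^2<0$, though the conclusion you want (denominator bounded away from $0$) of course only needs $\Im(1/\zeta_0)\ne 0$. And your step (2), extending the uniform-in-$w$ negativity of $\partial_s F(w,0,\cdot)$ from $E_0$ to an interval $I$, is stated as ``continuity/compactness'' but actually requires an equicontinuity-in-$E$ argument that is \emph{uniform in $w\in\overline\RR$}, since $\overline\RR$ is where the $\sup$ runs; this is true here (e.g.\ by bounding $\partial_E\partial_s F|_{s=0}$ uniformly using $\Im(1/\zeta_0)\neq 0$ and compact support of $\nu$), but it is not automatic from pointwise continuity, and the paper sidesteps this by confining the compactness argument to $|w|\le W$ where joint continuity suffices.
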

\begin{proof}
The inequality \rf{decg2} can be written as
\begin{equation}\label{decg3}
\sup_{E\in I}\sup_{w\in\RR} F_{\zeta_0}(w,s,E) \le \delta,
\end{equation} 
where $F_\zeta$ is defined by \rf{decg}. The choice $\zeta \in \CC^+$ avoids singularities. Thus it is easy to see that $F_{\zeta}(w,s,E)$ is jointly continuous in $(w,s,E) \in \RR \times [0,1) \times \RR$ and
\begin{equation*} 
F_\zeta(w,0,E)=1
\end{equation*}
for all $\zeta\in \CC^+$, $w\in\RR$ and $E\in\RR$.
Furthermore, $F_\zeta(w,s,E)$ is differentiable in $s$,
$$
\left(\frac{\partial F_\zeta}{\partial s}\right)(w,0,E) = \EE \left[ \left| \frac{w-\zeta}{\zeta(w+1/\zeta +E-q)} \right|^s \log \left| \frac{w-\zeta}{\zeta(w+1/\zeta +E-q)} \right| \right]
$$
is jointly continuous on the same domain and
\begin{eqnarray}
\left(\frac{\partial F_\zeta}{\partial s}\right)(w,0,E) &= & \EE \left[ \log \left| \frac{w-\zeta}{\zeta(w+1/\zeta +E-q)} \right| \right] \label{logmoment}
\\
& = & -\frac{1}{2}\EE\left[\log \frac{|\zeta|^2 |w+1/\zeta+E-q|^2}{|w-\zeta|^2}\right]. \notag
\end{eqnarray}
The key estimate \eqref{eq-est-0} applied to $Q=q-E_0$ implies that there exist a $\zeta_0\in\CC^+$ with $|\zeta_0|>1$ and an $\epsilon_1 >0$ such that for all $w\in\RR$,
\begin{equation}\label{Fsneg}
\left(\frac{\partial F_{\zeta_0}}{\partial s}\right)(w,0,E_0) < -\epsilon_1.
\end{equation}
For this choice of $\zeta_0$ we now control $F_{\zeta_0}$ for large $w$ using equation \eqref{decg}. In order to do this, notice that
\begin{equation}
\left|\frac{w-\zeta_0}{\zeta_0(w+1/\zeta_0 +E-q)}\right|
\le \frac{1}{|\zeta_0|}\left(1 + \frac{|1/\zeta_0 +E-q+\zeta_0|}{|w+1/\zeta_0 +E-q|}\right).
\end{equation}
Therefore, since the support of $\nu$ is bounded and because $|\zeta_0|>1$, there exist a $\delta_1<1$ and a constant $W$ such that 
\begin{equation*}
\sup_{|E-E_0|\le 1} \sup_{|w|\ge W}\left|\frac{w-\zeta_0}{\zeta_0(w+1/\zeta_0 +E-q)}\right| 
\le \delta_1
\end{equation*}
for all $q \in \,\mbox{supp}\,\nu$, so that 
\begin{equation}\label{largew}
\sup_{|E-E_0|\le 1} \sup_{|w|\ge W} F_{\zeta_0}(w,s,E) \le \delta_1^s.
\end{equation}
We now control $F_{\zeta_0}$ for small $w$. The set $\left\{(w,s,E) : \frac{\partial F_{\zeta_0}}{\partial s}(w,s,E) < -\epsilon_1\right\}$ is open and contains $[-W,W]\times\{0\}\times\{E_0\}$. It therefore contains a set $[-W,W]\times\ [0,s_0)\times I$, for some open interval $I$ containing $E_0$. For $|w|\le W$, $s<s_0$ and $E\in I$ we then obtain
\begin{equation}\label{smallw}
F_{\zeta_0}(w,s,E) = 1 + \int_0^s \left(\frac{\partial F_{\zeta_0}}{\partial s}\right)(w,s',E)ds' \le 1-\epsilon_1 s.
\end{equation}
Shrinking $I$ if needed so that $|E-E_0|\le 1$ for $E\in I$, we can combine \rf{smallw} with \rf{largew} to obtain \rf{decg3} for small non-zero $s$.
\end{proof}

\begin{rem} 
The expression (\ref{logmoment}) can be viewed as a {\it logarithmic moment} of the random variable $q$, which appears in our method as a limiting case of fractional moments for $s\to 0$. This is not surprising, as it has been observed previously that applying the fractional moment method in 1D requires choosing $s$ close to $0$, e.g.\ \cite{Hamzaetal}.
\end{rem}

We can now prove the main result of this section.

\begin{proof}[Proof of Lemma \ref{onedimlemma}] Given $E_0\in\RR$ we choose $\zeta_0$, $I$, $s$ and $\delta<1$ according to Lemma~\ref{decglemma} and let $A$ be the corresponding bound from Lemma \ref{gbddlemma}. Then, using the monotonicity property, Lemma \ref{prelim properties of T}(iii), of $T_{E,s}$ we find that for $E\in I$,
$$
\|T_{E,s}^m 1\|_\infty \le A\|T_{E,s}^{m-1} \varphi_{\zeta_0}^s\|_\infty
\le A\delta^{m-1}\|\varphi_{\zeta_0}^s\|_\infty \le A\delta^{m-1}(\Im \zeta_0)^{-s}.
$$
Choosing $m$ sufficiently large completes the proof.
\end{proof}

\section{Proof of the key estimate}\label{sectionkey}

The only part missing in the proof of Lemma~\ref{onedimlemma} (and Lemma~\ref{decglemma}) is a justification for the inequality \eqref{Fsneg}. 
In fact, this inequality follows from the following theorem which is the main goal of this section.

\begin{theorem}\label{keyestthm}
 Let $Q$ be a bounded real valued random variable
 whose distribution is supported on at least 3 points (i.e. the distribution of
 $Q$ is not a single or the sum of two delta measures).
 Then there exists a $\zeta\in \CC^+$ with $|\zeta|>1$ such that 
\begin{equation}\label{eq-est-0}
\inf_{w\in\overline\RR} \EE\left[\log \frac{|\zeta|^2|w+1/\zeta-Q|^2}{|w-\zeta|^2}\right] > 0\;.
\end{equation}
\end{theorem}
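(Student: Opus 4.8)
I want to find $\zeta \in \CC^+$ with $|\zeta|>1$ so that the function
\[
\Lambda(w) := \EE\!\left[\log\frac{|\zeta|^2|w+1/\zeta-Q|^2}{|w-\zeta|^2}\right]
\]
is bounded below by a positive constant uniformly in $w\in\overline\RR$. The natural strategy is a \emph{perturbative} one: choose $\zeta$ close to (but not on) a point of the unit circle, and extract the leading-order behaviour of $\Lambda$ in the distance of $\zeta$ from that circle. Write $\zeta = e^{i\theta}(1+\delta)$ with small $\delta>0$; then $|\zeta|>1$ automatically. The Möbius map $\mu_\zeta(w) := -1/(w+1/\zeta) $ (up to the shift by $Q$) is elliptic/parabolic/hyperbolic according to whether $|\zeta|$ equals, is less than, or is greater than $1$; the quantity inside the logarithm is, up to sign, the logarithmic derivative governing how this map contracts. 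So at $\delta=0$ one expects $\Lambda(w)\equiv 0$ for the $Q\equiv 0$ case, and the whole point is that a genuinely random $Q$ supported on $\geq 3$ points breaks this degeneracy at first order.

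\textbf{Key steps.} First I would record the exact identity, valid for $|\zeta|=1$, that makes $\Lambda\equiv 0$ when $Q\equiv 0$: with $|\zeta|^2=1$, $|\zeta|^2|w+1/\zeta|^2 = |\zeta|^2|w|^2 + 2\re(\bar\zeta w)+1 = |w|^2+2\re(\bar\zeta w)+|\zeta|^2 = |w-(-\zeta)|^2$ after using $\bar\zeta=1/\zeta$ — i.e. $|\zeta|\,|w+1/\zeta| = |w+\zeta|$ on the circle, but $|w-\zeta|\neq|w+\zeta|$ in general, so even the deterministic term is not identically zero and one must be more careful. In fact the correct normalization is to compare with the \emph{harmonic measure / hyperbolic geometry} picture: the claim is precisely a Jensen-type strict inequality. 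So instead I would parametrize $w\in\overline\RR$ by its image under the Cayley transform into the unit circle or disk, regard $\log|w+1/\zeta-Q|^2 - \log|w-\zeta|^2$ as (twice) the difference of two potentials, and compute $\EE$ as an average over $Q$. Second, I would expand in $\delta$: write $\Lambda(w) = \delta\,\Lambda_1(w) + O(\delta^2)$, uniformly on $\overline\RR$ (compactness of $\overline\RR$ and boundedness of $Q$ make the $O(\delta^2)$ uniform), and compute $\Lambda_1$ explicitly. I expect $\Lambda_1(w)$ to be a strictly positive continuous function of $w\in\overline\RR$ precisely when $Q$ has at least three atoms — this is where the hypothesis enters, presumably because $\Lambda_1$ is (a constant times) the variance of $\re\big(\text{(something depending on } w)\cdot \text{function of }Q\big)$, or more likely an integral of such a variance over $\theta$, which vanishes identically in $w$ only if $Q$ is supported on $\leq 2$ points (a two-point distribution can be "absorbed" into a single Möbius conjugation, the same way two delta measures appear as the obstruction). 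Third, having shown $\inf_{w}\Lambda_1(w) =: c_1 > 0$ (using continuity on the compact set $\overline\RR$, so the infimum is attained and positive), pick $\delta$ small enough that $\Lambda(w) \geq \tfrac{1}{2}c_1\delta > 0$ for all $w$, and also pick $\theta$ appropriately (I suspect one has freedom in $\theta$, or must choose it to avoid a bad direction; the averaging over the whole circle would be the robust choice if a single $\theta$ does not obviously work).

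\textbf{The main obstacle.} The crux is Step two–three: identifying the first-order term $\Lambda_1(w)$ and proving it is \emph{strictly positive for every} $w\in\overline\RR$ under the three-point hypothesis, rather than merely nonnegative or positive for a.e. $w$. The danger is that $\Lambda_1$ could vanish at isolated $w$ (e.g. $w=\infty$, or $w$ at a symmetry point) even when $Q$ is nondegenerate, which would force a second-order analysis there and complicate the uniform bound. I would handle this by either (a) choosing $\zeta$'s argument $\theta$ as a free parameter and showing that for each $w$ there is an open set of good $\theta$, then using a compactness/continuity argument to find one $\theta$ good for all $w$ simultaneously; or (b) averaging $\Lambda$ over $\theta\in[0,2\pi)$ against a suitable weight, reducing the claim to: the $\theta$-averaged first-order term is a positive multiple of $\var(Q)$ plus a positive multiple of $\var(Q^2)$ (or some moment combination) that is nonzero exactly when $Q$ is not supported on $\leq 2$ points, uniformly in $w$ because after averaging the $w$-dependence becomes an overall positive harmonic factor. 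A secondary (routine) obstacle is justifying the uniformity of the $O(\delta^2)$ remainder near the points where $w\to\infty$ and where $\zeta$ approaches the circle; this is controlled by working on the compact space $\overline\RR$ throughout and keeping $|\zeta|$ bounded away from both $0$ and $\infty$.
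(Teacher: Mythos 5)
Your perturbative ansatz $\zeta=e^{i\theta}(1+\delta)$ with $\delta$ small does not work, and the failure is structural rather than a remainder estimate you could tighten. Write $\Lambda_\delta(w)$ for the expectation whose infimum you want to bound, and $\Lambda_0$ for its value at $\delta=0$, i.e.\ with $\zeta$ on the unit circle. Your plan implicitly needs $\inf_w \Lambda_0(w) \ge 0$ so that the first-order term can tip the balance, but $\Lambda_0$ is genuinely negative at finite $w$. Substituting $v=w-\cos\theta$ and $\tilde Q:=Q-2\cos\theta$ (shifting so the means match when possible), a short Taylor expansion gives $\Lambda_0 = -\var(\tilde Q)/v^2 + O(|v|^{-3})<0$ for $|v|$ large but finite; if $\EE[\tilde Q]\neq0$ the leading term is $-2\EE[\tilde Q]/v$, still negative for one sign of $v$. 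Since $\Lambda_\delta(w)\to\Lambda_0(w)$ pointwise as $\delta\downarrow 0$, as soon as one has a single $w_0$ with $\Lambda_0(w_0)<0$ one also has $\Lambda_\delta(w_0)<0$ for all sufficiently small $\delta$. So the infimum in \eqref{eq-est-0} is negative for every $\zeta$ close enough to the unit circle, regardless of the angle $\theta$, and averaging over $\theta$ as in your alternative (b) cannot rescue this because the defect is already present at each fixed $\theta$. The correct $\zeta$ sits at a distance from the circle determined by the distribution of $Q$, not at an infinitesimal distance.

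The paper's mechanism is quite different and is what you are missing. One rewrites the integrand as $f\big(u+\alpha(\zeta,Q)\big)-f(u+i)$, where $f(z)=\log(|z|^2/\Im z)$ and $\alpha(\zeta,Q)$ is a normalized version of the M\"obius image; one then observes that $f$ is geodesically convex on the hyperbolic plane $\HH$ (and strictly so on geodesics avoiding $0$). The right $\zeta$ is characterized implicitly as the solution of the barycenter equation $\EE_\HH^{(2)}[\alpha(\zeta,Q)]=i$, where $\EE_\HH^{(2)}$ is the double-average $d^2$-barycenter in the sense of Sturm. Existence of such a $\zeta$ with $|\zeta|>1$ is proved by a winding-number argument over a rectangle in $\HH$ (Lemma~\ref{findzeta}), with $\var(Q)>0$ controlling the boundary behaviour. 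Then Jensen's inequality for barycenters in nonpositive curvature yields $\EE[f(u+\alpha)]\ge f(u+i)$, with strictness coming from the three-support-point hypothesis (which forces the connecting geodesic between two i.i.d.\ copies of $\alpha$ to avoid $0$ with positive probability). The ingredients absent from your sketch are precisely the hyperbolic-geometry framework, the geodesic convexity of $f$, the barycenter/Jensen machinery, and the topological degree argument that locates $\zeta$; none of these are recoverable from a local expansion around $|\zeta|=1$.
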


In order to prove this, we first define for $b>0$
\begin{equation}
\label{eq-def-alpha}
\alpha:=\alpha(a+bi,Q) := \frac{1}{b}\left(a+\frac{a}{a^2+b^2} - Q \right) + \frac{i}{a^2+b^2}\,.
\end{equation}
Then one finds
\begin{equation}
\label{eq-est-1}
\inf_{w\in\overline\RR} \;\EE\left[ \log \frac{|\zeta|^2|w+1/z-Q|^2}{|w-\zeta|^2}\right] = 
\inf_{u\in\overline\RR} \;\EE\left[ \log \frac{(\Im\alpha(\zeta,Q))^{-1}|u+\alpha(\zeta,Q)|^2}{|u+i|^2}\right],
\end{equation}
where $u$ and $w$ are related by $w=bu+a$ with $\zeta=a+ib$. 

Define $f: \CC^+ \rightarrow \RR$ by
\begin{equation}
\label{eq-def-f}
f(z) := -\log\left(\Im\Big(\frac{-1}{\;z}\Big)\right) = \log\left(\frac{|z|^2}{\Im(z)} \right)
\end{equation}
then by \eqref{eq-est-1} the inequality \eqref{eq-est-0} is equivalent to
\begin{equation}\label{eq-est-f}
\inf_{u\in\overline\RR} \EE\left[ f(u+\alpha(\zeta,Q))\right] - f(u+i)\,>\,0\;.
\end{equation}
The main argument will be based on convexity in the hyperbolic upper half plane. 
However, there is a much simpler argument for the special case $Q=q-E$ where $E\in\RR$ and $q$ is distributed symmetrically about zero, with possibly unbounded $q$. We present a sketch of this argument first.

\begin{lemma}\label{symmzeta} Let $Q=q-E$ where $E\in\RR$, $q$ is distributed symmetrically about zero, $\EE[\log(1+q^2)]<\infty$ and $\EE(q^2)>0$. Then, there exists a $\zeta\in\CC^+$ with $|\zeta|>1$ such that
\begin{equation}\label{realzeta}
\Re(\zeta + 1/\zeta + E) = 0
\end{equation}
and
\begin{equation}
\EE\big[\log |\alpha(\zeta,q-E)|^2 \big] = 0.
\end{equation}
\end{lemma}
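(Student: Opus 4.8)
The plan is to work with the rescaled energy as in \eqref{eq-est-f}, so we want to find $\zeta = a+ib \in \CC^+$ with $|\zeta|>1$ satisfying two conditions: $\Re(\zeta+1/\zeta+E)=0$, which is the real constraint \eqref{realzeta}, and $\EE[\log|\alpha(\zeta,q-E)|^2]=0$. First I would parametrize the curve of points $\zeta \in \CC^+$ satisfying \eqref{realzeta}. Writing $\zeta = a+ib$, the equation $\Re(\zeta+1/\zeta) = -E$ becomes $a + a/(a^2+b^2) = -E$, i.e.\ $a(a^2+b^2+1) = -E(a^2+b^2)$. For $E$ fixed this defines a one-parameter family; as $b \downarrow 0$ the constraint forces $a$ toward a real fixed point of $\zeta \mapsto -1/\zeta - E$ (which exists and has modulus $\ge 1$ precisely when $|E| \le 2$ for the modulus-one fixed points, and one always has a real fixed point), and as $b \to \infty$ one has $a \to 0$. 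Along this curve I would monitor $|\zeta| = \sqrt{a^2+b^2}$ and check it stays $>1$; the hypothesis $E \in \RR$ together with the symmetric-$q$ setup makes this elementary.

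The second, substantive step is to show that the function $b \mapsto \EE[\log|\alpha(\zeta(b), q-E)|^2]$, restricted to this curve, changes sign (or hits zero), so that an intermediate value argument produces the desired $\zeta$. Here the key simplification from the symmetry of $q$ enters: with $Q = q-E$ and $\zeta$ chosen on the curve \eqref{realzeta}, one computes from \eqref{eq-def-alpha} that $\Re\alpha = \frac1b(a + a/(a^2+b^2) - q + E) = \frac1b(-q)$ after using \eqref{realzeta}, and $\Im\alpha = 1/(a^2+b^2) = 1/|\zeta|^2$. Thus $|\alpha|^2 = q^2/b^2 + 1/|\zeta|^4$, and
\[
\EE[\log|\alpha(\zeta,q-E)|^2] = \EE\!\left[\log\!\left(\frac{q^2}{b^2} + \frac{1}{|\zeta|^4}\right)\right].
\]
Note the symmetry of $q$ was used only to kill the $a/b$ cross term via \eqref{realzeta}; actually it drops out regardless, but symmetry guarantees the curve behaves well and $\EE(q^2)>0$ is what we need. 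As $b \to \infty$ along the curve, $q^2/b^2 \to 0$ and $|\zeta| \to \infty$, so the argument of the log tends to $0$ and, using $\EE[\log(1+q^2)]<\infty$ (which via the elementary bound $\log(q^2/b^2 + |\zeta|^{-4}) \le \log(1+q^2) + C$ for $b\ge 1$ gives a dominating function), dominated convergence yields $\EE[\log|\alpha|^2] \to -\infty$. As $b \downarrow 0$, $q^2/b^2 \to \infty$ on the event $\{q \neq 0\}$, which has positive probability since $\EE(q^2)>0$; Fatou's lemma then gives $\liminf_{b\downarrow 0}\EE[\log|\alpha|^2] = +\infty$. By continuity of $b \mapsto \EE[\log|\alpha(\zeta(b),q-E)|^2]$ on the curve (again justified by $\EE[\log(1+q^2)]<\infty$ and dominated convergence on compact $b$-intervals) the intermediate value theorem produces a $b_\ast$ with $\EE[\log|\alpha(\zeta(b_\ast),q-E)|^2]=0$, and the corresponding $\zeta_\ast = \zeta(b_\ast)$ satisfies \eqref{realzeta} by construction.

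The remaining point is to verify $|\zeta_\ast|>1$. Along the curve, $|\zeta|^2 = a^2+b^2$ and from \eqref{realzeta} in the form $a(a^2+b^2+1) = -E(a^2+b^2)$ one gets $a = -E|\zeta|^2/(|\zeta|^2+1)$, hence $|a| < |E|$, and then $|\zeta|^2 = a^2 + b^2$; for small $b$, $|\zeta|^2$ is close to the square of a real fixed point of $\zeta\mapsto -1/\zeta-E$, which has modulus $\ge 1$, while for the sign-change value $b_\ast$ one checks directly that $|\zeta_\ast|^2 = a_\ast^2 + b_\ast^2$ with $a_\ast^2 = E^2|\zeta_\ast|^4/(|\zeta_\ast|^2+1)^2$; solving, $|\zeta_\ast|>1$ holds automatically once $b_\ast>0$ because $a^2+a/(a^2+b^2)\cdot a \ge 0$ forces... more cleanly: from $a^2(|\zeta|^2+1)^2 = E^2|\zeta|^4$ and $|\zeta|^2 = a^2+b^2 > a^2$ one deduces a lower bound on $|\zeta|$ depending only on $E$ which is $\ge 1$; if instead the chosen $b_\ast$ happened to violate this, one shrinks the interval of admissible $b$ at the outset to the subinterval on which $|\zeta(b)|>1$ (nonempty and open, containing small $b$) and reruns the intermediate value argument there, which still works since $\EE[\log|\alpha|^2]\to+\infty$ as $b\downarrow 0$ within this subinterval. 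The main obstacle I anticipate is precisely this bookkeeping — ensuring the sign change of $\EE[\log|\alpha|^2]$ occurs while still on the portion of the curve with $|\zeta|>1$ — rather than any deep estimate; the integrability hypothesis $\EE[\log(1+q^2)]<\infty$ handles all the analytic convergence issues routinely, and the genuinely hyperbolic-geometric argument needed in the general case of Theorem~\ref{keyestthm} is entirely bypassed here by the explicit formula for $|\alpha|^2$.
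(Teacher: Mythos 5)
The overall strategy is the same as the paper's: parametrize the curve defined by \eqref{realzeta}, use \eqref{realzeta} to simplify $\alpha$ to the explicit form $\Re\alpha = -q/\Im\zeta$, $\Im\alpha = 1/|\zeta|^2$ (so $|\alpha|^2 = q^2/(\Im\zeta)^2 + 1/|\zeta|^4$), and then apply an intermediate value argument to $\EE[\log|\alpha|^2]$ along the curve. However, there are two genuine gaps in your execution.

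First, the parametrization of the curve by $b = \Im\zeta$ is misleading. The locus $\{\zeta\in\CC^+ : \Re(\zeta+1/\zeta)=-E\}$ consists of an outer branch (where $|\zeta|>1$, equivalently $\Im(\zeta+1/\zeta)>0$) and an inner branch (where $|\zeta|<1$). For $|E|<2$ these meet at $\zeta_0 = (-E+i\sqrt{4-E^2})/2$ on the unit circle, and on the outer branch one has $b \ge \sqrt{4-E^2}/2 > 0$ — so $b \downarrow 0$ never happens on the branch you need. Your Fatou argument (``$\liminf_{b\downarrow 0}\EE[\log|\alpha|^2]=+\infty$'') is valid only on the inner branch, where $|\zeta|<1$, which you must ultimately discard. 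On the outer branch, as you approach the unit circle the expectation tends to a \emph{finite} value, namely $\EE[\log(q^2/(1-E^2/4)+1)]$, which is $\ge\EE[\log(q^2+1)]>0$ (this is exactly what makes the IVT work, but you do not compute it). The paper avoids this confusion by parametrizing by $d$ where $\zeta+1/\zeta+E = id$, $d>0$, which forces you onto the outer branch from the start; the two limiting cases ($|E|\ge2$: $\Im\zeta(d)\to 0$, so $\EE\to+\infty$; $|E|<2$: $|\zeta(d)|\to1$, so $\EE\to\EE[\log(q^2/(1-E^2/4)+1)]>0$) are then handled separately.

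Second, your verification that $|\zeta_\ast|>1$ does not go through. From $a^2(|\zeta|^2+1)^2=E^2|\zeta|^4$ and $|\zeta|^2>a^2$ you obtain $(|\zeta|^2+1)^2>E^2|\zeta|^2$, i.e.\ $t^2+(2-E^2)t+1>0$ with $t=|\zeta|^2$; for $|E|<2$ this holds for all $t>0$ (no constraint), and for $|E|\ge2$ it allows $t<1$ as well as $t>1$. So this algebra gives no lower bound on $|\zeta|$. The clean argument — which the paper uses and which your explicit formula for $|\alpha|^2$ makes immediate — is that if $|\zeta|\le 1$ then $1/|\zeta|^4\ge1$, hence $\EE[\log|\alpha|^2]\ge\EE[\log(1+q^2/(\Im\zeta)^2)]>0$ because $q$ is not almost surely $0$, contradicting $\EE[\log|\alpha|^2]=0$. (This also automatically excludes the inner branch, so once it is in place the branch bookkeeping becomes moot.) Your fallback idea of ``shrinking to the subinterval where $|\zeta(b)|>1$ and rerunning IVT'' founders on the same point: for $|E|<2$ that subinterval does not contain small $b$, and the expectation does not blow up at its left endpoint.

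A minor point: the symmetry of $q$ plays no role in the proof of this lemma (it is the constraint \eqref{realzeta} that kills the cross term, as you yourself note); symmetry is used only in the subsequent Proposition. Your remark that symmetry ``guarantees the curve behaves well'' is a red herring.
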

\begin{proof}
Let $d\in (0,\infty)$ and take $\zeta(d)$ to be the root of $\zeta + 1/\zeta + E=id$ of largest imaginary part. We obtain a continuous curve $(0,\infty)\ni d\mapsto\zeta(d)\in\CC^+$ of solutions to \eqref{realzeta} with $\lim_{d\uparrow \infty}\Im\zeta(d)=\infty$. 

We have
\begin{equation}\label{symmalpha}
\alpha(\zeta(d),q-E) = \frac{-q}{\Im\zeta(d)} + \frac{i}{|\zeta(d)|^2}
\end{equation}
so that
\begin{equation}\label{expalpha}
\EE\big[\log |\alpha(\zeta(d),q-E)|^2 \big] = \EE\left[\log\left(\frac{q^2}{\big(\Im\zeta(d)\big)^2} + \frac{1}{|\zeta(d)|^4}\right)\right],
\end{equation}
which goes to $-\infty$ as $d\to \infty$. 

If $|E|\ge 2$, then $\lim_{d\to 0} \Im\,\zeta(d)=0$, so that (\ref{expalpha}) goes to $\infty$ as $d\to 0$. If, on the other hand, $|E|<2$, then $\lim_{d\to 0} \zeta(d) = (-E+i\sqrt{4-E^2})/2$ lies on the unit circle. Thus, as $d\to 0$, (\ref{expalpha}) approaches
$$
\EE \left[ \log \left( \frac{q^2}{1-E^2/4} + 1 \right) \right] \ge \EE \left[ \log(q^2+1) \right] >0,
$$
which uses that $q^2$ is positive with positive probability as $\EE(q^2)>0$. 

By continuity, in each case there is a value of $d$ for which the expectation is zero. Again, since $q^2$ is not supported at $0$, equation \eqref{expalpha} also implies that the corresponding value of $\zeta(d)$ has $|\zeta(d)| > 1$.
\end{proof}
Given this lemma we can now prove the special case of the theorem.
\begin{prop} Under the hypotheses of Lemma \ref{symmzeta}, let $\zeta$ be the value given by the lemma. We further assume that the distribution of $q^2$ is supported on at least two points (i.e. the variance of $q^2$ is positive).
Then the key inequality \eqref{eq-est-0} holds.
\end{prop}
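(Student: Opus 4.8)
The plan is to deduce the strict inequality \eqref{eq-est-0} from the information already collected in Lemma~\ref{symmzeta}, by exploiting the fact that the equality established there, $\EE[\log|\alpha(\zeta,q-E)|^2]=0$, is an equality of \emph{averages}, whereas \eqref{eq-est-0} asks for a strict inequality for every $w\in\overline\RR$. Concretely, I would use the representation \eqref{symmalpha}, namely $\alpha(\zeta,q-E) = -q/\Im\zeta + i/|\zeta|^2$, together with the reformulation \eqref{eq-est-1}--\eqref{eq-est-f} through the function $f$ from \eqref{eq-def-f}: the claim is that
\[
\inf_{u\in\overline\RR}\ \EE\bigl[f(u+\alpha(\zeta,q-E))\bigr] - f(u+i) > 0.
\]
The point of the symmetric case is that $\alpha$ has a \emph{fixed} imaginary part $1/|\zeta|^2$ and a real part that is an affine (in fact linear) function of the symmetric random variable $q$; so $u+\alpha$ is a random point ranging over a horizontal line in $\CC^+$, symmetric about $u + i/|\zeta|^2$.

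First I would record the elementary convexity/strict-convexity properties of $f$ along horizontal lines. Writing $f(x+iy) = \log((x^2+y^2)/y)$, for fixed $y>0$ the map $x\mapsto f(x+iy) = \log(x^2+y^2) - \log y$ is even in $x$ and \emph{strictly convex} on $\RR$ (its second derivative is $2(y^2-x^2)/(x^2+y^2)^2$, which is... not globally positive — so instead I would use that $x \mapsto \log(x^2+y^2)$ is strictly convex only near $0$, and handle large $x$ separately, or better: use that $t\mapsto \log(t+y^2)$ is strictly concave in $t=x^2\ge 0$, hence by Jensen $\EE[\log(q^2/(\Im\zeta)^2 + \text{const})]$ is strictly bounded in terms of $\EE[q^2]$, using positivity of $\var(q^2)$). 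The cleaner route, which I expect to be the intended one, is: fix $u=v\in\RR$ (the case $u=i\infty$ being trivial since both sides tend to $0$), and compare $\EE[f(v+\alpha)]$ with $f(v+i)$ by a two-step argument — first move the random point's imaginary part from $1/|\zeta|^2$ up to some reference height using monotonicity of $f$ in $y$ (note $\partial_y f = (y^2-x^2)/(y(x^2+y^2))$, so this is only monotone in a region), and second apply strict Jensen in the real variable using that the law of $q$ (equivalently of the real part of $\alpha$) is genuinely spread out because $\var(q^2)>0$.

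The key mechanism is therefore strict Jensen: since $q$ is symmetric, $\EE[f(v+\alpha)] = \EE\bigl[\tfrac12 f(v - q/\Im\zeta + i/|\zeta|^2) + \tfrac12 f(v + q/\Im\zeta + i/|\zeta|^2)\bigr]$, and for fixed imaginary part the symmetrized value $\tfrac12(f(v-x+iy)+f(v+x+iy))$ is, as a function of $x^2$, strictly larger than its value at the average of $x^2$ whenever the relevant one-variable function is strictly concave/convex in the right direction; combined with the normalization $\EE[\log|\alpha|^2]=0$ from Lemma~\ref{symmzeta} (which pins down $\EE[\log(|v+\alpha|^2)]$ only at $v=0$, so for general $v$ one needs the extra inequality $|v+\alpha|^2 \ge |\alpha|^2 - 2|v|\,|\Re\alpha| + \dots$ carefully) this yields the strict positive gap. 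I would then check uniformity over $v$: for $|v|$ large both $\EE[f(v+\alpha)]$ and $f(v+i)$ behave like $\log(v^2)$, and the difference tends to $\EE[\log(\text{something}\ge 1)]>0$ using $|\zeta|>1$; for $v$ in a compact set the infimum of a continuous positive function is positive; and the endpoint $u=i\infty$ gives gap $0$ which must be excluded or shown to be non-attained — here I would use that $f(u+\alpha)-f(u+i)\to 0$ as $u\to i\infty$ but the infimum over the \emph{open} condition, plus the large-$|v|$ asymptotics, shows the infimum over all of $\overline\RR$ is still strictly positive (this is exactly the role of the condition $|\zeta|>1$, guaranteeing $\Im(1/\zeta) < \Im i = 1$ is the wrong direction... rather guaranteeing the asymptotic gap is $\EE[\log(|\zeta|^2\cdot 1)]=2\log|\zeta|>0$ wait — one must be careful and I expect this bookkeeping to be the main obstacle).

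The main obstacle I anticipate is precisely this endpoint/uniformity analysis: \eqref{eq-est-0} is an infimum over the \emph{compactified} line $\overline\RR$, and naive strict Jensen only gives strictness pointwise for finite $w$, with the gap a priori shrinking to $0$ as $w\to\infty$ or as one approaches the point where $f$ fails to be convex. Resolving it requires (a) the clean large-$w$ asymptotic $\EE[f(w+\alpha)] - f(w+i) \to \EE[\log(|\zeta|^2|1-something|^2)] $ and checking this limit is $>0$ using $|\zeta|>1$ and the symmetry of $q$; (b) handling $w=i\infty$ by noting both terms vanish there but the limit is approached from strictly positive values with a uniform lower bound off a neighborhood of $\infty$; and (c) a uniform modulus-of-strict-concavity estimate on compact $w$-sets coming from $\var(q^2)>0$. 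Once these three pieces are in place, a standard compactness argument on $\overline\RR$ closes the proof.
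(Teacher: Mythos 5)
Your overall skeleton (symmetrize over $q \mapsto -q$, apply a strict Jensen inequality exploiting $\var(q^2)>0$, and handle the endpoints $u=0$ and $|u|\to\infty$ via $|\zeta|>1$) matches the paper's strategy. But the central Jensen step you propose goes in the wrong direction, and you are missing the two algebraic reductions that make the argument close.

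The concrete gap: you suggest using that $t\mapsto\log(t+y^2)$ is strictly \emph{concave} in $t=x^2$, so Jensen would give $\EE[\log(q^2/(\Im\zeta)^2 + c)] \le \log(\EE[q^2]/(\Im\zeta)^2 + c)$ — an \emph{upper} bound on the expectation. But the whole point is to bound $\EE[f(u+\alpha)]$ from \emph{below} by $f(u+i)$. A concavity-based Jensen cannot do that. The paper instead makes the substitution $s=\log|\alpha|^2$; then $s\mapsto\log(u^2+e^s)$ is strictly \emph{convex} (its second derivative is $e^s u^2/(u^2+e^s)^2>0$ for $u\ne 0$), and since Lemma~\ref{symmzeta} pins down $\EE[s]=\EE[\log|\alpha|^2]=0$, strict Jensen yields
\[
\EE[\log(u^2+|\alpha|^2)] = \EE[\log(u^2+e^s)] > \log(u^2+e^{\EE[s]}) = \log(u^2+1) = f(u+i).
\]
This is the step your proposal never reaches, and your proposed substitution actively points the other way.

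You are also missing the reduction from $\EE[f(u+\alpha)]$ down to $\EE[\log(u^2+|\alpha|^2)]$. The paper uses the symmetry $\alpha(\zeta,-q-E)=-\overline\alpha(\zeta,q-E)$ to write
\[
\EE[f(u+\alpha)] = \tfrac12\,\EE\Bigl[\log\frac{|u+\alpha|^2\,|u-\overline\alpha|^2}{(\Im\alpha)^2}\Bigr],
\]
then the identity $|u+\alpha|^2|u-\overline\alpha|^2 = (u^2-|\alpha|^2)^2 + 4u^2(\Im\alpha)^2$, and finally the elementary bound $\Im\alpha\le|\alpha|$ to replace $(\Im\alpha)^2$ by $|\alpha|^2$ in the denominator, producing $(u^2+|\alpha|^2)^2/|\alpha|^2$. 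Without this chain there is no clean single-variable function in $|\alpha|^2$ to which Jensen can be applied, and the non-monotonicity of $f$ in $\Im z$ (which you correctly flagged as a problem) never gets resolved. Your endpoint discussion for $u=0$ and $|u|\to\infty$ is essentially right, but those are the easy parts; the convex-in-$\log|\alpha|^2$ trick is the missing idea.
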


\begin{proof} We will show that \eqref{eq-est-f} holds. Let $\zeta$ be the value given by Lemma \ref{symmzeta} so that $\alpha = \alpha(\zeta,q-E)$ is given by \eqref{symmalpha}. From (\ref{realzeta}) we find $\alpha(\zeta,-q-E)=-\overline\alpha(\zeta,q-E)$. This and symmetry of the distribution of $q$ give
\begin{align*}
\EE\left[\log\frac{|u+\alpha|^2}{\Im\alpha}\right]
=\EE\left[\log\frac{|u-\overline\alpha|^2}{\Im(-\overline\alpha)}\right]
=\EE\left[\log\frac{|u-\overline\alpha|^2}{\Im\alpha}\right].
\end{align*}
Hence
\begin{align*}
\EE[f(u + \alpha)] &= \EE\left[\log\frac{|u+\alpha|^2}{\Im\alpha}\right]\\
&=\frac{1}{2}\EE\left[\log\frac{|u+\alpha|^2|u-\overline\alpha|^2}{(\Im\alpha)^2}\right]\\
&=\frac{1}{2}\EE\left[\log\frac{|u^2-|\alpha|^2+2iu\Im\alpha|^2}{(\Im\alpha)^2}\right]\\
&=\frac{1}{2}\EE\left[\log\frac{(u^2-|\alpha|^2)^2+4u^2(\Im\alpha)^2}{(\Im\alpha)^2}\right]\\
&=\frac{1}{2}\EE\left[\log\frac{(u^2-|\alpha|^2)^2}{(\Im\alpha)^2}+4u^2\right]\\
&\ge\frac{1}{2}\EE\left[\log\frac{(u^2-|\alpha|^2)^2+4|\alpha|^2u^2}{|\alpha|^2}\right]\\
&=\frac{1}{2}\EE\left[\log\frac{(u^2+|\alpha|^2)^2}{|\alpha|^2}\right]\\
&=-\frac{1}{2}\EE\left[\log|\alpha|^2\right] + \EE\left[\log(u^2+|\alpha|^2)\right]\\
&=\EE\left[\log(u^2+|\alpha|^2)\right].
\end{align*}
In the last step we used that $\EE\left[\log|\alpha|^2\right]=0$. The function $s\mapsto \log(u^2+e^s)$ is strictly convex for $u\ne 0$. So for $u\ne 0$, since $|\alpha|^2 = q^2/(\Im\zeta)^2 + 1/|\zeta|^4$ is supported on at least two points we have a strict inequality in Jensen's inequality:
\begin{align*}
\EE\left[\log(u^2+|\alpha|^2)\right] &= \EE\left[\log(u^2+e^{\log |\alpha|^2})\right]\\
&>\log\Big(u^2+e^{\EE\left[\log |\alpha|^2\right]}\Big)\\
&=\log(u^2+1)\\
&= f(u+i).
\end{align*}
Thus we have proved the strict inequality $\EE[f(u+\alpha)]>f(u+i)$ for $u^2\ne 0$. In order to obtain the uniform statement \eqref{eq-est-f} it remains to check the strict inequality at $u=0$ and in the limit $|u|\rightarrow\infty$. At both these endpoints the strict inequality follows from \eqref{eq-def-alpha}, \eqref{eq-est-1} and the fact that $1/(\Im\alpha) = |\zeta|^2>1$. For $u=0$, this insures that $\EE[f(\alpha)] = \EE[\log(|\alpha|^2/(\Im\alpha))] >\EE[\log(|\alpha|^2)]=0$. For large $|u|$ we have
$\lim_{|u|\rightarrow\infty} \EE[f(u+\alpha)]-f(u+i) = \log(|\zeta|^2) >0$.
\end{proof}

We now return to the main argument. The upper half plane $\CC^+$ with the hyperbolic Riemannian metric will be denoted by $\HH$.
This metric on $\HH=\{z=x+iy: y>0\}$ is given by $ds^2= (dx^2+dy^2)/y^2$ and the unit speed geodesics are 
$$
\gamma(t)=\pmat{a&b\\c&d} \cdot ie^t = \frac{aie^t+b}{cie^t+d},
$$
where $a,b,c,d \in \RR$ and $ad-bc=1$. The induced length metric of negative curvature on $\HH$ is given by
\begin{equation*} 
d_\HH(z_1, z_2) = \arcosh\left( 1+ \frac{|z_1-z_2|^2}{2\Im(z_1)\Im(z_2)}\right) \;.
\end{equation*}

A function $f:\HH\rightarrow\RR$ is called geodesically convex if $f\circ\gamma$ is convex for every geodesic $\gamma$.
\begin{lemma} 
\label{lem-est1}
The function $f$ defined by \eqref{eq-def-f} is geodesically convex on $\HH$.
It is strictly convex on geodesics which do not have $0$ as a limit point.
\end{lemma}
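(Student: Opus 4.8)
The statement to prove is that $f(z) = \log(|z|^2/\Im(z))$ is geodesically convex on $\HH$, and strictly so on geodesics not limiting to $0$. The natural strategy is to compute the restriction of $f$ to an arbitrary unit-speed geodesic and check that the resulting function of $t\in\RR$ has nonnegative second derivative. First I would observe that $f$ has a conformal-geometric meaning: writing $\iota(z) := -1/z$, which is an orientation-preserving isometry of $\HH$ (an element of $\mathrm{PSL}(2,\RR)$ acting by $z\mapsto -1/z$), we have $f(z) = -\log\Im(\iota(z))$. Since $\iota$ maps geodesics to geodesics, geodesic convexity of $f$ is equivalent to geodesic convexity of $z\mapsto -\log\Im(z)$. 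This reduces the problem to a single clean computation.

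\textbf{Key steps.} The plan is to verify that $h(z) := -\log\Im(z) = -\log y$ is geodesically convex on $\HH$. Parametrize a general unit-speed geodesic as $\gamma(t) = (ai e^t + b)/(ci e^t + d)$ with $ad - bc = 1$, or more simply use that every geodesic is the image under some $g\in\mathrm{PSL}(2,\RR)$ of the vertical geodesic $t\mapsto ie^t$, together with the isometry group acting transitively on geodesics. Up to such an isometry there are only two model cases: the vertical line $\gamma(t) = ie^{t}$, where $h(\gamma(t)) = -t$ is affine (hence convex, with vanishing second derivative — this is the non-strict case, and these are exactly the geodesics with $0$ as a limit point after applying $\iota$, i.e. geodesics through $\infty$); and the semicircle geodesics centered on the real axis. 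For a semicircle of radius $R$ centered at $x_0$, parametrized by unit speed, one computes $\Im(\gamma(t)) = R/\cosh(t - t_0)$ for a suitable $t_0$, so that $h(\gamma(t)) = \log\cosh(t-t_0) - \log R$, and $\frac{d^2}{dt^2}\log\cosh(t-t_0) = \mathrm{sech}^2(t-t_0) > 0$ strictly everywhere. Translating back through $\iota$: the geodesics along which $f$ fails to be strictly convex are precisely those whose $\iota$-image is a vertical line, equivalently those passing through $0$; on all other geodesics $f\circ\gamma$ has strictly positive second derivative. This is exactly the claimed dichotomy.

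\textbf{Main obstacle.} There is no deep obstacle here; the only thing requiring care is the bookkeeping that identifies ``geodesics with $0$ as a limit point'' with ``the degenerate (affine) case.'' Concretely: the vertical geodesics in the $\iota$-picture are those with endpoints $\{0,\infty\}$ on $\partial\HH$ mapped by $\iota(z)=-1/z$ back to geodesics with endpoints $\{\iota(0),\iota(\infty)\} = \{\infty, 0\}$ — so indeed a geodesic $\gamma$ in the original picture is degenerate for $f$ iff $0$ is one of its two boundary endpoints, i.e. iff $0$ is a limit point of $\gamma$. One should double-check the boundary/limit conventions (limit point meaning an endpoint on $\RR\cup\{\infty\}$) match the phrasing in the lemma. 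An alternative, fully computational route avoids the isometry reduction entirely: substitute $\gamma(t) = (aie^t+b)/(cie^t+d)$ directly into $f$, simplify using $ad-bc=1$ to get $\Im\gamma(t) = e^t/(c^2 e^{2t} + d^2)$ and $|\gamma(t)|^2 = (a^2 e^{2t} + b^2)/(c^2 e^{2t}+d^2)$, so $f(\gamma(t)) = \log(a^2 e^{2t}+b^2) - t$, and then check $\frac{d^2}{dt^2}\log(a^2 e^{2t}+b^2) = \frac{4a^2 b^2 e^{2t}}{(a^2 e^{2t}+b^2)^2} \ge 0$, with equality iff $a=0$ or $b=0$; tracing which $(a,b,c,d)$ give $ab=0$ shows this is exactly when the geodesic has $0$ as an endpoint. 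I would present this second computation as the actual proof since it is self-contained and makes the strictness condition transparent, while mentioning the isometry interpretation as motivation.
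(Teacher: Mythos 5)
Your proposed proof is correct and, in the version you say you would actually present, is identical to the paper's: parametrize the geodesic as $\gamma(t)=(aie^t+b)/(cie^t+d)$ with $ad-bc=1$, compute $f(\gamma(t))=\log(a^2e^{2t}+b^2)-t$, note $(f\circ\gamma)''(t)=4a^2b^2e^{2t}/(a^2e^{2t}+b^2)^2\ge 0$ with equality iff $ab=0$, and observe that $a=0$ or $b=0$ is exactly the case where $\gamma$ has $0$ as a boundary limit point. The isometry observation ($f = -\log\Im\circ(-1/z)$, reducing to convexity of $-\log\Im$) is a nice motivating remark but not used in the paper, which goes straight to the direct computation you give.
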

\begin{proof} 
Let $ad-bc=1$ and
$
\gamma(t)={(aie^t+b)}/{(cie^t+d)}
$
be a unit speed geodesic in $\HH$. 
Then 
$$
f\circ\gamma (t)= -\log\left(\Im\; \frac{(ad-bc)ie^t-bd-ace^{2t}}{a^2e^{2t} +b^2}\right)=
\log(a^2e^{2t}+b^2)-t\;
$$
giving the second derivative
$$
(f\circ\gamma)''(t)=\frac{4a^2 b^2 e^{2t}}{(a^2e^{2t}+b^2)^2}\,\geq 0,
$$
which is strictly positive if $a\neq 0$ and $b\neq 0$.
Thus, $f\circ \gamma$ is convex and strictly convex if $a\neq 0, b\neq 0$.
If $a=0$ or $b=0$ then the geodesic $\gamma(t)$ approaches $0\in\CC$ for one of the limits
$t\to\pm \infty$.
\end{proof}

Next we need to introduce the concept of barycenter and we also introduce some notation for the hyperbolic midpoint. We refer to  the paper by Sturm \cite{Stu} for details and proofs.

\begin{defini}
\
\begin{enumerate}[{\rm(i)}]
\item Let $d_\HH$ be the distance induced by the hyperbolic Riemannian metric on $\HH$
 and let $X$ be a random variable on the probability space $(\Omega,\Aa,\PP)$
 with values in $\HH$ such that
 $ \EE[\, d_\HH(X,\beta)]$ exists for some $\beta\in\HH$. The set of such random variables is denoted by $L^1(\Omega,\HH)$.
 Then there exists a unique minimizer $b(X)\in\HH$ which minimizes, for any $\beta\in\HH$,
 $$
 \HH\ni z\mapsto \int\left(d_\HH(X(\omega),z)^2-d_\HH(X(\omega),\beta)^2\right)\,d\PP(\omega) =
 \EE \left[d_\HH(X,z)^2-d_\HH(X,\beta)^2\right].
 $$
 
The point $b(X)$ is called the barycenter (or more precisely $d^2$-barycenter) of $X$.
We will also call it the hyperbolic expectation value and denote it by $\EE_\HH (X)$, thus
 $$
 \EE_\HH(X)=\underset{z\in\HH}{{\rm minimizer}} \;\;\EE [d_\HH(X,z)^2-d_\HH(X,\beta)^2].
 $$
 
\item For two points $z_0, z_1 \in \HH$, let $\gamma:[0,1]\to\HH$ with $\gamma(0)=z_0$
and $\gamma(1)=z_1$ be the connecting geodesic. For any $\lambda \in [0,1]$ we define the hyperbolic affine combination
by
$$
\lambda z_0 \oplus (1-\lambda) z_1 := \gamma(\lambda)\;.$$ 
In particular, $({z_0\oplus z_1})/{2}$ is the midpoint of the joining geodesic, 
i.e., the unique point $z\in\HH$ such that $d_\HH(z,z_0)=d_\HH(z,z_1)=\frac12 d_\HH(z_0,z_1)$.\\

\item For a random variable $X\in L^1(\Omega,\HH)$ 
we construct an independent, identically distributed copy $X'$ 
and define the double-average barycenter by
$$
\EE_\HH^{(2)}[X] := \EE_\HH \left[ \frac12X\oplus \frac12X'\right]\;.
$$
\end{enumerate}
\end{defini}

Since we are assuming $Q$ is bounded, for any $z$ we find $\alpha(z,Q)\in L^1(\Omega,\HH)$. 
The following properties will be important.

\begin{enumerate}[(i)]
\item If the distribution of $X$ is supported within a geodesically convex set $A\subset \HH$, then the barycenter $\EE_\HH(X)$ 
lies inside $A$ \cite[Proposition 6.1]{Stu}. Using this twice one also finds that $\EE^{(2)}_\HH(X)$ lies in $A$.
\item For any geodesically convex function $h$ on $\HH$ one has Jensen's inequalities \cite[Theorem 6.2]{Stu}
$$
h(\EE_\HH[X]) \leq \EE \,[h(X)]\;\qtx{and}
h(\EE^{(2)}_\HH[X])\leq \EE \,[h(X)]\;.
$$
The second inequality follows from the first one applied twice.
\item Spaces of nonpositive curvature are doubly convex, i.e. the distance function on $\HH\times\HH$ is convex and
one finds (cf. \cite[Corollary 2.5; Theorem 6.3]{Stu}),
$$
d_\HH(\EE_\HH[X],\EE_\HH[Y])\,\leq\, \EE[\,d_\HH(X,Y)]\qtx{and}
d_\HH(\EE^{(2)}_\HH[X],\EE^{(2)}_\HH[Y])\,\leq\, \EE[\,d_\HH(X,Y)]\;.
$$
In particular, this implies that the map $z\mapsto \EE_\HH^{(2)}[\,\alpha(z,Q)]$ is continuous in $z$,
because if $z_n\to z$ in $\HH$, then one finds by continuity of $\alpha(\cdot,\cdot)$ and Dominated Convergence that 
$$
\lim_{n\to\infty} d_\HH\big(\EE_\HH^{(2)}[ \alpha(z_n,Q)], \EE_\HH^{(2)}[\alpha(z,Q]\big) \leq 
\lim_{n\to\infty} \EE[d_\HH(\alpha(z_n,Q),\alpha(z,Q))] = 0
$$

\end{enumerate}

Now we can show the crucial step.

\begin{lemma}\label{findzeta} If the variance of $Q$ is positive and $Q$ is bounded, then
there exists a $\zeta\in \HH$, $|\zeta|>1$ such that $\EE^{(2)}_\HH[\alpha(\zeta,Q)] = i$\;.
\end{lemma}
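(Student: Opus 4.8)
The plan is to use a continuity and degree-type argument based on the map $\zeta \mapsto \EE^{(2)}_\HH[\alpha(\zeta,Q)]$, which is continuous on $\HH$ by the third bullet point above. The key structural fact to exploit is the relationship between $\alpha(\zeta,Q)$ and the M\"obius transformation $\zeta \mapsto -1/\zeta$: a direct computation from \eqref{eq-def-alpha} should show that $\alpha(\zeta,0) = -1/\zeta + (\text{real shift})$, i.e.\ that when $Q$ is replaced by its mean (or by $0$ after recentering), the barycenter reduces to an explicit M\"obius image of $\zeta$. More precisely, I expect $\alpha(\zeta,\EE[Q])$ to be, up to a horizontal translation absorbed into the definition, exactly the point $-1/\zeta$ rotated into $\HH$, so that solving $\EE^{(2)}_\HH[\alpha(\zeta,\EE[Q])] = i$ amounts to solving a fixed-point equation for a M\"obius map, which has a solution $\zeta_0$ on the unit circle boundary or just outside it.

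First I would recenter so that $\EE[Q]$ is absorbed into the real part, reducing to the case $\EE[Q]=0$; note this does not affect the hypothesis $\var(Q)>0$ or the conclusion $|\zeta|>1$. Second, I would analyze the ``deterministic'' comparison map $g(\zeta) := $ (the value $\alpha$ would take if $Q$ were a.s.\ equal to its mean); since then $X = X'$ a.s., one has $\EE^{(2)}_\HH[X] = X = g(\zeta)$, and $g$ is an explicit M\"obius-type map of $\HH$. I would identify the $\zeta$ with $g(\zeta) = i$; from \eqref{eq-def-alpha} with $Q \equiv 0$ one reads off $\Im\alpha = 1/|\zeta|^2$ and $\Re\alpha = \frac1b(a + a/|\zeta|^2)$, so $g(\zeta) = i$ forces $|\zeta| = 1$ and $\Re(\zeta + 1/\zeta) = 0$, giving $\zeta = i$ itself (after recentering). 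Third, I would perturb: the true map $\Psi(\zeta) := \EE^{(2)}_\HH[\alpha(\zeta,Q)]$ differs from $g$, and because $\var(Q)>0$ the barycenter of a genuinely spread-out family sits strictly \emph{higher} in $\HH$ than the ``frozen'' value — more precisely, a Jensen-type strict inequality (using that $\alpha(\zeta,Q)$ for fixed $\zeta$ traces a horizontal segment $\{-q/b + i/|\zeta|^2 : q \in \supp Q\}$, and the barycenter of points on a horocycle is pushed up) should give $\Im\Psi(\zeta) > \Im\alpha(\zeta,\EE[Q]) = 1/|\zeta|^2$ whenever $\var(Q)>0$. I would then track $\Im\Psi(\zeta)$ along the curve of candidate $\zeta$'s (those with $\Re$-part chosen so the real part of $\Psi$ vanishes, analogous to the curve $d \mapsto \zeta(d)$ in Lemma~\ref{symmzeta}): as $\Im\zeta \to \infty$ the family $\alpha(\zeta,Q)$ collapses and $\Im\Psi \to 0$, while near $\zeta = i$ strict positivity of the variance forces $\Im\Psi > 1$; the intermediate value theorem then yields a $\zeta$ on this curve with $\Psi(\zeta) = i$, and the same collapse estimate shows this $\zeta$ satisfies $|\zeta| > 1$.

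I expect the main obstacle to be making the ``barycenter is pushed strictly upward by variance'' step rigorous and, relatedly, verifying that one can set up a one-parameter curve of $\zeta$'s along which $\Re\Psi$ is identically zero so that the problem genuinely reduces to the scalar intermediate-value argument. The upward-push claim is intuitively clear from the hyperbolic geometry (the $d^2$-barycenter of a nondegenerate distribution on a horocycle lies strictly inside, hence at larger height, and the double-averaging only amplifies this), but one must combine it cleanly with the strict convexity of $f$ from Lemma~\ref{lem-est1} and with Jensen (second bullet point) to get a \emph{quantitative} comparison rather than just $\Im\Psi \ge 1/|\zeta|^2$. Controlling the real part may require a symmetry or monotonicity observation: the recentered problem has $\EE[Q]=0$, so $\alpha(\zeta,-Q) $ relates to $\alpha(\bar\zeta$-type reflections of $\alpha(\zeta,Q)$, and one hopes this pins $\Re\Psi = 0$ along the imaginary axis or along an explicitly parametrized curve. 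Once the curve is in hand, continuity of $\Psi$ (third bullet), the two limiting values of $\Im\Psi$, and the intermediate value theorem finish the argument, and the estimate $|\zeta|>1$ falls out because $\Im\Psi(\zeta) = 1 > 1/|\zeta|^2$ at the solution forces $|\zeta|>1$ exactly as in the final lines of Lemma~\ref{symmzeta}.
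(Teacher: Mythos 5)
Your proposal correctly identifies the two central ideas that do drive the paper's argument -- that variance pushes the barycenter strictly upward in $\HH$ (made rigorous via geodesic concavity of $z\mapsto\log\Im z$ and Jensen applied to the midpoint $\frac12\alpha(\zeta,Q)\oplus\frac12\alpha(\zeta,Q')$), and that the proof should be a continuity/IVT-type argument with the $|\zeta|>1$ conclusion coming out of the same estimate. However, there is a genuine gap in the strategy you propose for the ``controlling the real part'' step, and you have put your finger on exactly the right place to worry. You propose to reduce to a one-parameter curve of candidate $\zeta$'s along which $\Re\,\EE^{(2)}_\HH[\alpha(\zeta,Q)]$ vanishes identically, mimicking the curve $d\mapsto\zeta(d)$ in Lemma~\ref{symmzeta}, and you suggest pinning this down via the reflection $Q\mapsto -Q$. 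That reflection symmetry is only available when the law of $Q$ is symmetric about $0$ (as it is assumed to be in Lemma~\ref{symmzeta}), and recentering to $\EE[Q]=0$ does not produce it. For a generic compactly supported $Q$ with $\EE[Q]=0$ there is no symmetry that forces $\Re\,\EE^{(2)}_\HH[\alpha(\zeta,Q)]$ to vanish along the imaginary axis or any explicitly known curve, so the reduction to a scalar IVT does not go through. (A side remark: the identification $\alpha(\zeta,0)\approx -1/\zeta$ ``up to a real shift'' is also not quite correct -- from \eqref{eq-def-alpha}, $\Im\alpha(\zeta,0)=1/|\zeta|^2$, whereas $\Im(-1/\zeta)=b/|\zeta|^2$ -- though this is minor compared to the main issue.)

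What the paper does instead is replace the one-dimensional IVT by a genuinely two-dimensional topological degree (winding number) argument in the $(a,b)$-plane, precisely so as to avoid needing the curve you were looking for. Writing $g(a,b)=\EE^{(2)}_\HH[\alpha(a+ib,Q)]$, the proof establishes four one-sided estimates on the boundary of a curvilinear rectangle $\Gamma$: $\Re\,g(-a_0,b)<0$ and $\Re\,g(a_0,b)>0$ on the vertical sides (this follows from the fact that $\{|\Re z|<a_0\}$ is geodesically convex in $\HH$ and $|Q|<a_0$ a.s., so the barycenter of $-Q+ib/(a^2+b^2)$ has real part in $(-a_0,a_0)$); $\Im\,g(a,b_0)<1$ on a high horizontal side $b=b_0$ (from geodesic convexity of the lens-type sets $A_t=\{\Im z\ge t,\,|z|<\sqrt{a_0^2+t^2}\}$); and $\Im\,g(a,b_1(a))>1$ together with $a^2+b_1(a)^2\ge1$ on a low curve $b=b_1(a)$ (from the Jensen bound $\log\Im\,g(a,b)\ge\EE[\tfrac12\log((Q-Q')^2/(4b^2)+1/(a^2+b^2)^2)]$ together with $\var(Q)>0$). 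These four boundary estimates force $g(\Gamma)$ to wind once around $i$; since $\Gamma$ is null-homotopic in $\HH$ and $g$ is continuous, $g$ must hit $i$ at some interior point $(a,b)$, and the placement of the lower boundary above the unit circle gives $a^2+b^2>1$. So your instinct that a topological argument is needed was right; the missing idea is to go to dimension two and use the two real-part bounds as the second coordinate of the degree argument, rather than trying to eliminate the real part by symmetry.
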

\begin{proof} 
Let us define the continuous function
$$
g(a,b):=\EE_\HH^{(2)} \left[\alpha(a+ib,Q)\right]=\frac1b \EE_\HH^{(2)}\left[ a+\frac{a}{a^2+b^2}-Q+i \frac{b}{a^2+b^2} \right]\,.
$$
Here we used that $z\mapsto z/b$ is an isometry on $\HH$. We need to find $(a,b)$ such that $g(a,b)=i$ with $a^2+b^2>1$. By using that $z\mapsto z+c$ for $c\in\RR$ is an isometry, one has
$$
b\, g(a,b) = a+\frac{a}{a^2+b^2} + \EE^{(2)}_\HH\left[-Q+i\frac{b}{a^2+b^2}\right]\;.
$$
We are assuming that $Q$ is almost surely bounded, so almost surely $|Q|<a_0$ for some $a_0$.
The set $\{z \in \HH : |\Re z|<a_0\}$ is geodesically convex in $\HH$,
therefore $\left|\Re\, \EE^{(2)}_\HH\left(-Q+i\frac{b}{a^2+b^2}\right)\, \right|<a_0$
and we find
\begin{equation}\label{111}
\re\,g(-a_0,b) < 0\qtx{and} \re\,g(a_0,b)>0\;.
\end{equation}

Also, the set $A_t:=\{z\in\HH\,:\,\Im(z)\ge t\,,|z|<\sqrt{a_0^2+t^2}\}$ (where the $|z|$ is the usual Euclidean norm) is geodesically convex in $\HH$. We have $\Im b g(a,b)= \Im \EE^{(2)}_\HH\left[-Q+i\frac{b}{a^2+b^2}\right]$ and the values of $-Q+i\frac{b}{a^2+b^2}$ lie in the set $A_t$ with $t:=b/(a^2+b^2)$. Hence $\EE^{(2)}_\HH\left[-Q+i\frac{b}{a^2+b^2}\right]$ also lies in $A_t$ which implies
$$
\Im b g(a,b)\,<\, \sqrt{a_0^2+\frac{b^2}{(a^2+b^2)^2}}\;.
$$
The right hand side is uniformly bounded for say $|b|>1$.
Hence there exists $b_0$ such that for all $a$
$$
\Im g(a,b_0) \,<\, 1
$$

Now similarly to Lemma~\ref{lem-est1} one finds that $z\mapsto \log(\Im(z))$ is geodesically concave. By Jensen's inequality we have
\begin{align*}
\log(\Im g(a,b)) &=
\log\left(\Im \EE_\HH \left[\frac12\alpha(a+ib,Q)\oplus\frac12\alpha(a+ib,Q')\right]\,\right) \notag \\
& \geq \EE\,\left[\frac12 \log \left(\frac{(Q-Q')^2}{4b^2}+\frac{1}{(a^2+b^2)^2} \right)\right], 
\end{align*}
since
\begin{equation*}
\frac12\alpha(a+ib,Q)\oplus\frac12\alpha(a+ib,Q')=\frac{a}{b} + \frac{a}{b(a^2+b^2)} -\frac{Q+Q'}{2} + i\sqrt{\frac{(Q-Q')^2}{4b^2} + \frac{1}{(a^2+b^2)^2}}.
\end{equation*}
As the variance of $Q$ is positive, $\EE[(Q-Q')^2]>0$. Hence there exists $\epsilon>0$ and $1\geq p>0$ such that  
$\PP[(Q-Q')^2\geq \epsilon^2]=p$. This implies
$$
\log(\Im g(a,b))\,\geq\, \frac{(1-p)}{2}\, \log\left(\frac{1}{(a^2+b^2)^2}\right) + \frac{p}{2} \log 
\left(\frac {\epsilon^2}{4b^2} + \frac{1}{(a^2+b^2)^2}\right)\;.
$$
If $a^2+b^2\le 1$, the right hand side is clearly bigger than $0$. Hence the region where the right hand side is $>0$ includes an open neighborhood of the unit circle. Moreover, for any $a$, letting $b\to 0$, the right hand side approaches infinity. In case $a\neq 0$ one can choose $b < \min\big\{|a|,\frac12 \epsilon (a^2)^{\frac{p-1}{p}} \big\} $ to get $\log(\Im g(a,b)) > 0 $. Hence, there is a continuous function $b_1$ defined on $[-a_0,a_0]$ such that
$$
\Im(g(a,b_1(a)))>1,\; \quad a^2+b_1(a)^2\geq 1.
$$
One clearly may assume $b_1(a)<b_0$. 

Let $\Gamma$ be the closed path in $\HH$ given by  $(a,b_1(a)),\, -a_0\leq a \leq a_0$, followed by $(a_0,b),\, b_1(a_0)\leq b \leq b_0$ followed by $(a,b_0),\, a_0 \geq a \geq -a_0$ and followed by $(-a_0,b),\, b_0 \geq b \geq b_1(-a_0)$.
Then, using \eqref{111}, $g(\Gamma)$ is a  path enclosing $i$. As $\Gamma$ is null-homotopic, its image is null-homotopic in $g(\HH)$, hence
there exists $(a,b)$ inside $\Gamma$ such that $g(a,b)=i$. As $(a,b)$ lies inside $\Gamma$, we have $a^2+b^2>1$.
Setting $\zeta=a+ib$, this finishes the proof.
\end{proof}

\begin{proof}[Proof of Theorem \ref{keyestthm}:]
We need to establish \eqref{eq-est-f}. Let $X$ and $X'$ be independent and identically distributed as $u+\alpha(\zeta,Q)$, where $u\in\RR$ and $\zeta$ is chosen according to Lemma \ref{findzeta}. Clearly, $\EE_\HH^{(2)}[X]=u+i$ since $z\mapsto u + z$ is an isometry on $\HH$ for $u\in\RR$. By convexity of $f$ and Lemma \ref{lem-est1}
\begin{align*}
f(u+i) & = f\left(\EE_\HH\left[ \frac12 X \oplus \frac12 X' \right]\right)\,\leq\,
\EE f \left( \frac12 X\oplus \frac12 X' \right) \notag \\
& < \EE \left[\frac12 f(X) + \frac12 f(X') \right]
= \EE f(u+\alpha(\zeta,Q)),
\end{align*}
which is the desired inequality. 
As the essential support of $Q$ contains more than two points, one has with positive
probability that $X$ and $X'$ are not on a geodesic with $0$ as a limit point giving the strict inequality. In order to obtain the uniform statement in \eqref{eq-est-f} we compute $\lim_{|u|\rightarrow\infty} \EE[f(u+\alpha)]-f(u+i) = \log(|\zeta|^2) >0$ as before. 
\end{proof}

\section{Proof of Theorem \ref{mainthm} and Theorem \ref{largekappathm}}\label{sectionproof2}

\begin{proof}[Proof of Theorem \ref{mainthm}:]
We will use the notation $T_{\kappa,z,s}$ to indicate the dependence of the operator $T$ on its parameters. Given $E_0\in\RR$ we use Lemma \ref{onedimlemma} to find a bounded open interval $I$ containing $E_0$, $s\in(0,1/2)$, $m\in\NN$ and $\delta<1$ so that
\begin{equation*}
\sup_{E\in I} \|T_{0,E,s}^m 1\|_{\infty,\overline\RR} < 1.
\end{equation*}
Here the second subscript on the sup norm indicates the set over which the sup is taken. Using the continuity in $\kappa$ given by Lemma~\ref{strong continuity of T to m in kappa} we can find $\kappa_0$ such that for $0\le\kappa\le\kappa_0$,
\begin{equation*}
\sup_{E\in I} \|T_{\kappa,E,s}^m 1\|_{\infty,\overline\RR} < 1.
\end{equation*}
Next we invoke Lemma \ref{prelim properties of T}\eqref{4} to replace the sup over $\overline\RR$ with a sup over $\overline{\CC^+}$. This gives, for $0\le\kappa\le\kappa_0$,
\begin{equation} \label{Tmest0}
\sup_{E\in I} \|T_{\kappa,E,s}^m 1\|_{\infty,\overline{\CC^+}} < 1.
\end{equation}
Lemma~\ref{prelim properties of T}\eqref{2b} (applied to the one-element family $\{f_{\alpha}\} = \{1\}$) now implies the existence of an $\epsilon_0>0$ such that
\begin{equation}\label{Tmest}
\sup_{0\le\epsilon<\epsilon_0} \sup_{E\in I}\|T_{\kappa,E+i\epsilon,s}^m 1\|_{\infty,\overline{\CC^+}} =:\delta < 1
\end{equation}
for $0\le \kappa \le \kappa_0$.

Given $n\in\NN$ we write $n= am + b$ for non-negative integers $a,b$ with $b<m$ and $a>n/m-1$. The linearity of $T_{\kappa,E,s}$ together with the monotonicity property Lemma \ref{prelim properties of T}\eqref{3} and \eqref{Tmest} imply
\begin{equation*}
\sup_{0\le\epsilon<\epsilon_0} \sup_{E\in I}\|T_{\kappa,E+i\epsilon,s}^{am} 1\|_{\infty,\overline{\CC^+}} \le \delta^a \le \delta^{n/m-1}.
\end{equation*}
So
\begin{align*}
\sup_{0\le\epsilon<\epsilon_0} \sup_{E\in I}\|T_{\kappa,E+i\epsilon,s}^{n} 1\|_{\infty,\overline{\CC^+}} 
&\le \sup_{0\le\epsilon<\epsilon_0} \sup_{E\in I}\|T_{\kappa,E+i\epsilon,s}\|^b\delta^{n/m-1}\\
&\le C \delta^{n/m},
\end{align*}
where we used the uniform bound on $\|T_{\kappa,E+i\epsilon,s}\|$ given by Lemma \ref{prelim properties of T}\eqref{1}. The theorem now follows from \eqref{dynamical system bound} with $\ell=\delta^{-1/m}$.
\end{proof}

We now turn to the {\it proof of Theorem \ref{largekappathm}}. Let $\sigma_0$ and $\sigma_1$ be the densities of the marginal measures of $\sigma$.

\begin{lemma}[Large coupling] \label{large coupling} Under Assumptions \ref{assumptions2}, for every $s\in (0,1)$ there exists a $\kappa_0<\infty$, depending on $\|\nu\|_\infty$, $\|\sigma_{0}\|_\infty$, $\|\sigma_1\|_\infty$ and $s$, such that for $\kappa\ge\kappa_0$,
\begin{equation} \label{largedisorderbound}
\sup_{E\in\RR} \|T_{\kappa, E,s}\|_\infty <1.
\end{equation}
\end{lemma}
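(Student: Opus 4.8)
\textbf{Proof strategy for Lemma \ref{large coupling}.}

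The plan is to estimate the operator norm $\|T_{\kappa,E,s}\|_\infty$ directly from its definition \eqref{def of T} by bounding $(T_{\kappa,E,s}f)(w)$ for $\|f\|_\infty \le 1$ uniformly in $w \in \overline{\CC^+}$ and $E \in \RR$. Since $|f| \le 1$ and $|\phi^{\pm}_{z,q}(w)| = \tfrac{1}{2}\big| \tfrac{-1}{w+(z-q_0)/\sqrt 2} \pm \tfrac{-1}{w+(z-q_1)/\sqrt 2}\big|$, the triangle inequality gives $|\phi^+_{z,q}(w)|^s + |\phi^-_{z,q}(w)|^s \le 2\cdot 2^{-s}\big(|w+(E-q_0)/\sqrt2|^{-s} + |w+(E-q_1)/\sqrt2|^{-s}\big)$, using $(a+b)^s \le a^s + b^s$ for $s \in (0,1)$ and $a,b\ge 0$. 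Thus it suffices to bound, uniformly in $w$ and $E$, the two integrals $\iiint |w+(E-r-\kappa p_i)/\sqrt 2|^{-s}\, \nu(r)\,dr\,d\sigma(p_0,p_1)$ for $i = 0,1$.

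The key point is that for large $\kappa$ the density of the random variable $q_i = r + \kappa p_i$ becomes small: by the computation in \eqref{margdensities} its density is $\int \nu(q_i - \kappa p_i)\,d\sigma_i(p_i)$, and substituting $p_i \mapsto p_i$ one sees (after the change of variables $p_i \to \kappa p_i$ inside the $\sigma_i$-integral, using that $\sigma_i$ has bounded density) that this density is bounded by $\kappa^{-1}\|\nu\|_\infty \|\sigma_i\|_\infty \cdot |\mathrm{supp}\,\nu| $, or more cleanly by $\kappa^{-1}\|\nu\|_{\infty}\|\sigma_i\|_\infty$ times the length of the support — in any case by a constant times $\kappa^{-1}$. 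Writing $m_i$ for the density of $q_i/\sqrt 2$ (which is then bounded by $C_0/\kappa$ with $C_0$ depending only on $\|\nu\|_\infty,\|\sigma_i\|_\infty$ and $K$) and noting $q_i/\sqrt 2$ has support of length at most $L = (2K+2\kappa)/\sqrt2$, we get
\begin{equation*}
\int \frac{m_i(q)\,dq}{|w + E/\sqrt2 - q|^s} \le \|m_i\|_\infty \sup_{v\in\RR}\int_{|v-q|\le L} \frac{dq}{|v-q|^s} = \|m_i\|_\infty \cdot \frac{2 L^{1-s}}{1-s},
\end{equation*}
where we have again used $|w+E/\sqrt2 - q| \ge |\Re w + E/\sqrt 2 - q|$ to reduce to the real line as in \eqref{initial bound}.

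The main obstacle is the bookkeeping: the support length $L$ itself grows like $\kappa$, so naively $\|m_i\|_\infty L^{1-s}$ behaves like $\kappa^{-1}\cdot\kappa^{1-s} = \kappa^{-s} \to 0$, which still works but one must be careful that the constant $C_0$ in $\|m_i\|_\infty \le C_0/\kappa$ does not secretly depend on $\kappa$ — it does not, because $\sup_q \int \nu(q - \kappa p_i)\,d\sigma_i(p_i) \le \|\sigma_i\|_\infty \int \nu(q-\kappa p_i)\,dp_i = \kappa^{-1}\|\sigma_i\|_\infty \|\nu\|_{L^1} = \kappa^{-1}\|\sigma_i\|_\infty$. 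Collecting the factors, $\|T_{\kappa,E,s}\|_\infty \le 2\cdot 2^{-s}\cdot 2 \cdot \big(\|m_0\|_\infty + \|m_1\|_\infty\big)\frac{L^{1-s}}{1-s} \le C(\|\nu\|_\infty,\|\sigma_0\|_\infty,\|\sigma_1\|_\infty,K,s)\,\kappa^{-s}$, which is $< 1$ once $\kappa \ge \kappa_1$ for a suitable $\kappa_1$ depending only on the stated quantities. Taking the supremum over $E \in \RR$ is free since every bound above was uniform in $E$. This proves \eqref{largedisorderbound}, and Theorem \ref{largekappathm} then follows from Proposition \ref{prop:reduction to FM} applied to forward trees together with Lemmas \ref{pp at root} and \ref{pp induction} exactly as in the proof of Theorem \ref{smallkappathm}.
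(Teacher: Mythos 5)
Your proof is correct and takes essentially the same route as the paper: both reduce to bounding $\EE[|w-q_i|^{-s}]$ uniformly in $w$, and extract the $\kappa^{-s}$ decay from the $\kappa$-scaling in $q_i = r + \kappa p_i$. The only difference is bookkeeping — the paper pulls out $\|\nu\|_\infty\|\sigma_i\|_\infty$ and does the $p_i$-integral first, whereas you bound the marginal density $m_i$ in sup norm by $\kappa^{-1}\|\sigma_i\|_\infty$ and combine with the $O(\kappa)$ support length — but these are equivalent and give the same $O(\kappa^{-s})$ conclusion.
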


\begin{proof}  Following the proof of Lemma~\ref{prelim properties of T}(\ref{1}) (see Section~\ref{appendix} below), we need bounds for $\EE[|w-q_i|^{-s}|]$ for $i=0,1$, uniform in $w$. By the definition of the marginals of $\sigma$, such bounds are
\begin{eqnarray*} \iint \frac{\nu(r) \sigma_i(p)\,dr dp_i}{|w-r-\kappa p_i|^s}
&\le& \|\nu\|_\infty \, \|\sigma_i\|_\infty\, \kappa^{-s} \int_{-K}^K dr\,  \int_{-1}^1 \frac{dp_i}{\big|(w-r)/\kappa- p_i|^s}
\\
&\le& \|\nu\|_\infty \, \|\sigma_i\|_\infty\, \kappa^{-s}\,\frac2{1-s} \int_{-K}^K dr
\\
&\le& 4K\,\frac{\|\nu\|_\infty \, \|\sigma_i\|_\infty}{(1-s)} \, \kappa^{-s}.
\end{eqnarray*}
By choosing $\kappa$ large enough this leads to (\ref{largedisorderbound}). 
\end{proof}

Given this result we can follow the steps above, except with $m=1$, to obtain the required fractional moment estimate as in Theorem~\ref{mainthm}, now in the large coupling regime, and thus, as before, conclude pure point spectrum to prove Theorem~\ref{largekappathm}.

\section{Proof of Lemma~\ref{prelim properties of T} and Lemma~\ref{strong continuity of T to m in kappa}} \label{appendix}

In this final section we provide the proofs of the technical Lemmas~\ref{prelim properties of T} and \ref{strong continuity of T to m in kappa}, which have been used in the proof of Theorem~\ref{mainthm} above. 

\begin{proof}[Proof of Lemma~\ref{prelim properties of T}:] \eqref{1} The argument that lead to the bound (\ref{initial bound}) of $\EE[|g_0|^s]$ can be employed again to show well-posedness and boundedness of the operator $T_{\kappa,z,s}$, uniformly in $\kappa$ and $z$. We have
\begin{eqnarray} \label{double bound}
\lefteqn{\sup_{w\in \overline{\CC^+}}|(T_{\kappa,z,s}f)(w)|}\notag
\\ 
&\le& 2 \|f\|_\infty \sup_{w\in \overline{\CC^+}} \EE\Big[\Big|\frac{-1}{2w + \sqrt{2}(z-q_0)}\Big|^s + \Big|\frac{-1}{2w + \sqrt{2}(z-q_1)}\Big|^s\Big]\notag
\\
&\le&2^{1-s/2} \|f\|_\infty \sup_{w\in \overline{\CC^+}} \EE \Big[\frac1{|(\sqrt{2}\Re(w)+\Re(z)-q_0|^{s}} + \frac1{|(\sqrt{2}\Re(w)+\Re(z)-q_1|^{s}}\Big]\notag
\\
&\le&2^{1-s/2} \|f\|_\infty \sup_{w \in \RR} \EE \Big[\frac1{|w-q_0|^{s}} + \frac1{|w-q_1|^{s}}\Big]
\end{eqnarray}
In order to bound this we use
\begin{eqnarray}
\EE \Big[ \frac{1}{|w-q_0|^s} \Big] & = & \int_{\RR^2} \int_{-K}^K \frac{1}{|w-r-\kappa p_0|^s} \nu(r)\,dr\,d\sigma(p_0,p_1) \notag
\\
& \le & \frac{2\|\nu\|_{\infty} K^{1-s}}{1-s},
\end{eqnarray}
where, after bounding the $r$-integral similar to (\ref{initial bound}), the $\sigma$-integral becomes trivial. The same bound holds for the other term in (\ref{double bound}), leading to (\ref{uniform operator norm bound}).

Thus we have proved that $\|T_{\kappa,z,s}f\|_\infty$ is finite and that $\|T_{\kappa,z,s}\|$ is bounded as in \eqref{uniform operator norm bound}. That $T_{\kappa,z,s}f$ is also a continuous function on $\overline{\CC^+}$ is equivalent to the map $z\mapsto T_{\kappa,z,s}f$ being continuous since $z$ and $w$ always appear in the combination $w+z/\sqrt{2}$. This continuity is shown next. Also $\lim_{|w|\to\infty} (T_{\kappa,z,s}f)(w)=0$, proving that $T_{\kappa,z,s}f \in C(\overline{\CC^+})$.

\eqref{2} Fix $f\in C(\overline{\CC^+})$. We prove the continuity of the map $z\mapsto T_{\kappa,z,s}f$. For fixed $z\in\CC^+\cup\RR$ and a sequence $z_n$ in $\CC^+\cup\RR$ with $|z-z_n|\to0$ as $n\to\infty$, we have (we abbreviate $\EE:=\EE_q$ and $\phi^{\pm}_z:=\phi^\pm_{z,q}$)
\begin{eqnarray}
\lefteqn{\sup_{w\in\overline{\CC^+}} |(T_{\kappa,z_n,s}f - T_{\kappa,z,s}f)(w)|}\notag\\
&=& \sup_{w\in \overline{\CC^+}} \left|\EE \Big[f(\phi^+_{z_n}(w))\big(|\phi^+_{z_n}(w)|^s + |\phi^-_{z_n}(w)|^s\big) - f(\phi^+_z(w))\big(|\phi^+_z(w)|^s + |\phi^-_z(w)|^s\big)\Big]\right|\notag
\\
&\le&\sup_{w\in \overline{\CC^+}} \left|\EE \Big[f(\phi^+_{z_n}(w))\big(|\phi^+_{z_n}(w)|^s - |\phi^+_z(w)|^s+ |\phi^-_{z_n}(w)|^s - |\phi^-_z(w)|^s\big)\right|\notag
\\
& &\mbox{} + \sup_{w\in \overline{\CC^+}} \left|\EE \Big[\big(f(\phi^+_{z_n}(w)) - f(\phi^+_z(w))\big)\big(|\phi^+_z(w)|^s + |\phi^-_z(w)|^s\big)\Big]\right|\notag
\\
&\le&\|f\|_\infty \sup_{w\in \overline{\CC^+}} \EE \Big[\big||\phi^+_{z_n}(w)|^s - |\phi^+_z(w)|^s\big| + \big||\phi^-_{z_n}(w)|^s - |\phi^-_z(w)|^s\big|\Big] \notag
\\
& & \mbox{} + \sup_{w\in \overline{\CC^+}}\EE\Big[\big|f(\phi^+_{z_n}(w)) - f(\phi^+_z(w))\big|\big(|\phi^+_z(w)|^s + |\phi^-_z(w)|^s\big)\Big]. \label{twotermsplit}
\end{eqnarray}
Writing 
$$ \phi^\pm_{z_n}(w) - \phi^\pm_z(w) = \frac{z_n-z}{\sqrt{2}} \left( \frac{1}{(\sqrt{2}w+z_n -q_0)(\sqrt{2}w+z-q_0)} \pm \frac{1}{(\sqrt{2}w+z_n -q_1)(\sqrt{2}w+z-q_1)} \right)
$$
we arrive at the estimate
\begin{eqnarray} \label{2sestimate}
\lefteqn{\EE\Big[\big||\phi^\pm_{z_n}(w)|^s - |\phi^\pm_z(w)|^s\big|\Big]} \notag
\\
&\le& 2^{-s/2} |z_n-z|^s \,\EE\Big[\frac1{|\sqrt{2}w+z_n-q_0|^{s}\, |\sqrt{2}w+z-q_0|^{s}} + \frac1{|\sqrt{2}w+z_n-q_1|^{s}\, |\sqrt{2}w+z-q_1|^{s}}\Big] \notag
\\
&\le& 2^{-s/2} |z_n-z|^s \,\EE\Big[\frac1{|\Re(\sqrt{2}w+z_n)-q_0|^{s}\,|\Re(\sqrt{2}w+z)-q_0|^{s}} \notag
\\
& & \mbox{} + \frac1{|\Re(\sqrt{2}w+z_n)-q_1|^s\,|\Re(\sqrt{2}w+z)-q_1|^{s}} \Big] \notag
\\
&\le& 2^{-s/2} |z_n-z|^s \,\EE\Big[\frac1{|\Re(\sqrt{2}w+z_n)-q_0|^{2s}} + \frac1{|\Re(\sqrt{2}w+z)-q_0|^{2s}} \Big] \notag
\\
&+& 2^{-s/2} |z_n-z|^s \,\EE\Big[\frac1{|\Re(\sqrt{2}w+z_n)-q_1|^{2s}} + \frac1{|\Re(\sqrt{2}w+z)-q_1|\big)^{2s}} \Big].
\end{eqnarray}
When taking the supremum over $w$ we can set $z_n=0$ and $z=0$, respectively. By using the bound that led to \eqref{initial bound} we see that 
\beq \label{1sttermbound}
\sup_{w\in \overline{\CC^+}} \EE\Big[\big||\phi^\pm_{z_n}(w)|^s - |\phi^\pm_z(w)|^s\big|\Big] \le \frac{8 \cdot 2^{-s/2} K^{1-2s} \|\nu\|_{\infty}}{1-2s}\, |z_n-z|^s
\ee
which vanishes as $n\to\infty$.

Now, we come to the second term,
$$
\EE\Big[\big|f(\phi^+_{z_n,q}(w)) - f(\phi^+_{z,q}(w))\big| \,|\phi^\pm_{z,q}(w)|^s \Big].
$$
We have reintroduced $q$ in the notation and we use
$$ 
|\phi^\pm_{z,q}(w)|^s \le 2^{-s/2} \left( \frac1{|\Re(\sqrt{2}w+z)-q_0|^s} + \frac1{|\Re(\sqrt{2}w+z)-q_1|^s} \right).
$$
The expectation of $q\mapsto f(\phi^+_{z_n,q}(w))\,|\phi^\pm_{z,q}(w)|^s$ is bounded uniformly in $n$ by the previous arguments, and $f(\phi^+_{z_n,q}(w))$ converges point-wise to $f(\phi^+_{z,q}(w))$ as $n\to\infty$, almost surely with respect to the measure $\mu$. An application of dominated convergence finishes the proof of (\ref{2}).

\eqref{2b} We proceed inductively in $m$. For $m=1$ we split $\|(T_{\kappa,E+i\epsilon,s} - T_{\kappa,E,s})f_{\alpha}\|_{\infty}$ into two terms analogous to (\ref{twotermsplit}). The first term can be treated as the the proof of part (\ref{2}), using uniform boundedness of $\|f_{\alpha}\|_\infty$ and that the analogue of the bound (\ref{1sttermbound}) has the required uniformity properties.

The second term consists of two contributions (one for each sign in $\phi^{\pm}$),
\beq \label{equicontterm}
\EE_q \left[ |f_{\alpha}(\phi_{E+i\epsilon,q}^+(w) - f_{\alpha}(\phi_{E,q}^+(w))| |\phi_{E,q}^{\pm}(w)|^s \right].
\ee
It is here where one has to go beyond the $\hbox{`soft'}$ dominated convergence argument used in the proof of part (\ref{2}). With $0<\beta < 1/2$ we split the $r$-integration within $\EE_q$ into the regions
\begin{eqnarray*}
A & = & \{r: |\sqrt{2}w+E-r-\kappa p_i| \ge \epsilon^{\beta} \:\mbox{for}\: i=0 \:\mbox{and} \: i=1\} 
\\
B & = & \RR \setminus A.
\end{eqnarray*}
In region $A$ we have $|\phi_{E+i\epsilon,q}^+(w) - \phi_{E,q}^+(w)| \le \sqrt{2} \epsilon^{1-2\beta}$. Using the equi-continuity of $\{f_{\alpha}\}$ we find that for every $\delta>0$ there exists $\epsilon_0>0$ such that $|f_{\alpha}(\phi_{E+i\epsilon,q}^+(w) - f_{\alpha}(\phi_{E,q}^+(w))| < \delta$ for $r\in A$ and $0\le \epsilon \le \epsilon_0$. Thus the corresponding contribution to (\ref{equicontterm}) satisfies
\[
\int_{\RR^2} \int_{A} \ldots \le C(K,s) \delta
\]
uniformly in $\kappa$, $E$ and $w$. Moreover, boundedness of $\{f_{\alpha}\}$ and that $B$ consists of two intervals of length $2\epsilon^{\beta}$ gives the bound
\[
\int_{\RR^2} \int_{B} \ldots \le C(s) \|\nu\|_{\infty} \epsilon^{\beta(1-s)},
\]
also uniformly in $\kappa$, $E$ and $w$. These bounds combine to prove (\ref{unicontbound}) for $m=1$.

To carry out the inductive step we assume that (\ref{unicontbound}) has been proved for all integers up to $m-1$ and write
\begin{eqnarray}
\lefteqn{\| (T_{\kappa,E+i\epsilon,s}^m - T_{\kappa,E,s}^m)f_{\alpha}\|_{\infty}} \notag
\\
& \le & \| T_{\kappa,E+i\epsilon,s}^{m-1} (T_{\kappa,E+i\epsilon,s} - T_{\kappa,E,s}) f_{\alpha} \|_{\infty} + \| (T_{\kappa,E+i\epsilon,s}^{m-1} - T_{\kappa,E,s}^{m-1}) T_{\kappa,E,s} f_{\alpha} \|_{\infty} \notag
\\
& \le & C^{m-1} \|(T_{\kappa,E+i\epsilon,s}- T_{\kappa,E,s}) f_{\alpha}\|_{\infty} + \| (T_{\kappa,E+i\epsilon,s}^{m-1} - T_{\kappa,E,s}^{m-1}) T_{\kappa,E,s} f_{\alpha} \|_{\infty}, \label{almostdone}
\end{eqnarray}
where Lemma~\ref{prelim properties of T}(i) was used. That the first term in (\ref{almostdone}) goes to zero uniformly in $\kappa$, $E$ and $\alpha$ is the case $m=1$. The family $\{T_{\kappa,E,s} f_{\alpha}\}_{\kappa,E,\alpha}$ is equi-continuous by Lemma~\ref{lemmaequicont} below. Thus the second term goes to zero uniformly in $\kappa$, $E$ and $\alpha$ by the inductive assumption.

\eqref{3} is clearly true.

\eqref{4} We use the fact that for any holomorphic function $\phi$, the function $w\mapsto |\phi(w)|^s$ (for any power $s\ge0$) is a subharmonic function on $\CC^+$. Recall that a function $f:\CC^+\to\RR\cup\{-\infty\}$ is called submean if the average of $f$ on any circle (inside $\CC^+$) is larger than (or equal to) the value of $f$ at the center of this circle. A function $f$ is called upper-semicontinuous if for any sequence $z_n$ converging to $z$ we have $\limsup f(z_n)\le f(z)$. A function is called subharmonic if it is both submean and upper-semicontinuous. Subharmonicity is preserved if we add or integrate subharmonic functions. The maps $w\mapsto \phi^\pm_{z,q}$ are both holomorphic. Hence, the integrand of $T_{\kappa,z,s} 1$ is a subharmonic function of $w$, and so is $w\mapsto (T_{\kappa,z,s} 1)(w)$. The same applies to the map $w\mapsto (T_{\kappa,z,s}^m 1)(w)$ if we expand the $m$th power as a multiple integral. 

An important property of subharmonic functions is that the supremum of a subharmonic function on a domain with compact closure is taken at the boundary of the domain, which in our case is $\RR\cup\{i\infty\}$. 
This implies \eqref{sup over R}.
\end{proof}

In the proof of Lemma~\ref{prelim properties of T}(\ref{2b}) above we have used the following

\begin{lemma} \label{lemmaequicont}
Let $\{f_{\alpha}: \alpha \in {\mathcal J}\}$ be an equi-continuous and bounded subset of $C(\overline{\CC^+})$ and $s\in(0,1/2)$. Then
\beq \label{equicontset}
\{ T_{\kappa, E, s}f_{\alpha}: \kappa \ge 0, E\in \RR, \alpha \in {\mathcal J} \}
\ee
is equi-continuous and bounded in $C(\overline{\CC^+})$.
\end{lemma}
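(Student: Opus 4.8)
The plan is to prove the two assertions of the lemma separately, with boundedness essentially free and equi-continuity carrying all the work. For boundedness I would simply invoke Lemma~\ref{prelim properties of T}\eqref{1}: since $\|T_{\kappa,E,s}\|_\infty$ is bounded uniformly in $\kappa\ge0$ and $E\in\RR$, the family $\{T_{\kappa,E,s}f_\alpha\}$ is bounded in $C(\overline{\CC^+})$ by $\big(\sup_{\kappa,E}\|T_{\kappa,E,s}\|_\infty\big)\sup_\alpha\|f_\alpha\|_\infty$.

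For equi-continuity on the compact space $\overline{\CC^+}$ I would reduce to two statements, both to be made uniform in $\kappa$, $E$ and $\alpha$: (a) $(T_{\kappa,E,s}f_\alpha)(w)$ is small for $|w|$ large, which handles the point $i\infty$ (where every member of the family vanishes), using $|\phi^\pm_{E,q}(w)|\le\tfrac1{\sqrt2}\big(|\sqrt2 w+E-q_0|^{-1}+|\sqrt2 w+E-q_1|^{-1}\big)$ together with the integrability estimates already used for Lemma~\ref{prelim properties of T}\eqref{1}; and (b) $|(T_{\kappa,E,s}f_\alpha)(w)-(T_{\kappa,E,s}f_\alpha)(w')|$ is small when $w,w'\in\CC^+\cup\RR$ are Euclidean-close. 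The core is (b). Following the splitting used in the proofs of parts \eqref{2} and \eqref{2b} of Lemma~\ref{prelim properties of T}, I would write the difference as $\mathrm{(I)}+\mathrm{(II)}$, with $\mathrm{(I)}=\EE_q\big[f_\alpha(\phi^+_{E,q}(w))\big(|\phi^+_{E,q}(w)|^s-|\phi^+_{E,q}(w')|^s+|\phi^-_{E,q}(w)|^s-|\phi^-_{E,q}(w')|^s\big)\big]$ and $\mathrm{(II)}=\EE_q\big[\big(f_\alpha(\phi^+_{E,q}(w))-f_\alpha(\phi^+_{E,q}(w'))\big)\big(|\phi^+_{E,q}(w')|^s+|\phi^-_{E,q}(w')|^s\big)\big]$.

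For $\mathrm{(I)}$ the bound is routine: using $|f_\alpha|\le M:=\sup_\alpha\|f_\alpha\|_\infty$, the inequality $\big||a|^s-|b|^s\big|\le|a-b|^s$, the identity $\phi^\pm_{E,q}(w)-\phi^\pm_{E,q}(w')=(w-w')\big((\sqrt2 w+E-q_0)^{-1}(\sqrt2 w'+E-q_0)^{-1}\pm(\sqrt2 w+E-q_1)^{-1}(\sqrt2 w'+E-q_1)^{-1}\big)$, then $(a+b)^s\le a^s+b^s$ and $|ab|^{-1}\le\tfrac12(|a|^{-2}+|b|^{-2})$, one reduces to bounding $\EE_q[|\Re(\sqrt2 w)+E-q_i|^{-2s}]$, which is finite uniformly in $w,E,\kappa$ precisely because $s<1/2$; this gives $|\mathrm{(I)}|\le C_1|w-w'|^s$ with $C_1$ depending only on $\|\nu\|_\infty$, $K$ and $s$.

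The hard part will be $\mathrm{(II)}$, because $w\mapsto\phi^\pm_{E,q}(w)$ has poles on the real axis where its derivative is unbounded, so one cannot simply use the continuity of $f_\alpha$. Here I would borrow the region decomposition from the proof of Lemma~\ref{prelim properties of T}\eqref{2b}: fix a cutoff $\rho>0$ and split the $r$-integration inside $\EE_q$ (recall $q_i=r+\kappa p_i$) into $A_\rho=\{r:|\sqrt2 w+E-q_i|\ge\rho$ and $|\sqrt2 w'+E-q_i|\ge\rho$, $i=0,1\}$ and its complement $B_\rho$. On $A_\rho$ one has $|\phi^+_{E,q}(w)-\phi^+_{E,q}(w')|\le 2|w-w'|/\rho^2$, so equi-continuity of $\{f_\alpha\}$ makes $|f_\alpha(\phi^+_{E,q}(w))-f_\alpha(\phi^+_{E,q}(w'))|$ uniformly small once $|w-w'|$ is small compared to $\rho^2$, while $\EE_q[|\phi^+_{E,q}(w')|^s+|\phi^-_{E,q}(w')|^s]\le C$ by Lemma~\ref{prelim properties of T}\eqref{1}. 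On $B_\rho$, which for fixed $w,w',E,p$ is a union of at most four $r$-intervals of total length $O(\rho)$, I would use $|f_\alpha(\cdot)-f_\alpha(\cdot)|\le 2M$ and estimate $\EE_q[\mathbf{1}_{B_\rho}(|\phi^+_{E,q}(w')|^s+|\phi^-_{E,q}(w')|^s)]$: on the portion where the $w'$-denominators stay $\ge\rho$ this is $O(\rho^{1-s})$, and on the portion where some $w'$-denominator drops below $\rho$ the bound $\EE_q[\mathbf{1}\cdot|\sqrt2 w'+E-q_i|^{-s}]=O(\rho^{1-s})$ holds directly from the bounded density of $\nu$. All constants here are independent of $\kappa$, $E$, $\alpha$, $w$, $w'$. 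Finally, given $\eta>0$, I would choose $\rho$ with $C_2\rho^{1-s}<\eta/3$ to kill the $B_\rho$-terms, then use equi-continuity of $\{f_\alpha\}$ to make the $A_\rho$-term in $\mathrm{(II)}$ less than $\eta/3$, and shrink $|w-w'|$ further to make $\mathrm{(I)}<\eta/3$; the resulting threshold $\delta$ depends only on $\eta$, $s$, $K$, $\|\nu\|_\infty$ and the modulus of equi-continuity of $\{f_\alpha\}$, which is exactly the uniformity demanded. The only genuinely delicate point is this balancing of $\rho$ against $|w-w'|$ in $\mathrm{(II)}$ while keeping every constant free of $\kappa$, $E$ and $\alpha$; the remaining ingredients are reruns of arguments already carried out for Lemma~\ref{prelim properties of T}.
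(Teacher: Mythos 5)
Your proposal is correct and follows essentially the same approach as the paper: the identical two-term decomposition, the same $s<1/2$-based bound for the first term, and the same region-splitting near the real poles for the second term combined with equi-continuity of $\{f_\alpha\}$ on the good region and smallness (of order $\rho^{1-s}$) of the bad region. The only cosmetic difference is that the paper chooses the cutoff as a power $|w-w'|^{\beta}$ with $0<\beta<1/2$, which unifies your two-stage choice (fix $\rho$, then shrink $|w-w'|$) into a single parameter and yields an explicit H\"older-type modulus $|w-w'|^{\beta(1-s)}$; your version proves exactly the same equi-continuity statement, just without the explicit modulus.
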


\begin{proof} We have
\begin{eqnarray}
(T_{\kappa,E,s} f_{\alpha})(w) - (T_{\kappa,E,s} f_{\alpha})(w') & = & \EE_q \Big[ f_{\alpha}(\phi_{E,q}^+(w)) (|\phi_{E,q}^+(w)|^s - |\phi_{E,q}^+(w')|^s) \Big] \label{1stterm}
\\
& & \mbox{} + \EE_q \Big[ \left(f_{\alpha}(\phi_{E,q}^+(w)) - f_{\alpha}(\phi_{E,q}^+(w')) \right) |\phi_{E,q}^+(w')|^s \Big] \label{2ndterm}
\\
& & \mbox + \;\;\mbox{similar terms with $|\phi^+|^s$ replaced by $|\phi^-|^s$}. \notag
\end{eqnarray}
We focus on the terms involving $|\phi^+|^s$, with the $|\phi^-|^s$-terms giving the same bounds. The absolute value of (\ref{1stterm}) is bounded by
\[
\|f_{\alpha}\|_{\infty} \left| \EE_q (|\phi_{E,q}^+(w)|^s - |\phi_{E,q}^+(w')|^s) \right|.
\]
By assumption, $\|f_{\alpha}\|_{\infty} \le C$. Moreover,
\begin{eqnarray*}
\lefteqn{\left| \EE_q (|\phi_{E,q}^+(w)|^s - |\phi_{E,q}^+(w')|^s) \right|} 
\\
& \le & C(s) \int_{\RR^2} \int_{-K}^K \left| \frac{1}{\sqrt{2}w + E-r-\kappa p_0} - \frac{1}{\sqrt{2}w' + E-r-\kappa p_0} \right|^s \,\nu(r)\,dr\,d\sigma(p_0,p_1)
\\
& & \mbox{} + \;\;\mbox{a similar term with $p_0$ replaced by $p_1$}.
\end{eqnarray*}
The appearing $r$-integral, using $s<1/2$ and the elementary bound $|\frac{1}{a} - \frac{1}{b}|^s \le |b-a|^s \left( \frac{1}{|a|^{2s}} + \frac{1}{|b|^{2s}} \right)$ (similar to what was done in (\ref{2sestimate}) above), can be seen to satisfy a bound of the form $C(K,s) |w-w'|^s$, thus making the $\sigma$-integration trivial. This shows the required equi-continuity of the set (\ref{equicontset}) for the contribution by the term (\ref{1stterm}).

The term (\ref{2ndterm}) is treated in a way similar to how we bounded (\ref{equicontterm}) above. Choose $0< \beta < 1/2$ and, for fixed $q = \kappa (p_0,p_1)$, split the $r$-integral within $\EE_q$ into two regions,
\begin{eqnarray*}
A' & = & \{r: |\sqrt{2}w+E-r-\kappa p_i| \ge |w-w'|^{\beta} \: \mbox{and} \: |\sqrt{2}w'+E-r-\kappa p_i| \ge |w-w'|^{\beta} 
\\
& & \mbox{for} \: i=0 \: \mbox{and} \: i=1\}, \\
B' & = & \RR \setminus A'.
\end{eqnarray*}
Correspondingly, we write
\beq
\EE_q \Big[ \left(f_{\alpha}(\phi_{E,q}^+(w)) - f_{\alpha}(\phi_{E,q}^+(w')) \right) |\phi_{E,q}^+(w')|^s \Big] = \int_{\RR^2} \int_{A'} \ldots + \int_{\RR^2} \int_{B' }\ldots
\ee
For $r\in A'$ one checks
\[
|\phi_{E,q}^+(w) - \phi_{E,q}^+(w')| \le 2|w'-w|^{1-2\beta}.
\]
Combined with the equi-continuity of $\{f_{\alpha}\}$ this means that for every $\delta>0$ there is $\delta'>0$ such that
\[
|w'-w| < \delta' \: \Longrightarrow \: |f_{\alpha}(\phi_{E,q}^+(w)) - f_{\alpha}(\phi_{E,q}^+(w'))| < \delta,
\]
which in turn gives
\beq \label{Abound}
\int_{\RR^2} \int_{A'} \ldots \le C(K,s) \|\nu\|_{\infty} \,\delta
\ee
for all $E$, $\kappa$ and $\alpha$.

Note that $B'$ is the union of four intervals of length $2|w'-w|^{\beta}$. From this one gets the bound
\begin{eqnarray} \label{Bbound}
\int_{\RR^2} \int_{B'} \ldots & \le & C(s) \|f_{\alpha}\|_{\infty} \|\nu\|_{\infty} |w-w'|^{\beta(1-s)} \notag \\
& \le & C'(s) \|\nu\|_{\infty} |w-w'|^{\beta(1-s)},
\end{eqnarray}
for all $E$, $\kappa$ and $\alpha$, having used boundedness of $\{f_{\alpha}\}$.

Combining (\ref{Abound}) and (\ref{Bbound}) gives the required equi-continuity property for the contribution by the term (\ref{2ndterm}) and thus completes the proof of Lemma~\ref{lemmaequicont}. 
\end{proof}

We finally turn to the 

\begin{proof}[Proof of Lemma~\ref{strong continuity of T to m in kappa}:] We proceed inductively and first show the claim for $m=1$, i.e.\ that
\beq \label{m1}
\lim_{\kappa\downarrow0} \sup_{E\in\RR} \big\|T_{\kappa,E,s}f - T_{0,E,s}f \big\|_\infty = 0.
\ee

Towards this, let us abbreviate $\phi^\pm_{E,r+\kappa p} := \phi^\pm_{E,(r+\kappa p_0,r+\kappa p_1)}, \phi^\pm_{E,r} := \phi^\pm_{E,(r,r)}$ and $\phi^\pm_{r+\kappa p} := \phi^\pm_{0,r+\kappa p}$. Then, using the triangle inequality we get,
\begin{eqnarray}
\lefteqn{\sup_{E\in\RR} \big\|T_{\kappa,E,s}f - T_{0,E,s}f \big\|_\infty}\notag
\\
&=&\sup_{E\in\RR} \sup_{w\in\overline{\RR}} \left|\iiint \Big[f(\phi^+_{E,r+\kappa p}(w))\big(|\phi^+_{E,r+\kappa p}(w)|^s + |\phi^-_{E,r+\kappa p}(w)|^s\big) \right. \notag
\\
&&\left.\phantom{\sup_{E\in\RR} \sup_{w\in\overline{\RR}} \iiint\;} - f(\phi^+_{E,r}(w)) |\phi^+_{E,r}(w)|^s\Big]\,\,\nu(r)dr\,d\sigma(p)\right|\notag
\\
&\le&\sup_{w\in\overline{\RR}} \iiint \big|f(\phi^+_{r+\kappa p}(w)) - f(\phi^+_{r}(w))\big|\,\big|\phi^+_{r}(w)|^s\,\nu(r)dr\,d\sigma(p) \label{first term}
\\
&+&\sup_{w\in\overline{\RR}} \iiint \big|f(\phi^+_{r+\kappa p}(w))\big|\,\big[|\phi^+_{r+\kappa p}(w)|^s - |\phi^+_{r}(w)|^s\big]\,\nu(r)dr\,d\sigma(p)\label{second term}
\\
&+&\sup_{w\in\overline{\RR}} \iiint \big|f(\phi^+_{r+\kappa p}(w))\big|\,|\phi^-_{r+\kappa p}(w)|^s\,\nu(r)dr\,d\sigma(p) .\label{third term}
\end{eqnarray}
Since the sup over $E$ is gone once we have taken the sup over $w$ we have set $E$ to $0$. We show now that each of the three terms goes to $0$ as $\kappa\downarrow0$. 

The first term \eqref{first term} is the most complicated one. We have the bound
\beq \label{fristtermbound}
|\phi_{r+\kappa p}^+(w) - \phi_r^+(w)| \le \frac{\kappa}{\sqrt{2}|\sqrt{2}w-r|} \left( \frac{1}{|\sqrt{2}w-r-\kappa p_0|} + \frac{1}{|\sqrt{2} w-r-\kappa p_1|} \right).
\ee
We consider first the region where $|\sqrt{2}w-r|>\kappa^a$ with $0<a<1/2$. Since $|p_i|\le1$ by Assumption~\ref{assumptions1} we have $|\sqrt{2}w-r-\kappa p_i|\ge |\sqrt{2}w-r|-\kappa \ge \kappa^a-\kappa\ge \frac12 \kappa^a$ for small $\kappa$. Therefore,
\begin{equation*}
\sup_{w\,:\,|\sqrt{2}w-r|>\kappa^a}\big|\phi^+_{r+\kappa p}(w) - \phi^+_{r}(w)\big| \;\le\;2\sqrt{2}\kappa^{1-2a}. 
\end{equation*}
Applying this to the (uniformly) continuous function $f$ we get
\begin{equation*}
\sup_{w\,:\,|\sqrt{2}w-r|>\kappa^a}\big|f(\phi^+_{r+\kappa p}(w)) - f(\phi^+_{r}(w))\big| \;\le\; \sup_{x,y\,:\,|x-y|\le 2\sqrt{2}\kappa^{1-2a}} |f(x)-f(y)| \;=\; o_\kappa(1),
\end{equation*}
where $o_\kappa(1)$ means that the term goes to $0$ as $\kappa\to0$. 
In the region $|\sqrt{2}w-r|\le \kappa^a$ we use the integrability of $r\mapsto|\phi^+_r(w)|^s \nu(r)$. Altogether, we have the bound
\begin{eqnarray*}
\lefteqn{\iiint \big|f(\phi^+_{r+\kappa p}(w)) - f(\phi^+_{r}(w))\big|\,\big|\phi^+_{r}(w)|^s\big|\,\nu(r)dr\,d\sigma(p)}
\\
&\le&o_\kappa(1) \int_{|\sqrt{2}w-r|>\kappa^a} |\phi^+_r(w)|^s \,\nu(r)dr + 2\|f\|_\infty \int_{|\sqrt{2}w-r|\le\kappa^a} |\phi^+_r(w)|^s \,\nu(r)dr
\\
&\le&o_\kappa(1)\, \|\nu\|_\infty \int_{-K}^K \frac{dr}{|r|^s} + 2\|f\|_\infty \|\nu\|_{\infty} \int_{|r|\le \kappa^a} \frac{dr}{|r|^s}
\\
&=&o_\kappa(1) \,\frac{2\|\nu\|_\infty}{1-s}\, K^{1-s} + \kappa^{a(1-s)}\, \frac{4\|f\|_\infty \|\nu\|_{\infty}\|}{1-s}.
\end{eqnarray*}

In the second term \eqref{second term}, we get from (\ref{fristtermbound}),
\begin{eqnarray*} 
\lefteqn{\Big||\phi^+_{r+\kappa p}(w)|^s - |\phi^+_{r}(w)|^s\Big| \: \le \: \big| \phi_{r+\kappa p}^+(w) - \phi_r^+(w) \big|^s}
\\
&\le& 2^{-s/2} \kappa^s \frac{1}{|\sqrt{2}w-r|^s}\Big(\frac1{|\sqrt{2}w-r-\kappa p_0|^s} + \frac1{|\sqrt{2}w-r-\kappa p_1|^s}\Big)
\\
&\le&2^{-s/2} \kappa^s\Big(\frac{1}{|\sqrt{2}w-r|^{2s}} + \frac1{\sqrt{2}|w-r-\kappa p_0|^{2s}} + \frac1{|\sqrt{2}w-r-\kappa p_1|^{2s}}\Big).
\end{eqnarray*}
With this estimate we perform the $r$-integral in (\ref{second term}) to obtain the upper bound 
\beq \label{2nd term bound}
\frac{6\cdot 2^{-s/2}}{1-2s} \kappa^s K^{1-2s} \|\nu\|_{\infty} \|f\|_{\infty}.
\ee

The third term \eqref{third term} is analogous to the previous one, so that
\begin{eqnarray*}
\lefteqn{|\phi^-_{r+\kappa p}(w)|^s = 2^{-s/2} \Big|\frac1{\sqrt{2}w-r-\kappa p_0} - \frac1{\sqrt{2}w-r-\kappa p_1}\Big|^s}
\\
&\le&2^{-s/2} \kappa^s \,|p_0-p_1|^s\,\Big(\frac1{|\sqrt{2}w-r-\kappa p_0|^{2s}} + \frac1{|\sqrt{2}w-r-\kappa p_1|^{2s}}\Big).
\end{eqnarray*}
Then integrate this using $|p_0-p_1|\le 2$ and get a bound similar to (\ref{2nd term bound}). This finishes the proof of (\ref{m1}).

For $m\ge2$ we write, leaving out the indices $E$ and $s$,
$$ 
T_\kappa^{m} - T_0^{m} = T_\kappa^{m-1} (T_\kappa - T_0) + ( T_\kappa^{m-1} - T_0^{m-1}) T_0
$$
Since $\| T_\kappa^{m-1}\|_\infty\le \|T_\kappa\|_\infty^{m-1}$ is bounded uniformly in $\kappa$ by (\ref{uniform operator norm bound}),
\begin{eqnarray*}
\| T_\kappa^{m}f - T_0^{m} f\|_\infty &\le& \|T_\kappa^{m-1}\|_\infty \, \|T_\kappa f- T_0f\|_\infty +  \|(T_\kappa^{m-1} - T_0^{m-1}) (T_0 f)\|_\infty
\end{eqnarray*}
goes to $0$ as $\kappa\to0$ by assuming the induction hypothesis for the functions $f$ and $T_0f$.

\end{proof}

\end{document}